\documentclass[11pt]{article}
\usepackage[T1]{fontenc}
\usepackage[utf8]{inputenc}
\usepackage{lmodern,microtype}
\usepackage{geometry,setspace,graphicx,xcolor}
\usepackage{amsmath,amssymb,amsthm,mathtools}
\usepackage{booktabs,array,threeparttable,caption,subcaption,float}
\usepackage{natbib,enumerate,comment,etoolbox}
\usepackage{hyperref}
\newtheorem{lemma}{Lemma}

\newtheorem{corollary}{Corollary}
\newtheorem{proposition}{Proposition}

\theoremstyle{definition}

\newtheorem{remark}{Remark}

\newtheorem{example}{Example}
\newtheorem*{example*}{Example}

\newtheorem*{fact*}{Fact}

\includecomment{proofs}

\hypersetup{colorlinks,citecolor=blue} 
\title{Demand Curvature and Pass-Through in Multiproduct Oligopoly\footnote{I thank Jinhyuk Lee, Aviv Nevo, Devesh Raval, and the participants of the 2026 KASIO Summer Conference and Korean Econometric Society Summer Conference for helpful comments. All errors are mine.}}
\author{Paul S. Koh\footnote{School of Economics, Yonsei University. Email: \texttt{paulkoh9@gmail.com}.} }
\date{September 13, 2026}

\begin{document}

\maketitle
\begin{abstract}
Economic interventions change firms' pricing incentives, but their effects depend on how those incentives propagate across products and firms. This paper develops tractable characterizations of that propagation under multiproduct Bertrand competition. A decomposition of the pricing system isolates demand curvature, substitution, and ownership, linking the equilibrium response matrix to commonly used empirical demand models. It yields approximations with explicit error bounds and separates individual adjustment from equilibrium feedback. Small-share limits reveal when interactions disappear and when substitution within nests or selection among heterogeneous consumers preserves them. For nested logit, a closed-form response reduces the limiting product-level system to averages within firm--nest groups, exposing the direction of price spillovers. The same framework organizes local responses to changes in costs, demand, and ownership. The results clarify which features of demand support simple incidence predictions and which interactions those predictions must retain.

\vspace{1em}
\noindent \textbf{Keywords}: Pass-through, demand curvature, multiproduct oligopoly, differentiated products.
\end{abstract}

\clearpage
\section{Introduction}

Economic interventions change the incentives to set prices. Their consequences depend on how those incentives propagate through a market: a firm may reprice other products in its portfolio, competitors may respond, and consumers may substitute toward products that the intervention does not directly affect. This propagation is common to otherwise different questions about market outcomes and welfare. Pass-through provides a way to study it by separating a change in pricing incentives from the equilibrium response to that change \citep{weyl2013pass}. The challenge in multiproduct markets is to make the resulting response matrix tractable without discarding the interactions that matter for the question at hand.

Demand elasticity alone does not answer this question. In simulated cereal data, \citet{miravete2023pass} obtain similar average elasticities and relative markups across estimated demand specifications, but substantially different pass-through predictions. A firm's response to a cost change depends on how a price adjustment changes its marginal incentive to raise price further. For a monopolist selling one product, demand curvature governs this adjustment \citep{bulow1983note}. With differentiated products, changing one price also changes the sensitivity of demand to other prices. These mixed derivatives affect the value of sales diverted within a portfolio and the incentives of competitors to reprice. Ownership determines which effects a firm internalizes; equilibrium combines the resulting adjustments. Thus, even detailed knowledge of substitution at observed prices leaves a substantive role for assumptions about how substitution changes as prices move.

This paper develops a tractable characterization of local price responses under multiproduct Bertrand competition. The characterization links the equilibrium response matrix to own-price curvature, changes in substitution, and product ownership. It supplies two ways to simplify that matrix: a sequence of approximations with explicit error control, and analytical limits tied to widely used demand specifications. The objective is to make first-order analysis easier to implement and interpret in these models. A firm's response with rival prices fixed and its response after rivals adjust are distinct objects; likewise, small unconditional shares and weak substitution among the remaining purchases are distinct restrictions. Keeping these distinctions explicit identifies which interactions a useful simplification must retain.

The starting point is a decomposition of the pricing system. Normalizing each first-order condition by its own-price demand slope separates the price derivative of that condition into a baseline own-price term, a demand-curvature and substitution term, and an ownership term. The ownership term records both the internalization of diverted sales and changes in diversion as prices adjust. Inverting this system gives the equilibrium pass-through matrix. The decomposition therefore traces how demand shape and product portfolios jointly transmit a pricing incentive, including its effects on products not directly affected by the intervention. In mixed logit, these derivatives reflect both individual choice responses and changes in the composition of buyers. Cost shocks provide the initial representation; a general intervention enters through a different change in the pricing conditions.

The distinction between individual adjustment and equilibrium feedback has a direct economic implication. A firm's profit curvature characterizes its local adjustment to a cost change when rival prices are fixed, but does not by itself determine the equilibrium response. A quadratic-utility duopoly illustrates the difference at a Nash equilibrium: fixed-rival own-cost pass-through is one-half, whereas a common cost increase passes through at four-fifths. Both firms choose globally optimal prices, and the common-cost response holds exactly over a finite range of cost changes. Across the example's demand family, baseline quantities, markups, and own-profit curvature remain unchanged while equilibrium pass-through varies with substitution. The example illustrates established strategic feedback; the decomposition identifies how that feedback enters the response matrix and its approximations.

The decomposition also yields a sequence of approximations that adds successive interactions to a diagonal baseline. A Neumann expansion supplies an exact remainder and an error bound under an explicit contraction condition. For common-coefficient multinomial logit, I establish convergence at every interior equilibrium, allowing multiproduct ownership. In a numerical exercise with 1,000 logit configurations with heterogeneous product characteristics and costs, the first interaction correction has a mean relative matrix error of $0.323\%$. This exercise measures approximation of the local response; separate finite-cost experiments assess the additional error from using a local derivative to predict a discrete change. The analytical error bounds give sufficient conditions under which the approximation is justified beyond this logit benchmark. A residual bound also evaluates a proposed response to a particular shock and, when sufficiently tight, certifies the sign of a local price or price-index effect.

Small-share limits reveal a different source of simplification and its limits. Shares may fall because the outside option becomes more attractive, prices rise, or demand approaches a boundary. These experiments need not leave the same consumers or substitution patterns in the market. Along the specified fixed-dimensional paths, logit and lognormal mixed logit yield identity unit-pass-through limits, whereas CES yields a diagonal limit above one. Cross-product effects can nevertheless survive when every product's unconditional share vanishes. In nested logit, substitution and ownership within a disappearing nest continue to determine the limiting response; gamma mixing can preserve interaction along a common high-price ray. Small shares alone therefore do not justify treating products as independent for incidence analysis. The economically relevant question is whether substitution among the remaining purchases disappears with them. In nested logit, the surviving response admits a closed form based on cost averages within firm--nest groups. It reveals nonnegative cross-firm and nonpositive within-firm cross-price responses in the limiting nest, while a common cost change passes through one for one. This reduction preserves economically relevant spillovers without requiring a product-level matrix inversion.

The applications show how the same response system can be used across economic questions. A general incentive formula covers smooth changes in costs, demand, and ownership. A demand intervention affects pricing through both its effect on sales and its effect on price sensitivities; observing the first effect alone is generally insufficient. A demand-translation example shows how changes in willingness to pay map into prices through the cost pass-through matrix. The remaining applications distinguish unit from percentage incidence, derive revenue-weighted consumer-surplus expressions under the small-share benchmarks, and connect ownership incentives to local merger predictions. These results link the general first-order approach to explicit calculations in empirical demand models, conditional on the demand derivatives and pricing behavior used to evaluate them.

\subsection{Related Literature}

The paper builds on the theory of incidence under imperfect competition. \citet{bulow1983note} establish the role of demand curvature in monopoly pass-through. \citet{weyl2013pass} relate pass-through to conduct and welfare, developing both symmetric and asymmetric analyses, with a matrix characterization in Appendix C. \citet{adachi2022pass} characterize welfare incidence for multiple policy instruments and heterogeneous firms. On a given one-dimensional demand curve, the Bulow--Pfleiderer index used here and Weyl and Fabinger's marginal-consumer-surplus elasticity encode the same curvature information. The contribution here concerns tractability within multiproduct Bertrand models: the decomposition makes the derivative requirements explicit, the approximation bounds control omitted interactions, and the model-specific limits identify when those interactions simplify. The general use of pass-through for interventions is the starting point of the analysis.

The retail literature studies curvature-dependent responses to manufacturer trade deals \citep{tyagi1999characterization} and own-brand and cross-brand pass-through \citep{besanko2005own}. \citet{moorthy2005general} derives a matrix characterization for competing multiproduct retailers and separates cost-induced substitution incentives from internal and external strategic complementarity. His explicit nested-logit solution distinguishes product-, retailer-, and marketwide shocks: a cost increase on one product at one retailer lowers the other product's price at that retailer and raises both prices at its competitor. These mechanisms are antecedents of the analysis here. The present decomposition identifies their demand-curvature and ownership components and studies approximations and limiting responses under specified demand systems.

\citet{alexandrov2017lechatelier} analyze the LeChatelier--Samuelson principle in games and conditions under which additional adjustments amplify responses. Pass-through also has an established role in merger analysis. \citet{froeb2005pass} connect the price effects of mergers to the pass-through of efficiencies; \citet{jaffe2013first} develop local price and welfare approximations using merger pass-through. \citet{miller2016pass, miller2017upward} evaluate related prediction methods. The present analysis exposes the interaction terms omitted by low-order approximations and connects them to demand shape and ownership in empirical models. The merger application retains Jaffe and Weyl's distinction between merger pass-through and ordinary premerger cost pass-through.

A closely related literature examines demand shape and counterfactual sensitivity. \citet{mrazova2017not} organize firm behavior through elasticity and convexity. \citet{miravete2023pass} connect demand specification to differentiated-product incidence, and \citet{miravete2026elasticity} explain how preference heterogeneity shapes curvature and develop a more flexible demand specification. Their demand-derivative calculations and selection mechanisms are antecedents of the mixed-logit interpretation below. The present analysis connects those derivatives to the multiproduct pricing Jacobian and gives conditions for assessing omitted interactions in a particular local response. It thereby helps interpret specification-sensitive incidence predictions without attributing their quantitative differences to a single channel. \citet{birchall2024estimating} use a Box--Cox transformation to relax restrictions on substitution and curvature, while \citet{compiani2022market} estimates multiproduct demand nonparametrically and examines sensitivity across tax and product-portfolio counterfactuals. These contributions motivate the focus here on the derivative combinations entering equilibrium pricing and the restrictions under which they simplify.

\citet{armstrong2018multiproduct} derive tractable multiproduct pricing and pass-through results for monopoly and Cournot competition. \citet{armstrong2023multiproduct} characterize feasible multiproduct-monopoly pass-through matrices and show how price responses can have counterintuitive signs. The present analysis studies asymmetric multiproduct Bertrand competition with constant marginal costs, developing approximations and demand-specific limits for its local response system.

\citet{nocke2018multiproduct, nocke2025aggregative} exploit aggregative structure to analyze multiproduct oligopoly and mergers, including logit and CES demand. Their nested extension allows firms that own entire nests and firms confined to part of a single nest. \citet{garrido2024aggregative} allows arbitrary overlap between firms and nests and develops a reduced pricing system using common markups within firm--nest groups. The nested calculation here studies a specified vanishing-share limit of this pricing structure and expresses the resulting local cost response through group averages, making its spillover signs explicit. The general local decomposition does not require aggregative demand.

\citet{sangani2026complete} documents complete pass-through in levels for common cost shocks and studies demand restrictions underlying pass-through in levels and percentages. This comparison is especially relevant to the small-share and percentage-incidence results. A common-shock response restricts a weighted sum of columns of the pass-through matrix; it need not determine the response to a shock affecting only some products. The present framework retains this distinction when deriving model-specific limits and applying them to incidence.

Finally, \citet{mackay2014bias} study bias in reduced-form pass-through estimates, and \citet{dearing2024learning} connect the relevance of instruments for conduct tests to the features of pass-through those instruments target, including their relevance for counterfactuals. Their work addresses what observed variation reveals about pricing behavior. The present results are conditional comparative statics: they specify which demand derivatives and equilibrium interactions a prediction requires. Identifying those objects from observed shocks is a separate empirical task.

Section~\ref{section:setup} introduces the model. Section~\ref{section:characterization.of.pass.through} characterizes pass-through and separates fixed-rival responses from equilibrium feedback. Sections~\ref{section:finite.series.approximation} and~\ref{section:small.share.approximation} develop the series approximation and small-share limits. Section~\ref{section:applications} presents the applications, and Section~\ref{section:conclusion} concludes. The appendices contain proofs, model-specific derivations, numerical illustrations, and supplementary directional bounds.

\section{Setup \label{section:setup}}

\subsection{Model}

Consider a Bertrand--Nash pricing game among multiproduct firms selling differentiated products. Let $\mathcal{J}=\{1,\dots,J\}$ denote the set of inside goods, and let $j=0$ denote the outside option. Firm $f$ produces the subset $\mathcal{J}_f \subseteq \mathcal{J}$ and earns profit
\begin{equation}\label{equation:firm.profit.function}
\Pi_f(\mathbf p)
=
\sum_{l \in \mathcal{J}_f} (p_l - c_l)\, q_l(\mathbf p),
\end{equation}
where $p_l$, $c_l$, and $q_l(\mathbf p)$ denote price, constant marginal cost, and demand for product $l$.

The analysis is local to an interior Bertrand--Nash equilibrium that admits an equilibrium continuation after sufficiently small cost changes. Demand is twice continuously differentiable in a neighborhood of this point, with $q_j>0$ and $\partial q_j/\partial p_j<0$. The full Jacobian of the pricing first-order conditions is nonsingular, so the implicit function theorem makes this continuation locally unique and differentiable among solutions to the pricing conditions. These assumptions do not require global uniqueness; first-order conditions alone need not be sufficient for equilibrium. The general characterization accommodates smooth demand with heterogeneous product tastes and price sensitivities; the distributional restrictions in the model-specific limits are additional assumptions. At the equilibrium, each firm's first-order conditions require, for each $j \in \mathcal{J}_f$,
\begin{equation}\label{equation:firm.foc}
\frac{\partial \Pi_f}{\partial p_j}
=
q_j(\mathbf p)
+
\sum_{l \in \mathcal{J}} \Omega_{jl}(p_l - c_l)\frac{\partial q_l(\mathbf p)}{\partial p_j}
=
0,
\end{equation}
where $\Omega$ is the symmetric $J\times J$ ownership matrix with $\Omega_{jl}=1$ if products $j$ and $l$ are owned by the same firm and $0$ otherwise. Stacking \eqref{equation:firm.foc} over $j \in \mathcal{J}_f$ yields
\begin{equation}\label{equation:firm.foc.stacked}
\nabla_{\mathbf p_f}\Pi_f(\mathbf p)
=
\mathbf q_f(\mathbf p)
+
\mathbf{J}_{\mathbf{q}_f}(\mathbf p)^\top (\mathbf p_f - \mathbf c_f)
=
\mathbf 0,
\end{equation}
where $\mathbf{q}_f(\mathbf{p})$ is the vector of demands for firm $f$'s products, and $\mathbf{J}_{\mathbf{q}_f}(\mathbf p)\equiv \frac{\partial \mathbf q_f(\mathbf p)}{\partial \mathbf p_f^\top}$ is the within-firm Jacobian with typical element $\mathbf{J}_{\mathbf{q}_f,kj}(\mathbf p)=\frac{\partial q_k(\mathbf p)}{\partial p_j}$.

Stacking \eqref{equation:firm.foc} across all firms gives the system
\begin{equation}\label{equation:firm.foc.stacked.across.firms}
\mathbf F(\mathbf p)
\equiv
\mathbf q(\mathbf p)
+
\Delta(\mathbf p) (\mathbf p - \mathbf c)
=
\mathbf 0,
\end{equation}
where $\Delta(\mathbf p) \equiv \Omega \circ \mathbf{J}_{\mathbf{q}}(\mathbf p)^\top$ ($\circ$ denotes the Hadamard product) and $\mathbf{J}_{\mathbf{q}}(\mathbf p)\equiv \frac{\partial \mathbf q(\mathbf p)}{\partial \mathbf p^\top}$. All pass-through calculations below evaluate this system at the equilibrium under study. When discussing a sequence of equilibria, marginal costs may vary along the sequence; an arbitrary path of prices is not itself an equilibrium path.

\subsection{Definition of the Pass-Through Matrix}

I now introduce the main object of interest, the pass-through matrix, which describes how per-unit taxes or cost shocks affect equilibrium prices. Recall that the firms' pricing equations are
\[
\mathbf{F}(\mathbf{p}) = \mathbf{0} \quad \Leftrightarrow \quad q_j + (p_j - c_j) \frac{\partial q_j}{\partial p_j} +\sum_{l \in \mathcal{J} \backslash \{j\}} \Omega_{jl}(p_l - c_l) \frac{\partial q_l}{\partial p_j} = 0, \;\; \forall j \in \mathcal{J}.
\]
To derive the pass-through matrix, it is convenient to rewrite the first-order conditions in a form that is quasilinear in marginal costs:
\begin{equation}\label{equation:foc.quasilinear.in.mc.stacked}
\mathbf f(\mathbf p)
 \equiv
-\bigl(\mathbf J_{\mathbf q}^{\mathrm{diag}}(\mathbf p)\bigr)^{-1}\mathbf F(\mathbf p)
=
\mathbf 0 \quad \Leftrightarrow \quad
- \frac{q_j}{\partial q_j / \partial p_j}
-(p_j-c_j)
+\sum_{l\in\mathcal J\setminus\{j\}}\Omega_{jl}(p_l-c_l)
D_{j \to l} = 0, \; \forall j \in \mathcal{J},
\end{equation}
where \(\mathbf J_{\mathbf q}^{\mathrm{diag}}(\mathbf p)\) is a \(J\times J\) diagonal matrix with \((j,j)\) entry \(\partial q_j(\mathbf p)/\partial p_j\), and $D_{j \to l} \equiv - \frac{\partial q_l / \partial p_j}{\partial q_j / \partial p_j}$ is the diversion ratio from product $j$ to $l$.

Now consider a vector of per-unit taxes \(\mathbf t=(t_1,\dots,t_J)\) so that product \(j\)'s marginal cost becomes \(c_j+t_j\). The post-tax equilibrium satisfies
\begin{equation}\label{equation:post.tax.foc}
\mathbf f(\mathbf p)+\Lambda(\mathbf{p}) \mathbf t=\mathbf 0,
\end{equation}
where $D_{j\to j}=-1$, so \(\mathbf D=[D_{j\to k}]_{j,k\in\mathcal J}\) includes this diagonal convention and
\(\Lambda(\mathbf{p})\equiv -\Omega\circ \mathbf D(\mathbf{p})\) is the ownership-internalized diversion matrix.\footnote{Equivalently, $\Lambda_{jl} = \Omega_{jl}\frac{\partial q_l/\partial p_j}{\partial q_j/\partial p_j} = -\Omega_{jl}D_{j\to l}$. If firms are single-product producers, then \(\Lambda=I\).} Differentiating \eqref{equation:post.tax.foc} with respect to \(\mathbf t^\top\) gives
\[
\frac{\partial \mathbf{f}(\mathbf{p})}{\partial \mathbf{p}^\top} \frac{\partial \mathbf{p}}{\partial \mathbf{t}^\top} + \Lambda(\mathbf{p}) + \sum_{m=1}^J \left( \frac{\partial \Lambda(\mathbf{p})}{\partial p_m} \mathbf{t} \right) \left( \frac{\partial p_m(\mathbf{t})}{\partial \mathbf{t}^\top} \right) = \mathbf{0}.
\]
Evaluating the expression at $\mathbf{t} = \mathbf{0}$ and rearranging (assuming that $\mathbf{J}_{\mathbf{f}}$ is invertible) yields the \emph{pass-through matrix}
\begin{equation}\label{equation:pass.through.matrix.definition}
\frac{\partial \mathbf p}{\partial \mathbf t^\top}
=
-
\mathbf{J}_{\mathbf{f}}^{-1}\Lambda,
\end{equation}
where $\mathbf{J}_{\mathbf{f}} \equiv \frac{\partial \mathbf f(\mathbf p)}{\partial \mathbf p^\top}$. Write this matrix as $\Psi$.\footnote{Up to row normalization and the choice of cost-shock columns, this is the comparative-static system $P=-H^{-1}C$ in \citet[equation (2)]{moorthy2005general}. His $H$ is the Jacobian of the unnormalized pricing conditions and his $C$ records their derivatives with respect to the specified cost changes. Normalization changes the representation, not the equilibrium response.} Its entry $\Psi_{jk}$ is the local equilibrium price response of product $j$ to a unit cost increase on product $k$, allowing all firms to reprice. Because $\mathbf f$ depends on first derivatives of demand, $\Psi$ generally requires second derivatives as well. Existing taxes can be included in baseline costs; the formula is then evaluated at zero \emph{incremental} tax. A finite policy change requires following the equilibrium branch or resolving the post-policy equilibrium.

\begin{remark}[The direction of the cost shock]
A local cost change $\mathbf t=\tau\mathbf z$ induces $d\mathbf p/d\tau=\Psi\mathbf z$. For a uniform per-unit shock, $\mathbf z=\mathbf 1$, so each price response is a row sum of $\Psi$. Complete pass-through of this common shock means $\Psi\mathbf 1=\mathbf 1$; this condition alone does not imply $\Psi=I$. For a proportional cost change $d\mathbf c=\mathbf c\,d\tau$ at positive prices and costs, the vector of percentage price responses is $\operatorname{diag}(\mathbf p)^{-1}\Psi\mathbf c$. Thus both the units of measurement and the direction of the shock matter when comparing pass-through restrictions across demand systems \citep{moorthy2005general,sangani2026complete}.
\end{remark}

\subsection{The Monopoly Curvature Benchmark}
For a single product with downward-sloping residual demand $q(p)$ and constant marginal cost, the first-order condition gives $p-c=-q/q'$. At this stationary point,
\[
\pi''(p)=2q'(p)+(p-c)q''(p)=q'(p)(2-\kappa(p)),
\]
where the Bulow--Pfleiderer index is
\begin{equation}\label{equation:single.product.soc.curvature.constraint}
\kappa(p):=\frac{q(p)q''(p)}{q'(p)^2}\leq 2.
\end{equation}
The inequality is the second-order necessary condition. Under the strict condition $\kappa<2$, differentiating the pricing equation gives\footnote{For the same downward-sloping demand curve, this is also the formula in \citet{weyl2013pass}. Let $p=P(q)$ be inverse demand and let $ms(q)=-qP'(q)$ denote marginal consumer surplus. Under their inverse-elasticity convention, $1/\epsilon_{ms}=q\,ms'(q)/ms(q)=1+qP''(q)/P'(q)=1-\kappa_{\mathrm{BP}}$, where $\kappa_{\mathrm{BP}}=q(p)q''(p)/q'(p)^2$. Thus $1/(1+1/\epsilon_{ms})=1/(2-\kappa_{\mathrm{BP}})$. In oligopoly, the common-price demand path used in a symmetric incidence calculation generally differs from an own-price demand curve holding other prices fixed. The scalar identity applies to the same curve; it does not equate curvature across these paths.}
\begin{equation}\label{equation:pass.though.in.single.product.monopoly}
\frac{\partial p}{\partial t}=\frac{1}{2-\kappa}.
\end{equation}

More convex demand makes the marginal profit response to a price increase less negative, increasing cost pass-through. In a multiproduct firm, the relevant profit curvature depends on the direction of joint price adjustment. In oligopoly, rival responses add a further source of amplification. The next section separates these two mechanisms.

\section{Characterization of Pass-Through \label{section:characterization.of.pass.through}}

\subsection{A Structural Decomposition of Pass-Through}
A product's pricing equation balances its own lost sales against margins and sales recaptured elsewhere in the firm's portfolio. Differentiating that equation separates two sources of change: the curvature of residual demand, and the way ownership makes diversion to other products valuable. The following decomposition keeps both sources visible before the full pricing system is inverted. Its purpose is to expose which parts of an empirical demand specification govern the response: fitting substitution at a price vector disciplines the slopes, while the mixed derivatives determine how pricing incentives change away from that vector.

Use $q_{j,k}=\partial q_j/\partial p_k$ and $q_{j,kl}=\partial^2q_j/\partial p_k\partial p_l$. Wherever the relevant first derivatives are nonzero, define
\begin{equation}\label{equation:demand.curvature.measure}
\kappa^j_{kl}:=\frac{q_jq_{j,kl}}{q_{j,k}q_{j,l}}.
\end{equation}
This index measures changes in a demand slope relative to the corresponding first-order effects. Its own-price case is the Bulow--Pfleiderer index.\footnote{When the denominators exist, $\kappa^j_{kl}=1+\partial_{kl}\log q_j/(\partial_k\log q_j\,\partial_l\log q_j)$. Directional curvature can then be written as $\kappa_j(v)=\sum_{k,l}\omega_{jkl}(v)\kappa^j_{kl}$, where $\omega_{jkl}(v)=(v_kq_{j,k})(v_lq_{j,l})/(\sum_m v_mq_{j,m})^2$. The weights sum to one but can be negative, so this is an affine combination, not necessarily an average with positive weights.} The matrix characterization itself does not require nonzero cross-price slopes.

\begin{lemma}[Decomposition of the FOC Jacobian]
\label{lemma:matrix.curvature.decomposition}
For twice continuously differentiable demand with $q_j>0$ and $q_{j,j}<0$, the normalized pricing Jacobian satisfies
\begin{equation}\label{equation:normalized.foc.jacobian.decomposition}
\mathbf J_{\mathbf f}=-2I+\mathbf K+\mathbf C,
\end{equation}
where
\begin{gather}
K_{jk}=\begin{cases}
q_jq_{j,jj}/q_{j,j}^2,&j=k,\\[2pt]
-q_{j,k}/q_{j,j}+q_jq_{j,jk}/q_{j,j}^2,&j\ne k,
\end{cases}\label{equation:K.matrix.elements}\\[4pt]
C_{jk}=\begin{cases}
\displaystyle\sum_{l\ne j}\Omega_{jl}(p_l-c_l)\frac{\partial D_{j\to l}}{\partial p_j},&j=k,\\[4pt]
\displaystyle\Omega_{jk}D_{j\to k}+\sum_{l\ne j}\Omega_{jl}(p_l-c_l)\frac{\partial D_{j\to l}}{\partial p_k},&j\ne k.
\end{cases}\label{equation:C.matrix.elements}
\end{gather}
When $q_{j,k}\ne0$, the off-diagonal entry also equals $K_{jk}=\delta_{jk}(1-\kappa^j_{kj})$, with $\delta_{jk}=-q_{j,k}/q_{j,j}$.
\end{lemma}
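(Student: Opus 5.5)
The plan is to differentiate the normalized pricing equation \eqref{equation:foc.quasilinear.in.mc.stacked} row by row and sort the resulting terms into the three groups in \eqref{equation:normalized.foc.jacobian.decomposition}. Write row $j$ as
\[
f_j(\mathbf p)=-\frac{q_j}{q_{j,j}}-(p_j-c_j)+\sum_{l\ne j}\Omega_{jl}(p_l-c_l)D_{j\to l}.
\]
This splits $f_j$ into three pieces: the inverse semi-elasticity piece $-q_j/q_{j,j}$, the own-margin piece $-(p_j-c_j)$, and the ownership piece. Everything needed is available because $q_j>0$ and $q_{j,j}<0$ hold in a neighborhood and demand is $C^2$. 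These conditions make each $f_j$ continuously differentiable and make $D_{j\to l}$ differentiable.

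First I will differentiate the semi-elasticity piece with respect to $p_k$. The quotient rule gives
\[
\frac{\partial}{\partial p_k}\Bigl(-\frac{q_j}{q_{j,j}}\Bigr)=-\frac{q_{j,k}}{q_{j,j}}+\frac{q_jq_{j,jk}}{q_{j,j}^2}.
\]
For $k=j$ this equals $-1+q_jq_{j,jj}/q_{j,j}^2$. For $k\ne j$ it is exactly the stated $K_{jk}$.

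Next, the own-margin piece contributes $-1$ on the diagonal and $0$ off the diagonal. Adding it to the diagonal term above gives $-2+K_{jj}$. This is the source of the $-2I$ baseline, and it shows that the diagonal of $\mathbf K$ is the Bulow--Pfleiderer index.

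Then I will differentiate the ownership sum by the product rule. The derivative of the margin $p_l-c_l$ with respect to $p_k$ is the indicator $\mathbf 1\{l=k\}$. Because $l\ne j$, this yields $\Omega_{jk}D_{j\to k}$ only when $k\ne j$. The derivatives of the diversion ratios give $\sum_{l\ne j}\Omega_{jl}(p_l-c_l)\,\partial D_{j\to l}/\partial p_k$ for every $k$. Together these match $C_{jk}$ in \eqref{equation:C.matrix.elements}, with the diagonal case containing only the diversion-derivative sum.

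Finally, I will verify the alternative form of the off-diagonal entries when $q_{j,k}\neq 0$. Equality of mixed partials for $C^2$ demand gives $q_{j,jk}=q_{j,kj}$. Factoring out $\delta_{jk}=-q_{j,k}/q_{j,j}$ then gives
\[
K_{jk}=\delta_{jk}\Bigl(1-\frac{q_jq_{j,kj}}{q_{j,k}q_{j,j}}\Bigr)=\delta_{jk}(1-\kappa^j_{kj}),
\]
which uses the definition \eqref{equation:demand.curvature.measure}.

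None of these steps is hard. The only place where care is needed is the bookkeeping in the ownership term. There I have to make sure that the margin-derivative contribution appears only off the diagonal, since $l\neq j$ is excluded from the sum. I also have to keep the paper's sign convention $D_{j\to j}=-1$ from entering the sum.
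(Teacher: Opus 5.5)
Your proposal is correct and follows the paper's proof essentially step for step: the quotient rule on $-q_j/q_{j,j}$, the extra $-1$ from the own markup, the product rule on the ownership sum (yielding the $\mathbf 1\{k\ne j\}\Omega_{jk}D_{j\to k}$ term plus the markup-weighted diversion derivatives), and substitution of $\delta_{jk}$ and $\kappa^j_{kj}$ for the alternative off-diagonal form. The only difference is that you explicitly invoke $q_{j,jk}=q_{j,kj}$ from $C^2$ smoothness, whereas the paper uses this equality of mixed partials without stating it.
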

\begin{proof}
See Appendix \ref{section:proof.of.lemma.2}.
\end{proof}
The baseline $-2I$ comes from differentiating own demand divided by its own slope and the own markup. The matrix $\mathbf K$ is the demand-side correction: its diagonal measures own-price curvature; its off-diagonal entries combine the effect of another price on demand with its effect on the own-demand slope. The expanded definition in \eqref{equation:K.matrix.elements} remains valid for independent demands and for cross-price slopes that vanish at the evaluation point.\footnote{The slope ratio $\delta_{jk}=-q_{j,k}/q_{j,j}$ and diversion ratio $D_{j\to k}=-q_{k,j}/q_{j,j}$ need not coincide. They coincide when cross-price derivatives are symmetric, as for differentiable quasilinear demand. No symmetry assumption is used in the decomposition.}

The matrix $\mathbf C$ records additional terms from multiproduct ownership. Its direct-diversion term reflects that changing another owned price changes the margin on recaptured sales. Its markup-weighted diversion derivatives reflect how the price change reallocates lost sales across the portfolio. Both are absent with single-product ownership, but $\mathbf K$ can still create strategic interaction across single-product firms. Thus a change in a portfolio can alter incidence even with demand held fixed, and a change in mixed curvature can alter incidence even with baseline substitution and ownership held fixed. These are the separate economic channels represented by the decomposition.

\paragraph{Preference heterogeneity and the pricing Jacobian.}
For mixed-logit demand, the derivative terms have an interpretation at finite prices. Let $s_{ij}(\mathbf p)$ be type $i$'s probability of buying $j$, with a fixed type distribution $F$ and market size $M$, so that $q_j=M\int s_{ij}\,dF$. Write $\mathbb E_j$ for expectation over types weighted by $s_{ij}/\int s_{ij}\,dF$. With differentiation under the integral justified and the displayed moments finite,
\begin{equation}\label{equation:mixture.composition}
\nabla_{\mathbf p}^2\log q_j
=\mathbb E_j[\nabla_{\mathbf p}^2\log s_{ij}]
+\operatorname{Cov}_j(\nabla_{\mathbf p}\log s_{ij}).
\end{equation}
The first term averages changes in individual choice sensitivities. The covariance term records how repricing changes the composition of buyers. Its diagonal is nonnegative, whereas its off-diagonal entries can have either sign. Both own and cross curvature therefore depend on the consumers purchasing the product, not just its aggregate share.

This is a reformulation of standard mixture differentiation; the own-price calculation corresponds to \citet[equation (14)]{miravete2023pass} and \citet[equation (16)]{miravete2026elasticity}. Here its role is to identify the components entering $\mathbf K$, which equilibrium pricing combines with ownership through $\mathbf C$ and $\Lambda$. Appendix~\ref{section:proof.of.lemma.2} derives the identity and its mixed-logit specializations. The calculation allows jointly distributed product tastes and price sensitivities; it requires neither small shares nor the scalar-mixing restrictions used later for the ray limits.

Together with $\Psi=-\mathbf J_{\mathbf f}^{-1}\Lambda$, the decomposition maps these primitives into equilibrium responses. A pricing direction that produces little change in the first-order conditions can generate a large price response if the cost shock loads on that direction. The next two subsections distinguish within-firm profit curvature from the feedback that couples firms' pricing decisions.

\subsection{Directional Profit Curvature}
The full pricing Jacobian describes equilibrium interactions. To interpret its within-firm blocks, fix rivals' prices and vary firm $f$'s prices along $\mathbf p_f+tv$. Write $\phi_{j,v}(t)=q_j(\mathbf p_f+tv,\mathbf p_{-f})$ and $\Phi_{f,v}(t)=\Pi_f(\mathbf p_f+tv,\mathbf p_{-f})$. Let $m_j=p_j-c_j$ denote an absolute markup. Define
\[
S_f(v):=-v^\top\overline{\mathbf J}_{\mathbf q_f}v,
\qquad M_j(v):=\frac{m_j}{q_j}\phi'_{j,v}(0)^2,
\qquad M_f(v):=\sum_{j\in\mathcal J_f}M_j(v),
\]
where $\overline{\mathbf J}_{\mathbf q_f}=(\mathbf J_{\mathbf q_f}+\mathbf J_{\mathbf q_f}^\top)/2$. The slope term $S_f$ captures the first-order loss of sales along the price direction. The weight $M_j$ equals product $j$'s variable profit times its squared directional semi-elasticity.

For any direction with $M_f(v)>0$, define
\[
R_f(v):=\frac{S_f(v)}{M_f(v)},
\qquad
\kappa_f(v):=\frac{\sum_{j\in\mathcal J_f}m_j\phi''_{j,v}(0)}{M_f(v)}.
\]
The second definition is valid even when a product has zero first-order demand response along $v$. If every $\phi'_{j,v}(0)$ is nonzero, it also has the weighted-curvature representation
\[
\kappa_f(v)=\sum_{j\in\mathcal J_f}\lambda_j(v)\kappa_j(v),
\qquad \lambda_j(v)=\frac{M_j(v)}{M_f(v)},
\qquad \kappa_j(v)=\frac{q_j\phi''_{j,v}(0)}{\phi'_{j,v}(0)^2}.
\]
Nonnegative markups make these weights nonnegative and sum to one. Defining the aggregate directly avoids undefined products of a zero weight and an undefined product-level curvature.

\begin{lemma}[Directional curvature and the second-order condition]
\label{lemma:directional.soc}
Suppose $q_j>0$ and $m_j\geq0$ for all $j\in\mathcal J_f$. For every direction $v$ with $M_f(v)>0$,
\begin{equation}\label{equation:directional.second.derivative.of.profit.function}
\Phi''_{f,v}(0)=M_f(v)\bigl(\kappa_f(v)-2R_f(v)\bigr).
\end{equation}
Consequently the directional second-order condition is equivalent to
\begin{equation}\label{equation:directional.curvature.bound}
\kappa_f(v)\leq2R_f(v).
\end{equation}
If all markups are positive and $\mathbf J_{\mathbf q_f}$ is nonsingular, then $M_f(v)>0$ for every nonzero $v$, and the profit Hessian is negative semidefinite if and only if this inequality holds for every such direction.
\end{lemma}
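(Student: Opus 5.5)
Differentiate the restricted profit $\Phi_{f,v}(t)=\sum_{j\in\mathcal J_f}(p_j+tv_j-c_j)\phi_{j,v}(t)$ twice in $t$. At $t=0$ this gives
\[
\Phi''_{f,v}(0)=2\sum_{j\in\mathcal J_f}v_j\phi'_{j,v}(0)+\sum_{j\in\mathcal J_f}m_j\phi''_{j,v}(0).
\]
Since $\phi'_{j,v}(0)=\sum_{k\in\mathcal J_f}q_{j,k}v_k$, the first sum equals $v^\top\mathbf J_{\mathbf q_f}v=v^\top\overline{\mathbf J}_{\mathbf q_f}v=-S_f(v)$. The second sum is $M_f(v)\kappa_f(v)$ by the definition of $\kappa_f$. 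Hence
\[
\Phi''_{f,v}(0)=M_f(v)\kappa_f(v)-2S_f(v)=M_f(v)\bigl(\kappa_f(v)-2R_f(v)\bigr),
\]
which uses $M_f(v)>0$ to write $S_f=M_fR_f$. This identity needs neither the first-order conditions nor nonzero individual slopes. It only requires twice differentiability along the ray, which is available by the standing $C^2$ assumption.

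The directional second-order condition is $\Phi''_{f,v}(0)\le0$. Dividing by $M_f(v)>0$ turns this into $\kappa_f(v)\le 2R_f(v)$, so the equivalence is immediate.

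For the last claim, take all $m_j>0$ and $\mathbf J_{\mathbf q_f}$ nonsingular. For $v\ne0$ the vector $\mathbf J_{\mathbf q_f}v$, whose entries are the $\phi'_{j,v}(0)$, is nonzero. So at least one $\phi'_{j,v}(0)^2>0$, and since each weight $m_j/q_j$ is positive, $M_f(v)>0$. Next note that $\Phi''_{f,v}(0)=v^\top\nabla^2_{\mathbf p_f}\Pi_f\,v$, because $\Phi_{f,v}$ is $\Pi_f$ restricted to a line through $\mathbf p_f$. The Hessian is negative semidefinite if and only if this quadratic form is $\le0$ for every $v$. It is trivially zero at $v=0$, and for $v\ne0$ it is equivalent to inequality \eqref{equation:directional.curvature.bound} by the previous step.

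There is no serious obstacle; the argument is bookkeeping. The only point needing care is that the slope term involves the symmetrized Jacobian, and the quadratic form is what justifies replacing $\mathbf J_{\mathbf q_f}$ by $\overline{\mathbf J}_{\mathbf q_f}$. The weighted-curvature representation is not needed for the lemma. If wanted, it follows by substituting $m_j\phi''_{j,v}(0)=M_j(v)\kappa_j(v)$ whenever $\phi'_{j,v}(0)\ne0$.
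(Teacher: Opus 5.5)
Your proof is correct and follows essentially the same route as the paper. You differentiate the restricted profit twice and identify the slope term with $-S_f(v)$ through the symmetric part of the quadratic form and the curvature term with $M_f(v)\kappa_f(v)$; then positive markups and nonsingularity of $\mathbf J_{\mathbf q_f}$ give $M_f(v)>0$, and you test the Hessian's quadratic form direction by direction.
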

\begin{proof}
See Appendix \ref{section:proof.of.lemma.1}.
\end{proof}
The ratio $R_f$ compares the first-order sales effect with the scale on which demand curvature enters profits. For one product, the first-order condition implies $M_f(v)=S_f(v)=-q'(p)v^2$, so $R_f(v)=1$ and the inequality reduces to $\kappa\leq2$. With multiple products, the threshold varies with the direction because different joint price changes induce different reallocations of sales. Here ``stability'' refers to an individual firm's local profit curvature, not dynamic stability or invertibility of the whole pricing game. Appendix \ref{app:directional-stability-bounds} gives additional bounds under symmetric demand derivatives.

\begin{remark}[Directional law of demand]
\label{remark:directional_law_of_demand}
If demand satisfies $(\mathbf q(\mathbf p')-\mathbf q(\mathbf p))^\top(\mathbf p'-\mathbf p)\leq0$, its symmetric Jacobian is negative semidefinite. Every principal submatrix then has the same property, giving $S_f(v)\geq0$. To see this, set $\mathbf p'=\mathbf p+tw$, divide the inequality by $t^2$, and let $t\to0$. This condition holds for Hicksian demand at fixed utility and for quasilinear Marshallian demand. Income effects can invalidate it for general Marshallian demand; it is not imposed automatically on every system considered here.
\end{remark}

\subsection{Fixed-Rival Responses and Equilibrium Feedback}
The scalar formula \eqref{equation:pass.though.in.single.product.monopoly} extends to a firm's response holding rival prices fixed. This is a local fixed-rival derivative. Full equilibrium pass-through additionally allows those rivals to adjust.

Fix firm $f$ and a tax perturbation $d\mathbf t_f$. Write $\mathbf J_{\mathbf f_f}=\partial\mathbf f_f/\partial\mathbf p_f^\top$ for the within-firm pricing block and $\Lambda_f$ for the corresponding principal submatrix of $\Lambda$. The fixed-rival response solves
\[
\mathbf J_{\mathbf f_f}d\mathbf p_f+\Lambda_f d\mathbf t_f=0.
\]
Let $\mathbf W_f=-\mathbf J_{\mathbf q_f}^{\mathrm{diag}}\succ0$, define $\|x\|_{\mathbf W_f}=(x^\top\mathbf W_fx)^{1/2}$, and, for a nonzero response, set $\hat v_f=d\mathbf p_f/\|d\mathbf p_f\|_{\mathbf W_f}$.

\begin{proposition}[Directional fixed-rival response]
\label{prop:directional.pass.through}
At an interior stationary point with nonnegative markups and a negative definite own-firm profit Hessian, hold rivals' prices fixed and let $d\mathbf p_f$ denote the derivative along the resulting branch of strict local profit maxima. For $d\mathbf p_f\neq0$ and $M_f(\hat v_f)>0$,
\begin{equation}\label{equation:directional.norm.identity}
\|d\mathbf p_f\|_{\mathbf W_f}
=\frac{\hat v_f^\top\mathbf W_f\Lambda_f d\mathbf t_f}
{M_f(\hat v_f)\bigl(2R_f(\hat v_f)-\kappa_f(\hat v_f)\bigr)}.
\end{equation}
\end{proposition}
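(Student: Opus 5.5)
The plan is to reduce the fixed-rival system to a quadratic form in firm $f$'s profit Hessian, and then evaluate that form with Lemma~\ref{lemma:directional.soc}. The key observation is that the normalized within-firm pricing block $\mathbf J_{\mathbf f_f}$ is, at a stationary point, just the Hessian of $\Pi_f$ rescaled by $\mathbf W_f^{-1}$.

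First, recall that $\mathbf f=-(\mathbf J_{\mathbf q}^{\mathrm{diag}})^{-1}\mathbf F$, and that the rows of $\mathbf F$ indexed by $\mathcal J_f$ are exactly $\nabla_{\mathbf p_f}\Pi_f$ by \eqref{equation:firm.foc.stacked}. Differentiating these rows with respect to $\mathbf p_f^\top$ gives two kinds of terms. One involves derivatives of the diagonal normalizer $-(\partial q_j/\partial p_j)^{-1}$, each multiplied by $F_j$. The other is $-(\mathbf J_{\mathbf q_f}^{\mathrm{diag}})^{-1}\nabla^2_{\mathbf p_f}\Pi_f$. At the stationary point $F_j=0$ for $j\in\mathcal J_f$, so the first kind vanishes. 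Hence
\[
\mathbf J_{\mathbf f_f}=\mathbf W_f^{-1}\mathbf H_f,
\qquad
\mathbf H_f\equiv\nabla^2_{\mathbf p_f}\Pi_f .
\]
Checking this cleanly is the step I expect to need the most care. In particular, I must confirm that the $\mathbf p_f$-derivatives of the normalizers enter only multiplied by $F_j$, and that $\Lambda_f$ is the correct principal block when only firm $f$'s costs move and rivals' prices are fixed. Negative definiteness of $\mathbf H_f$ makes $\mathbf J_{\mathbf f_f}$ invertible, so the implicit function theorem gives the branch of strict local maxima and its derivative $d\mathbf p_f$.

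Next, premultiply the fixed-rival equation $\mathbf J_{\mathbf f_f}d\mathbf p_f+\Lambda_f d\mathbf t_f=0$ by $\mathbf W_f$. This gives
\[
\mathbf H_f\,d\mathbf p_f=-\mathbf W_f\Lambda_f\,d\mathbf t_f .
\]
Now write $d\mathbf p_f=n\hat v_f$ with $n=\|d\mathbf p_f\|_{\mathbf W_f}>0$, and take the inner product of both sides with $\hat v_f$:
\[
n\,\hat v_f^\top\mathbf H_f\hat v_f=-\hat v_f^\top\mathbf W_f\Lambda_f d\mathbf t_f .
\]

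Then evaluate the quadratic form. The quantity $\hat v_f^\top\mathbf H_f\hat v_f$ is the directional second derivative $\Phi''_{f,\hat v_f}(0)$ of profit along $\hat v_f$ with rivals fixed. Because $M_f(\hat v_f)>0$ and markups are nonnegative, Lemma~\ref{lemma:directional.soc} applies and gives
\[
\Phi''_{f,\hat v_f}(0)=M_f(\hat v_f)\bigl(\kappa_f(\hat v_f)-2R_f(\hat v_f)\bigr).
\]
Since $\mathbf H_f$ is negative definite and $\hat v_f\neq0$, this quantity is strictly negative. Therefore $M_f(\hat v_f)\bigl(2R_f(\hat v_f)-\kappa_f(\hat v_f)\bigr)>0$, and dividing by it is legitimate.

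Finally, solving for $n$ yields exactly \eqref{equation:directional.norm.identity}.
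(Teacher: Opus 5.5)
Your proof is correct and follows the paper's argument step for step. At the stationary point, the normalizer's derivatives multiply $F_j=0$, so $\mathbf W_f\mathbf J_{\mathbf f_f}$ equals the own-firm profit Hessian; projecting the fixed-rival equation onto $\hat v_f$ and evaluating $-\hat v_f^\top\mathbf H_f\hat v_f$ with Lemma~\ref{lemma:directional.soc} then gives the identity with a strictly positive denominator. The one detail the paper states that you skip is that continuity of the Hessian makes the nearby stationary points strict local maxima, and the statement takes that branch as given anyway.
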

\begin{proof}
See Appendix \ref{section:proof.of.proposition.1}.
\end{proof}
When this local branch remains globally optimal, the derivative is also the firm's best-response derivative.
The denominator is minus the own-firm profit curvature along the realized response direction. Holding the direction and projected tax impulse fixed, a smaller denominator produces a larger response. The direction itself is determined by the pricing system, however, so this identity does not replace the matrix calculation with an independently known scalar.\footnote{For one product let $W=-q'>0$. Then $\hat v=\operatorname{sgn}(dp)/\sqrt W$, $R(\hat v)=M(\hat v)=1$, and the numerator is $\operatorname{sgn}(dp)\sqrt W\,dt$. Since $\|dp\|_W=\sqrt W|dp|$, cancelling these factors gives $dp=dt/(2-\kappa)$.}

To recover equilibrium pass-through, let $\mathcal D$ be the block diagonal of $\mathbf J_{\mathbf f}$ containing the full within-firm blocks, and set $\mathcal E=\mathbf J_{\mathbf f}-\mathcal D$. When these blocks are invertible, define
\[
\mathcal T=-\mathcal D^{-1}\mathcal E,
\qquad \Psi^{\mathrm{BR}}=-\mathcal D^{-1}\Lambda.
\]
Factoring $\mathbf J_{\mathbf f}=\mathcal D(I-\mathcal T)$ gives the exact identity
\begin{equation}\label{equation:equilibrium.block.feedback}
\Psi=(I-\mathcal T)^{-1}\Psi^{\mathrm{BR}}.
\end{equation}
Thus own-firm curvature governs the fixed-rival response, while the remaining inverse captures strategic feedback across firms. If $\rho(\mathcal T)<1$, successive powers of $\mathcal T$ represent successive rounds of rival adjustment. This additional condition does not follow from each firm's second-order condition.

A quadratic-utility duopoly gives a globally defined illustration. Appendix~\ref{section:proof.of.proposition.1} specifies consumer demand at all nonnegative prices, including the possibility of zero purchases. When both products are purchased, demand is
\[
q_i=3-2\beta-p_i+\beta p_k,\qquad k\ne i,\quad 0\leq\beta<1.
\]
At marginal costs $(1,1)$, prices $(2,2)$ form a Nash equilibrium: each firm's price is its unique globally optimal response to the rival's price over all $p_i\geq0$. Quantities and markups are one. Throughout this family, own-demand slopes are $-1$, own BP curvature is zero, and own-profit curvature is $-2$ at the equilibrium. Thus fixed-rival own-cost pass-through is always $1/2$. Equilibrium pass-through instead satisfies
\[
\mathbf J_{\mathbf f}=\begin{pmatrix}-2&\beta\\ \beta&-2\end{pmatrix},
\qquad
\Psi=\frac{1}{4-\beta^2}\begin{pmatrix}2&\beta\\ \beta&2\end{pmatrix},
\qquad
\Psi\mathbf1=\frac{1}{2-\beta}\mathbf1.
\]
As substitution strengthens, rival adjustment raises common-cost pass-through even though the individual curvature benchmark is unchanged. At $\beta=3/4$, the three responses are distinct:
\[
\left.\frac{\partial p_i}{\partial c_i}\right|_{p_k\ \mathrm{fixed}}=\frac12,
\qquad
\Psi_{ii}=\frac{32}{55},
\qquad
(\Psi\mathbf1)_i=\frac45.
\]
The best-response slope with respect to the rival's price is $\beta/2=3/8$; this feedback turns the initial one-half response into four-fifths when both costs rise. In fact, for common cost increments $t\in[0,1]$, prices $p_i(t)=2+4t/5$ form a Nash equilibrium. The example therefore illustrates equilibrium amplification over a finite range, with no restriction confining firms to prices near the baseline.

This is an illustration of the established role of strategic feedback in incidence \citep{moorthy2005general,alexandrov2017lechatelier}. It isolates the dependence of equilibrium pass-through on substitution while holding the individual curvature benchmark fixed. For approximation, the relevant question is how much of the joint response is omitted when rival adjustment is suppressed.

\section{Finite-Series Approximation of the Pass-Through Matrix \label{section:finite.series.approximation}}
The economic value of an approximation is that it identifies which interactions are needed for a prediction and checks the error from leaving others out. This section uses the decomposition to organize the equilibrium response into a direct adjustment and successive rounds of interaction. The Neumann expansion is standard matrix algebra; its application here associates each term with demand and ownership effects, supplies error bounds, and establishes convergence throughout the interior common-coefficient logit model. When the full pricing Jacobian is known, a linear solve remains a direct way to compute a particular response. The expansion is useful for interpreting that response and assessing a simpler approximation.

\subsection{Neumann Expansion and Error Control}
Separate the normalized pricing Jacobian into its diagonal and off-diagonal parts:
\begin{equation}\label{equation:Af.Bf.definition}
\mathbf A=-2I+\mathbf K^{\mathrm{diag}}+\mathbf C^{\mathrm{diag}},
\qquad \mathbf B=\mathbf K^{\mathrm{off}}+\mathbf C^{\mathrm{off}}.
\end{equation}
Thus $\mathbf J_{\mathbf f}=\mathbf A+\mathbf B$. The diagonal contains own-price curvature and own-price diversion reallocation. The remaining entries contain cross-price interactions, including those within a multiproduct firm. With $\mathbf A$ invertible, define
\begin{equation}\label{equation:Gamma.structural.definition}
\Gamma=-\mathbf B\mathbf A^{-1}.
\end{equation}
The order matters: $\mathbf J_{\mathbf f}=(I-\Gamma)\mathbf A$, so $\Psi=-\mathbf A^{-1}(I-\Gamma)^{-1}\Lambda$.

\begin{proposition}[Neumann representation and truncation error]
\label{prop:neumann.structural}
Suppose $\mathbf A$ is invertible and $\rho(\Gamma)<1$. Then
\begin{equation}\label{equation:pass.through.neumann.structural}
\Psi=-\mathbf A^{-1}\sum_{m=0}^{\infty}\Gamma^m\Lambda.
\end{equation}
For integer $K\geq0$, define the order-$K$ approximation as
\[
\Psi_K=-\mathbf A^{-1}\sum_{m=0}^{K}\Gamma^m\Lambda.
\]
Its exact remainder is
\begin{equation}\label{equation:neumann.exact.remainder}
\Psi-\Psi_K=-\mathbf A^{-1}\Gamma^{K+1}(I-\Gamma)^{-1}\Lambda.
\end{equation}
If $\|\Gamma\|\leq\eta<1$ in a consistent submultiplicative matrix norm, then
\begin{equation}\label{equation:neumann.error.bound}
\|\Psi-\Psi_K\|\leq\|\mathbf A^{-1}\|\,\|\Lambda\|\frac{\eta^{K+1}}{1-\eta}.
\end{equation}
For a particular cost direction $h$, the same bound replaces $\|\Lambda\|$ with $\|\Lambda h\|$ and the left side with $\|(\Psi-\Psi_K)h\|$, using the compatible vector norm.
\end{proposition}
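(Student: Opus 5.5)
The plan is to start from the factorization $\mathbf J_{\mathbf f}=\mathbf A+\mathbf B=(I-\Gamma)\mathbf A$. This is immediate from $\Gamma=-\mathbf B\mathbf A^{-1}$, because $(I-\Gamma)\mathbf A=\mathbf A+\mathbf B\mathbf A^{-1}\mathbf A=\mathbf A+\mathbf B$. Since $\rho(\Gamma)<1$, the matrix $I-\Gamma$ has no zero eigenvalue and is therefore invertible. So $\mathbf J_{\mathbf f}$ is invertible, consistent with the standing assumption, and $\mathbf J_{\mathbf f}^{-1}=\mathbf A^{-1}(I-\Gamma)^{-1}$. Substituting into the definition \eqref{equation:pass.through.matrix.definition} gives $\Psi=-\mathbf A^{-1}(I-\Gamma)^{-1}\Lambda$.

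Next comes the series. I will use the standard fact that $\rho(\Gamma)<1$ implies $\Gamma^m\to0$, which follows from the Jordan form or Gelfand's formula. Combined with the telescoping identity
\[
(I-\Gamma)\sum_{m=0}^{K}\Gamma^m=I-\Gamma^{K+1},
\]
this shows that the partial sums converge to $(I-\Gamma)^{-1}$. Multiplying on the left by $-\mathbf A^{-1}$ and on the right by $\Lambda$, both fixed matrices, yields \eqref{equation:pass.through.neumann.structural}.

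For the exact remainder, I rearrange the telescoping identity as
\[
\sum_{m=0}^{K}\Gamma^m=(I-\Gamma)^{-1}-(I-\Gamma)^{-1}\Gamma^{K+1}.
\]
Because $\Gamma$ commutes with $(I-\Gamma)^{-1}$, the last term equals $\Gamma^{K+1}(I-\Gamma)^{-1}$. Hence $\Psi-\Psi_K=-\mathbf A^{-1}\Gamma^{K+1}(I-\Gamma)^{-1}\Lambda$, which is \eqref{equation:neumann.exact.remainder}.

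For the bound, I take a submultiplicative norm with $\|\Gamma\|\le\eta<1$. I will not assume $\|I\|=1$, since a consistent norm such as Frobenius need not satisfy it. Instead I write the remainder as the convergent tail $-\mathbf A^{-1}\sum_{m\ge K+1}\Gamma^m\Lambda$. Submultiplicativity and the triangle inequality then give
\[
\|\Psi-\Psi_K\|\le\|\mathbf A^{-1}\|\,\|\Lambda\|\sum_{m\ge K+1}\eta^m=\|\mathbf A^{-1}\|\,\|\Lambda\|\frac{\eta^{K+1}}{1-\eta}.
\]
This form avoids any bound on $\|(I-\Gamma)^{-1}\|$ that would involve $\|I\|$.

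The directional version follows the same steps. I apply the tail to $h$, giving $(\Psi-\Psi_K)h=-\mathbf A^{-1}\sum_{m\ge K+1}\Gamma^m(\Lambda h)$. I then use compatibility $\|Mx\|\le\|M\|\,\|x\|$ of the vector norm with the matrix norm, so $\|\Lambda h\|$ replaces $\|\Lambda\|$.

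The only delicate points are two. First, the remainder must be expressed through the tail series rather than through $\|(I-\Gamma)^{-1}\|$, so that the stated constant holds for every consistent norm. Second, the remainder formula uses the commutation of $\Gamma$ with $(I-\Gamma)^{-1}$. Everything else is routine matrix algebra.
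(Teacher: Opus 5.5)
Your proof is correct and follows the paper's argument. Both use the factorization $\mathbf J_{\mathbf f}=(I-\Gamma)\mathbf A$, convergence of the Neumann series for $(I-\Gamma)^{-1}$, the tail identity $\sum_{m\geq K+1}\Gamma^m=\Gamma^{K+1}(I-\Gamma)^{-1}$, and a bound on that tail from submultiplicativity and the geometric series. Your remark that bounding the tail directly avoids any dependence on $\|I\|$ makes explicit what the paper leaves implicit, but the route is the same.
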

\begin{proof}
See Appendix \ref{section:proof.of.proposition.2}.
\end{proof}
The first two approximations are
\begin{align}
\Psi_0&=-\mathbf A^{-1}\Lambda,\label{equation:pass.through.zeroth.order.structural}\\
\Psi_1&=-\mathbf A^{-1}\Lambda+\mathbf A^{-1}\mathbf B\mathbf A^{-1}\Lambda.
\label{equation:pass.through.first.order.structural}
\end{align}
The leading term removes propagation through off-diagonal pricing derivatives, while retaining ownership in the initial tax loading $\Lambda$. The next term adds one round of interaction. Equivalently, set $T=-\mathbf A^{-1}\mathbf B=\mathbf A^{-1}\Gamma\mathbf A$. Then $\Psi_K=\sum_{m=0}^KT^m\Psi_0$, so each power acts directly on a vector of price responses. These product-level rounds differ from the firm-level rounds in \eqref{equation:equilibrium.block.feedback}, whose baseline retains entire within-firm blocks.

The convergence restriction is additional to equilibrium regularity and the firms' second-order conditions. Spectral radius below one ensures eventual convergence, but a small spectral radius alone need not make low-order terms small for a nonnormal matrix. The norm bound provides an operational sufficient condition and retains the size of the diagonal inverse and tax loading. It also clarifies the meaning of ``first order'': $\Psi_1$ retains one round of feedback, whereas every $\Psi_Kh$ is a \emph{local} prediction in the size of the policy change. The bound does not control extrapolation to a large finite policy change. Nor must adding a term reduce each entry's error in every demand system.

\begin{example}[An interior logit benchmark]
\label{example:logit.neumann.convergence}
Consider multinomial logit with common price coefficient $\alpha>0$, shares $s_j>0$, and positive outside share. Let $S_f=\sum_{j\in\mathcal J_f}s_j$ and $S=\sum_js_j<1$. The first-order conditions imply a common within-firm markup $m_f=1/[\alpha(1-S_f)]$. In fact,
\[
f_j=\frac{1/\alpha-m_j+\sum_{l\in\mathcal J_f}s_lm_l}{1-s_j},\qquad j\in\mathcal J_f.
\]
At equilibrium the numerator vanishes. Differentiating it while keeping costs fixed gives
\[
A_{jj}=-\frac1{1-s_j},\qquad
B_{jk}=\begin{cases}
0,&k\ne j,\ k\in\mathcal J_f,\\[2pt]
\dfrac{S_fs_k}{(1-s_j)(1-S_f)},&k\notin\mathcal J_f.
\end{cases}
\]
The within-firm off-diagonal pricing derivatives cancel at equilibrium. This cancellation shows why an ownership-aware baseline can be accurate even before adding rival feedback. Ownership still affects both $S_f$ and $\Lambda$. In particular $(\Psi_0)_{jj}=1-s_j$, $(\Psi_0)_{jk}=-s_k$ for another product in the same firm, and other entries vanish.

The similar propagation matrix $T=-\mathbf A^{-1}\mathbf B$ has nonnegative cross-firm entries $T_{jk}=S_fs_k/(1-S_f)$ and zero within-firm entries. Consequently
\begin{equation}\label{equation:logit.all.interior.convergence}
\rho(\Gamma)=\rho(T)\leq\|T\|_\infty
=\max_f\frac{S_f(S-S_f)}{1-S_f}\leq\max_fS_f<1.
\end{equation}
Thus the expansion converges at every such interior logit equilibrium, including multiproduct ownership. This bound concerns $T$, not necessarily $\|\Gamma\|_\infty$. If $\eta=\|T\|_\infty$, the equivalent representation gives the shock-specific certificate
\[
\|(\Psi-\Psi_K)h\|_\infty
\leq\frac{\eta^{K+1}}{1-\eta}\|\Psi_0h\|_\infty.
\]

For three single-product firms with shares $(0.20,0.15,0.10)$,
\[
\Psi\approx\begin{pmatrix}
0.801530&0.032300&0.023147\\
0.028612&0.851403&0.016713\\
0.018289&0.014908&0.900793
\end{pmatrix},\qquad
\Psi_1\approx\begin{pmatrix}
0.800000&0.031875&0.022500\\
0.028235&0.850000&0.015882\\
0.017778&0.014167&0.900000
\end{pmatrix}.
\]
The order-1 correction captures most cross-product responses in this example. More broadly, Appendix \ref{section:numerical.illustration} reports 1,000 validated six-product logit configurations: mean relative matrix error for order 1 is $0.323\%$. Logit's exact within-firm cancellation makes this a favorable benchmark, not evidence of universal accuracy in flexible demand systems. The appendix reports finite-policy checks separately. Exact aggregation for logit and CES is an established route to tractability \citep{nocke2018multiproduct,nocke2025aggregative}; the calculation here uses logit to make the convergence restriction and the contributions of successive interactions explicit.
\end{example}

\begin{remark}[Block-diagonal refinement]
When within-firm interactions are substantial, the block decomposition in \eqref{equation:equilibrium.block.feedback} may give a more informative baseline. If $\rho(\mathcal T)<1$, its Neumann expansion is $\Psi=\sum_{m\geq0}\mathcal T^m\Psi^{\mathrm{BR}}$. The same remainder argument applies, with firm blocks replacing diagonal entries.
\end{remark}

\subsection{Checking a Particular Response}

A researcher may need the response of one price, a price index, or the price component of consumer welfare rather than every element of $\Psi$. The same contraction argument gives a check on the particular calculation. Write the linearized pricing system as
\[
\mathbf J_{\mathbf f}x+\mathbf b=0,
\]
where $x$ is the exact local price response and $\mathbf b$ is the intervention's change in pricing incentives at fixed prices. For a cost direction $h$, $\mathbf b=\Lambda h$; Section~\ref{subsection:general.incentives} gives the loading for other interventions. For any proposed approximation $\widehat x$, define its residual $r=\mathbf J_{\mathbf f}\widehat x+\mathbf b$.

Suppose $T=-\mathbf A^{-1}\mathbf B$ satisfies $\|T\|\leq\eta<1$ in the matrix norm induced by the chosen vector norm. Since $\mathbf J_{\mathbf f}=\mathbf A(I-T)$,
\[
x-\widehat x=-(I-T)^{-1}\mathbf A^{-1}r,
\qquad
\|x-\widehat x\|\leq\frac{\|\mathbf A^{-1}r\|}{1-\eta}.
\]
For a fixed vector of weights $\ell$, the dual norm gives
\begin{equation}\label{equation:policy.residual.certificate}
\bigl|\ell^\top(x-\widehat x)\bigr|
\leq \|\ell\|_*\frac{\|\mathbf A^{-1}r\|}{1-\eta}
=:E_\ell.
\end{equation}
Consequently, $|\ell^\top\widehat x|>E_\ell$ establishes the sign of the local scalar response. This is a computational consequence of the contraction condition: an approximation can be adequate for a particular economic conclusion even when a bound on the entire matrix is uninformative. Choosing $\ell=e_j$ checks a product's price response; suitable fixed weights check a price index. Choosing $\ell=-\mathbf q$ checks the price-mediated consumer-welfare component when that interpretation applies. The check conditions on the pricing derivatives and intervention loading. Estimation uncertainty, direct utility effects, and extrapolation to a finite policy change require separate treatment. Applied to estimated demand, this bound can establish that interactions omitted by a simpler calculation do not overturn a particular price or index prediction. It provides a conditional check on the importance of feedback rather than inferring that importance from own-curvature estimates alone.

\section{Small-Share Limits of Pass-Through \label{section:small.share.approximation}}

Small product shares do not by themselves imply weak strategic interaction. The relevant question is how shares become small and which substitution patterns survive along that path. This section compares demand specifications through the resulting limits of the normalized pricing Jacobian. In some fixed-product experiments, pass-through becomes diagonal; in others, cross-product effects and ownership remain relevant even when every inside share vanishes.

The analysis distinguishes demand derivatives evaluated along a price sequence from equilibrium comparative statics. All matrix limits below keep the number of products fixed. To interpret a demand-side limit as equilibrium pass-through, the sequence must consist of interior equilibria, possibly supported by changing costs, and the limiting full pricing Jacobian must be nonsingular. A demand-evaluation path alone does not establish those equilibrium conditions.

These comparisons connect the local decomposition to the demand-shape concerns in \citet{miravete2023pass,miravete2026elasticity,compiani2022market}. The issue is which restrictions a specification places on the matrix used for a counterfactual, even when its baseline elasticities appear reasonable. I first isolate the demand-side terms, distinguish baseline curvature from an entire normalized profile, and then incorporate ownership. The final nested-logit calculation shows how to retain the remaining interactions in a simple formula.

\subsection{A Semi-Elasticity Representation of Pass-Through with Single-Product Firms}

When each firm produces a single product, \(\Lambda=I\) and \(\mathbf C=\mathbf 0\), so Lemma~\ref{lemma:matrix.curvature.decomposition} implies that the pass-through matrix simplifies to
\[
\frac{\partial \mathbf p}{\partial \mathbf t^\top} = - \mathbf{J}_{\mathbf{f}}^{-1} = -\left[ -2I + \mathbf{K} \right]^{-1},
\]
so the entries of $\mathbf{J}_{\mathbf{f}} \equiv \partial \mathbf{f} / \partial \mathbf{p}^\top$ are
\begin{equation}\label{equation:normalized.pricing.jacobian.single.product}
J_{\mathbf f,jk}
=
\begin{cases}
-2+\kappa_{jj}^j, & k=j,\\
\delta_{jk}\bigl(1-\kappa_{kj}^j\bigr), & k\neq j.
\end{cases}
\end{equation}
Thus, with single-product firms, the behavior of pass-through is governed by the diagonal and off-diagonal entries of the normalized pricing Jacobian in \eqref{equation:normalized.pricing.jacobian.single.product}.

A convenient way to study these objects under small-share asymptotics is through semi-elasticities. For each product \(j\), define
\[
\eta_{jj}(\mathbf p):=\partial_{j}\log q_j(\mathbf p)
=
\frac{\partial_{p_j} q_j(\mathbf p)}{q_j(\mathbf p)},
\qquad
\eta_{jk}(\mathbf p):=\partial_{k}\log q_j(\mathbf p)
=
\frac{\partial_{p_k} q_j(\mathbf p)}{q_j(\mathbf p)},
\quad k\neq j.
\]
Thus \(\eta_{jj}\) is the own semi-elasticity of demand for product \(j\), which is negative under downward-sloping demand, and \(\eta_{jk}\) is the cross semi-elasticity of demand for product \(j\) with respect to \(p_k\).

\begin{lemma}[Semi-elasticity representation of the normalized pricing Jacobian] \label{lemma:semi.elasticity.representation}
Assume demand is positive and twice continuously differentiable, with a negative own-price derivative. For each product \(j\),
\[
-2+\kappa_{jj}^j(\mathbf p)
=
-1+\frac{\partial_{p_j}\eta_{jj}(\mathbf p)}{\eta_{jj}(\mathbf p)^2}.
\]
For each \(k\neq j\),
\[
\delta_{jk}(\mathbf p)\bigl(1-\kappa_{kj}^j(\mathbf p)\bigr)
=
\frac{\partial_{p_j}\eta_{jk}(\mathbf p)}{\eta_{jj}(\mathbf p)^2}.
\]
When a cross derivative is zero, the off-diagonal expression is understood through the continuous combined formula
\[
\delta_{jk}(1-\kappa_{kj}^j)
=\frac{q_j\partial_{jk}q_j}{(\partial_jq_j)^2}
-\frac{\partial_kq_j}{\partial_jq_j},
\]
which does not require division by a cross slope.
\end{lemma}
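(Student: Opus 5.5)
The plan is a direct calculation that differentiates the log-derivative identities. Throughout, write $q_j$ for demand and $\partial_k$ for $\partial/\partial p_k$. The only tool is the quotient rule applied to $\eta_{jk}=\partial_kq_j/q_j$. For any $k$ (including $k=j$),
\[
\partial_{p_j}\eta_{jk}=\frac{\partial_{jk}q_j}{q_j}-\frac{\partial_kq_j\,\partial_jq_j}{q_j^2}.
\]
Dividing by $\eta_{jj}^2=(\partial_jq_j)^2/q_j^2$, which is strictly positive because $\partial_jq_j<0$ and $q_j>0$, gives
\[
\frac{\partial_{p_j}\eta_{jk}}{\eta_{jj}^2}=\frac{q_j\,\partial_{jk}q_j}{(\partial_jq_j)^2}-\frac{\partial_kq_j}{\partial_jq_j}.
\]
Both target identities are instances of this single display.

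For the diagonal case set $k=j$. The first term is $q_jq_{j,jj}/q_{j,j}^2=\kappa^j_{jj}$ by \eqref{equation:demand.curvature.measure}, and the second is $-1$. Hence $\partial_{p_j}\eta_{jj}/\eta_{jj}^2=\kappa^j_{jj}-1$, and therefore $-2+\kappa^j_{jj}=-1+\partial_{p_j}\eta_{jj}/\eta_{jj}^2$.

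For $k\ne j$ the right side of the display is exactly the continuous combined formula in the statement, and it coincides with the expanded off-diagonal entry $K_{jk}$ of Lemma~\ref{lemma:matrix.curvature.decomposition}. When $q_{j,k}\neq0$, it remains to match it with $\delta_{jk}(1-\kappa^j_{kj})$. Recall $\delta_{jk}=-q_{j,k}/q_{j,j}$ and $\kappa^j_{kj}=q_jq_{j,kj}/(q_{j,k}q_{j,j})$. Then
\[
\delta_{jk}\kappa^j_{kj}=-\frac{q_jq_{j,kj}}{q_{j,j}^2}.
\]
By symmetry of second derivatives under $C^2$ (Schwarz), $q_{j,kj}=q_{j,jk}$. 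It follows that
\[
\delta_{jk}(1-\kappa^j_{kj})=-\frac{q_{j,k}}{q_{j,j}}+\frac{q_jq_{j,jk}}{q_{j,j}^2},
\]
which matches the display. When $q_{j,k}=0$, $\kappa^j_{kj}$ is undefined. The combined expression is then taken as the definition, and it is continuous in $\mathbf p$ because it involves only $q_j$, its derivatives up to order two, and division by the nonvanishing $q_{j,j}$. This is consistent with the remark after Lemma~\ref{lemma:matrix.curvature.decomposition}.

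There is no real obstacle. The only point needing care is bookkeeping: keeping the index order in $\kappa^j_{kj}$ straight and invoking equality of mixed partials, so that $\partial_{p_j}\eta_{jk}$, which differentiates $\partial_kq_j$ in $p_j$, matches the curvature index with $q_{j,kj}$. The zero-cross-slope convention should also be stated explicitly.
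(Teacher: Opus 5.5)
Your proposal is correct and follows essentially the same route as the paper: apply the quotient rule to $\eta_{jk}=\partial_kq_j/q_j$, divide by $\eta_{jj}^2$, and match the terms with $\kappa^j_{jj}$ and with $\delta_{jk}$ and $\delta_{jk}\kappa^j_{kj}$. Handling $k=j$ and $k\ne j$ in a single display and noting equality of mixed partials is only a streamlining of the paper's two separate calculations.
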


\begin{proof}
See Appendix \ref{section:proof.of.lemma.3}.
\end{proof}

Lemma \ref{lemma:semi.elasticity.representation} shows that the normalized pricing Jacobian is governed by two local growth-rate objects. The diagonal term depends on the normalized derivative of the own semi-elasticity, \(\partial_{p_j}\eta_{jj}/\eta_{jj}^2\), while the off-diagonal term depends on the normalized derivative of the cross semi-elasticity, \(\partial_{p_j}\eta_{jk}/\eta_{jj}^2\). This representation is helpful because it rewrites the Jacobian in terms of the local evolution of economically interpretable demand sensitivities, rather than raw second derivatives of demand. In particular, for small-share analysis, the key question is not the absolute size of these derivatives, but how own and cross semi-elasticities change relative to the square of the own semi-elasticity along the specified sequences of price vectors.

\subsection{Baseline Curvature and Normalized Demand Profiles}

Consider a sequence of price vectors \(\mathbf p^{(n)}\) along which inside shares vanish. Suppose the normalized coefficients converge:
\[
-2+\kappa_{jj}^j(\mathbf p^{(n)})\to-1+a_j,
\qquad
\delta_{jk}(\mathbf p^{(n)})(1-\kappa_{kj}^j(\mathbf p^{(n)}))\to b_{jk}.
\]
These are assumptions on a demand system and a particular sequence; smoothness and small shares alone do not guarantee their existence. The numbers \(a_j\) and \(b_{jk}\) describe derivatives at the baseline price vectors. An entire own-price demand profile requires information about derivatives away from those baselines.

To make the distinction precise, let \(\lambda_{j,n}:=-\eta_{jj}(\mathbf p^{(n)})>0\) and vary only product \(j\)'s price by \(t/\lambda_{j,n}\). This measures the price change in units of the inverse baseline semi-elasticity, so the normalized log-demand slope at zero is always \(-1\). Define
\[
\mathbf p_{j,n}(t):=\mathbf p^{(n)}+\frac{t}{\lambda_{j,n}}e_j,
\qquad
E_{j,n}(t):=\frac{\eta_{jj}(\mathbf p_{j,n}(t))}{\eta_{jj}(\mathbf p^{(n)})},
\qquad
Q_{j,n}(t):=\frac{q_j(\mathbf p_{j,n}(t))}{q_j(\mathbf p^{(n)})},
\]
Under the uniform-neighborhood conditions stated in Proposition~\ref{prop:local.shape.jacobian}, normalized curvature is asymptotically constant on an interval $I$ containing zero. The resulting profile is
\[
Q_{j,n}(t)\longrightarrow
\begin{cases}
e^{-t},&a_j=0,\\
(1+a_jt)^{-1/a_j},&a_j\ne0,
\end{cases}
\]
uniformly on compact subintervals of $I\cap\{t:1+a_jt>0\}$. Appendix~\ref{section:proof.of.proposition.3} gives the full conditions, the associated semi-elasticity limits, and the proof.

The uniform neighborhood assumptions do the substantive work. Under them, \(a_j=0\) gives an exponential profile, \(a_j>0\) gives a power profile, and \(a_j<0\) gives a profile tending to zero as \(t\uparrow-1/a_j\), if the common domain extends to that boundary. Compact convergence before the boundary does not imply a literal choke price in each original demand curve. Nor does a baseline value of \(a_j\) establish any of these profiles without the neighborhood assumptions.

More generally, write \(A_{j,n}(t)=-E'_{j,n}(t)/E_{j,n}(t)^2\). The exact identities
\[
\frac{1}{E_{j,n}(t)}=1+\int_0^t A_{j,n}(u)\,du,
\qquad
Q_{j,n}(t)=\exp\!\left[-\int_0^t E_{j,n}(u)\,du\right]
\]
show why the entire function \(A_{j,n}\), rather than only its value at zero, matters for demand away from the baseline. At the baseline, the precise local implication is
\[
Q_{j,n}(t)=1-t+\frac{1}{2}\kappa_{jj}^j(\mathbf p^{(n)})t^2+o_n(t^2).
\]
Thus \(a_j\) compares baseline curvature with the exponential benchmark. This distinction matters for counterfactuals: a local derivative concerns behavior at the baseline, whereas extrapolation along a price path requires demand shape away from it. Gamma mixed logit and nested logit below show why a baseline curvature index cannot substitute for that additional information.

\subsection{Common Demand Systems and Small-Share Paths}

Table \ref{table:small.share.limits} compares the limiting demand-side entries for explicit experiments. Logit, CES, and mixed logit are evaluated along positive common-price rays \(\mathbf p^{(n)}=\lambda_n\mathbf p\), with fixed mean utilities and \(\lambda_n\to\infty\). Linear and fixed-weight Stone LA/AIDS demand instead approach a feasible boundary for one product. Nested logit lets all nest shares vanish while conditional shares within nests converge. Appendix \ref{appendix:model.specific.small.share.limits} states the demand systems and proves the limits. In the mixed-logit specifications, mixing is over a scalar price coefficient, with no random coefficients on other product attributes. Those limits concern the specified mixtures and do not cover unrestricted correlated random-coefficient demand.

\begin{table}[htbp!]
\centering
\small
\caption{Demand-Side Limits Under Specified Small-Share Experiments}
\label{table:small.share.limits}
\begin{threeparttable}
\begin{tabular}{lcc}
\toprule
Model and experiment & \(\lim(-2+\kappa_{jj}^j)\) & \(\lim\delta_{jk}(1-\kappa_{kj}^j)\), \(k\ne j\)\\
\midrule
\multicolumn{3}{@{}l}{\emph{Positive common-price ray; fixed finite product set}}\\[0.3em]
Logit & \(-1\) & \(0\)\\[0.5em]
Lognormal mixed logit & \(-1\) & \(0\)\\[0.5em]
CES, \(\sigma>1\) & \(-1+1/\sigma\) & \(0\)\\[0.5em]
Gamma mixed logit & \(-1+\partial_{jj}\log H_j/\ell_j^2\) & \(\partial_{jk}\log H_j/\ell_j^2\)\\
\midrule
\multicolumn{3}{@{}l}{\emph{Feasible boundary for product \(j\)}}\\[0.3em]
Linear demand & \(-2\) & \(-\beta_{jk}/\beta_{jj}\)\\[0.5em]
Stone LA/AIDS & \(-2\) & \(-\overline c_{jk} A_{jk}/A_{jj}\)\\
\midrule
\multicolumn{3}{@{}l}{\emph{Vanishing nest shares; stable conditional shares}}\\[0.3em]
Nested logit & \(-1-\sigma z_j(1-z_j)/B_j^2\) & \(\sigma z_jz_k/B_j^2\) within nest\\
 & & \(0\) across nests\\
\bottomrule
\end{tabular}
\begin{tablenotes}[flushleft]
\footnotesize
\item \emph{Notes:} All entries concern the demand-side matrix \(-2I+\mathbf K\). For gamma mixing, \(H_j\) is the integral in Proposition \ref{prop:ray.limits.mixed.logit}, evaluated at the fixed ray \(\mathbf p\), and \(\ell_j=-\partial_j\log H_j>0\). Its baseline curvature correction can have either sign. For Stone LA/AIDS, weights are fixed, \(A_{jk}=\gamma_{jk}-\beta_j\omega_k\), \(A_{jj}<0\), and \(\overline c_{jk}=\lim p_j^{(n)}/p_k^{(n)}\in(0,\infty)\). Those boundary statements are rowwise. For nested logit, \(z_j\) is the limiting conditional share and \(B_j=1-\sigma z_j\). Turning these entries into equilibrium pass-through additionally requires the full ownership terms and a nonsingular limiting pricing Jacobian.
\end{tablenotes}
\end{threeparttable}
\end{table}

Logit and lognormal mixed logit share the demand-side limit \(-I\); CES has limit \(-(1-1/\sigma)I\). With single-product firms these yield unit pass-through limits of \(I\), \(I\), and \(\sigma/(\sigma-1)I\), respectively, along regular equilibrium sequences. Their similarity in matrix structure does not mean that they impose identical finite-share curvature or approximation errors. In mixed logit, the buyer-composition terms in \eqref{equation:mixture.composition} can remain substantial at finite prices. The lognormal high-price-ray result therefore does not require the finite-market pass-through predictions studied by \citet{miravete2023pass,miravete2026elasticity} to be near the identity. Its restrictions concern a particular selection experiment, rather than small observed shares alone.

Gamma mixed logit gives a different mechanism. Along the common ray, relevant consumers have price sensitivity of order \(1/\lambda_n\). Rival inside products can therefore retain substantial conditional choice probabilities for the consumers who remain, even though aggregate shares vanish. Demand is regularly varying along that ray,
\[
q_j(\lambda\mathbf p)\sim\frac{M\beta^r}{\Gamma(r)}\lambda^{-r}H_j(\mathbf p),
\]
but own-price perturbations change relative prices. The resulting normalized residual-demand profile is
\[
Q_j(t)=\frac{H_j(\mathbf p+t e_j/\ell_j)}{H_j(\mathbf p)},
\qquad t>-\ell_jp_j,
\]
which generally has nonconstant normalized curvature. In particular, \(a_j=\partial_{jj}\log H_j/\ell_j^2\) need not be positive. With two equally attractive products, \(r=2\), \(\delta_1=\delta_2=10\), and ray prices \((1,1)\), numerical quadrature gives \(a_j\simeq-0.188\). A common-ray power law therefore does not determine whether residual demand is more or less curved than the exponential benchmark. This sensitivity to the mixing distribution complements the demand-curvature comparisons in \citet{miravete2026elasticity}.

Boundary experiments preserve a different source of interaction. Linear and Stone LA/AIDS demand retain cross-slope ratios in the normalized Jacobian as product \(j\)'s demand or expenditure share approaches zero. These are statements about a feasible rowwise boundary; they do not automatically describe a simultaneously feasible whole-market equilibrium limit. A finite choke-price interpretation further requires finite positive limiting prices.

Finally, nested logit preserves substitution within nests. Its negative baseline curvature correction does not imply a finite-choke normalized profile. Writing \(z=z_j\) and \(B=1-\sigma z\), the actual own-price profile is
\[
Q_j(t)=e^{-t/B}\bigl(1-z+ze^{-t/B}\bigr)^{-\sigma},
\]
which is positive at every finite \(t\). Appendix \ref{appendix:nested.logit.benchmark} derives this profile and the corresponding ownership limit. The common lesson is that small-share pricing depends on the limiting experiment and the substitution patterns it preserves, not on a scalar tail label alone.

\subsection{Extension to Multiproduct Firms}

With multiproduct ownership,
\[
\mathbf P=-\mathbf J_{\mathbf f}^{-1}\Lambda,
\qquad
\mathbf J_{\mathbf f}=-2I+\mathbf K+\mathbf C.
\]
The demand-side limit of \(-2I+\mathbf K\) is only one part of this system. A diagonal demand-side limit does not imply that diversion or its markup-weighted derivatives vanish.

A sufficient set of conditions is transparent. At fixed dimension, suppose the demand-side diagonal entries converge to \(-d_j\), with \(d_j>0\), the off-diagonal demand-side entries vanish, and, for all relevant indices \(\ell\ne j\) and \(k\),
\[
D_{j\to\ell}\to0,
\qquad
(p_\ell-c_\ell)\partial_{p_k}D_{j\to\ell}\to0.
\]
Then \(\mathbf C\to0\), \(\Lambda\to I\), and continuity of matrix inversion gives
\[
\mathbf P\to\operatorname{diag}(d_j^{-1}).
\]
Appendix \ref{appendix:model.specific.small.share.limits} verifies these ownership conditions for logit, CES, and lognormal mixed logit along the stated rays, provided markups grow at most proportionally to the price scale. This is a fixed-product experiment in which total inside demand vanishes; it does not cover a growing product set with small individual shares but substantial total firm shares.

Nested logit provides a useful case in which within-nest diversion survives. Let \(S_{fg}\) be firm \(f\)'s limiting share within nest \(g\), let \(\alpha>0\) be the common price coefficient, and let \(\sigma\in[0,1)\) be the nesting parameter. Proposition \ref{prop:nested.logit.small.share} shows that products in the same firm--nest group have the common limiting markup
\[
m_{fg}^*=\frac{1-\sigma}{\alpha(1-\sigma S_{fg})}.
\]
Ownership changes markups through the share of within-nest substitution internalized by the firm.\footnote{For a firm confined to one nest, this is the vanishing-nest limit of the markup formula in \citet[Appendix B, equation (B12)]{nocke2025aggregative}, after restoring price units. For arbitrary firm--nest overlaps, the same limiting markup follows from the reduced pricing equations in \citet{garrido2024aggregative}. Proposition~\ref{prop:nested.logit.small.share} derives the limiting price-response matrix; the following corollary expresses its action through group averages.} The limiting pass-through matrix is block diagonal by nest, and every block has row sums equal to one: a common cost increase within a nest is passed through fully. If one firm owns the entire nest, its limiting markup is the constant \(1/\alpha\), and the block is the identity. Under split ownership, product-specific shocks can still produce within-nest spillovers. Thus surviving substitution creates scope for ownership to matter, while the cost-shock direction and ownership pattern determine its actual incidence.

\subsection{A Group-Average Formula for Nested-Logit Incidence}

The surviving within-nest interactions have a simple economic representation. Fix a nest $g$ in the limit of Proposition~\ref{prop:nested.logit.small.share}. Let $z_j$ be its limiting conditional product shares, with $\sum_{j\in\mathcal J_g}z_j=1$, and write $S_f=\sum_{j\in\mathcal J_f\cap\mathcal J_g}z_j$ for the firm's share of this nest. The following implication reduces the limiting response to averages within ownership groups.

\begin{corollary}[Nested-logit group responses]\label{corollary:nested.group.response}
Under the conditions of Proposition~\ref{prop:nested.logit.small.share}, fix a cost direction $h$ within nest $g$. Define
\[
\chi_f=\left[1+\frac{\sigma S_f}{(1-\sigma S_f)^2}\right]^{-1},
\qquad
\overline h_f=\frac{\sum_{k\in\mathcal J_f\cap\mathcal J_g}z_kh_k}{S_f}
\quad\text{for }S_f>0,
\]
and
\[
D_g=\sum_{f:S_f>0}S_f\chi_f,
\qquad
H_g=\sum_{f:S_f>0}\frac{S_f\chi_f}{D_g}\,\overline h_f.
\]
The limiting local response is
\begin{equation}\label{equation:nested.group.response}
(P_g^*h)_j=h_j+(1-\chi_f)(H_g-\overline h_f),
\qquad j\in\mathcal J_f\cap\mathcal J_g,\quad S_f>0.
\end{equation}
For a group with $S_f=0$, the response is simply $h_j$. Within the nest, cross-firm pass-through entries are nonnegative, within-firm off-diagonal entries are nonpositive, and $0<(P_g^*)_{jj}\leq1$.
\end{corollary}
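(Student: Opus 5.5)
The plan is to rederive the limiting within-nest pricing system of Proposition~\ref{prop:nested.logit.small.share} in reduced form, linearize it in the cost direction $h$, and solve the resulting linear system exactly by aggregating within firm--nest groups. I will not invert the product-level matrix $P_g^*$ directly.

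\textbf{Step 1: limiting first-order conditions.} Along the limit, write $q_l\propto z_l$ within nest $g$. The limiting semi-elasticities are
\[
\eta_{lj}\to\frac{\alpha}{1-\sigma}\bigl(\sigma z_j-\mathbf 1\{l=j\}\bigr)
\]
for products in the same nest, and they vanish across nests because the nest share goes to zero. Substituting into $1+q_j^{-1}\sum_{l}\Omega_{jl}m_lq_l\eta_{lj}=0$ gives, for $j\in\mathcal J_f\cap\mathcal J_g$,
\[
m_j=u+\sigma\sum_{l\in\mathcal J_f\cap\mathcal J_g}z_lm_l,\qquad u=\frac{1-\sigma}{\alpha}.
\]
This system yields the common group markup $m_{fg}^*$ stated before the corollary. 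I take it, together with nonsingularity of the limiting Jacobian, from Proposition~\ref{prop:nested.logit.small.share}. The limiting local response is then the derivative of this reduced system with respect to costs.

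\textbf{Step 2: linearize.} Set $dp_j=h_j+dm_j$. The right-hand side of the markup equation is common within a group, so every product in group $f$ shares the same increment $dm_j=dm_f$. The conditional-share response is
\[
dz_l=-\frac{\alpha}{1-\sigma}\,z_l\bigl(dp_l-\bar P\bigr),\qquad \bar P=\sum_k z_k\,dp_k.
\]
Aggregating over the group gives $\sum_{l\in f}z_l\,dp_l=S_f(\overline h_f+dm_f)$. Using $m/u=1/(1-\sigma S_f)$ and collecting the $dm_f$ terms, dividing by $1-\sigma S_f$ should produce
\[
dm_f=(1-\chi_f)\bigl(\bar P-\overline h_f\bigr).
\]
Let $x_f=\overline h_f+dm_f$ denote the group's average price response. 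Then $\bar P=\sum_fS_fx_f$, which gives the fixed-point equation $\bar P=\sum_fS_f\chi_f\overline h_f+(1-D_g)\bar P$. Since $D_g>0$, this solves to $\bar P=H_g$, and formula \eqref{equation:nested.group.response} follows. For a group with $S_f=0$, the product carries no weight in any sum and $\sigma S_f=0$, so $dm_f=0$ and the response is $h_j$.

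\textbf{Step 3: signs.} I read the entries of $P_g^*$ off \eqref{equation:nested.group.response} by setting $h=e_k$.
\begin{itemize}
\item \emph{Cross-firm entries.} For $k$ owned by $f'\neq f$, the entry is $(1-\chi_f)\chi_{f'}z_k/D_g\ge0$.
\item \emph{Within-firm off-diagonal entries.} For $k\neq j$ in the same group, the entry is $(1-\chi_f)z_k\bigl(\chi_f/D_g-1/S_f\bigr)$. This is $\le0$ because $S_f\chi_f\le D_g$.
\item \emph{Diagonal entries.} The entry is $1+(1-\chi_f)z_j\bigl(\chi_f/D_g-1/S_f\bigr)$. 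This is at most $1$ by the previous bound. It exceeds $1-(1-\chi_f)z_j/S_f\ge\chi_f>0$, using $z_j\le S_f$ and $\chi_f\in(0,1]$.
\end{itemize}

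\textbf{Main obstacle.} The delicate step is Step 1. Its entries must exactly match the limit described in Proposition~\ref{prop:nested.logit.small.share}: the cross-nest terms and the outside-share factor $1-s_g$ must drop out, and the limits of the pricing Jacobian and of $\Lambda$ must commute with inversion. I will rely on that proposition for these limit interchanges. If its limiting matrix is stated in product-level form instead, I will instead verify that the candidate response in \eqref{equation:nested.group.response} satisfies $\mathbf J^*_{\mathbf f}x+\Lambda^*h=0$, and conclude by uniqueness.
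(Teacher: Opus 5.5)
Your proposal is correct and follows essentially the paper's route: the paper solves $x=h+L^*x$ by averaging within each firm--nest group to get $\overline x_f=\chi_f\overline h_f+(1-\chi_f)X$, then averaging across the nest to get $X=H_g$, and it obtains the sign restrictions from the same explicit entries and the same bound $S_f\chi_f\leq D_g$. The only difference is that you re-derive the rows of $L^*$ by linearizing the limiting markup equations, whereas the paper takes $L^*$ directly from Proposition~\ref{prop:nested.logit.small.share}; your linearization reproduces that matrix exactly.
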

\begin{proof}
See Appendix~\ref{appendix:nested.group.response.proof}.
\end{proof}

The response has two parts: the product's own cost change and a common markup adjustment within its firm--nest group. That adjustment compares the group's average cost change with a weighted average across the nest. If only one product's cost rises, rival groups raise prices weakly, while its owner weakly lowers the prices of other products in the same group. For a common cost change, the two averages coincide and every price moves one for one. If a firm owns the entire nest, the averages coincide for every cost direction and $P_g^*=I$. A product with $z_j=0$ can still share its positive-share group's markup adjustment; an individual zero share is different from a zero group share.

Related spillover signs already appear in \citet[equation (6)]{moorthy2005general}. In his displayed nested-logit solution, each of two retailers owns all the brands in its nest, and there is no outside good. A cost increase on one brand at one retailer generates a negative within-retailer cross-price response and positive cross-retailer responses; a common cost change passes through fully. Here, nest shares vanish relative to an outside option, and ownership can be split within nests or span several nests. The identity block under complete ownership of a vanishing nest reflects the disappearance of between-nest diversion along this path. It does not contradict the cross-retailer responses in his market with positive retailer shares.

Computing a limiting response requires group averages rather than a product-level matrix inversion, while preserving the sign and size of the remaining portfolio spillovers. The formula is an implication of the specified nested benchmark, not an approximation guarantee at positive nest shares. Appendix~\ref{section:numerical.illustration} evaluates that separate approximation question.

\section{Applications \label{section:applications}}

A response matrix is useful across economic questions because different interventions can change the same pricing conditions. This section first specifies that common calculation, then applies it to demand shifts, incidence with revenue and margin data, and ownership changes. The distinction between an intervention's initial incentive and the matrix that transmits it follows the general incidence approach of \citet{weyl2013pass,adachi2022pass} and the local merger approach of \citet{jaffe2013first}. The contribution here is to evaluate and interpret that transmission through the demand and ownership structure developed above.

\subsection{General Changes in Pricing Incentives}\label{subsection:general.incentives}

Let a scalar parameter $\theta$ smoothly change demand, product-specific marginal costs, or the weights on internalized profits. Keep the product set fixed and suppose the pricing equations admit a differentiable equilibrium branch at the baseline, with $\mathbf J_{\mathbf f}$ nonsingular. The own-price demand slopes remain negative locally, and demand is twice continuously differentiable in prices and $\theta$. Write
\[
F_j(\mathbf p,\theta)=q_j(\mathbf p,\theta)
 +\sum_l\Omega_{jl}(\theta)(p_l-c_l(\theta))q_{l,j}(\mathbf p,\theta),
\qquad
f_j=-F_j/q_{j,j},
\]
where $\Omega_{jj}=1$. At equilibrium $F_j=0$, so differentiation at fixed prices gives the incentive loading
\begin{equation}\label{equation:general.policy.loading}
b_j:=\partial_\theta f_j
=-\frac{q_{j,\theta}
 +\sum_l\Omega_{jl}m_lq_{l,j\theta}
 +\sum_l\Omega_{jl,\theta}m_lq_{l,j}
 -\sum_l\Omega_{jl}c_{l,\theta}q_{l,j}}{q_{j,j}}.
\end{equation}
All terms are evaluated at the baseline. The derivative of the normalization drops out because it multiplies $F_j=0$. Implicit differentiation therefore yields
\begin{equation}\label{equation:general.policy.response}
\frac{d\mathbf p}{d\theta}=-\mathbf J_{\mathbf f}^{-1}\mathbf b.
\end{equation}
For a pure cost change, $\mathbf b=\Lambda\mathbf c_\theta$. More generally, if $\Lambda$ is invertible, the vector $h=\Lambda^{-1}\mathbf b$ is a cost-equivalent incentive and the response is $\Psi h$. Equation~\eqref{equation:general.policy.response} remains valid without this extra invertibility condition. The Neumann approximation in Section~\ref{section:finite.series.approximation} applies directly with $\mathbf b$ replacing $\Lambda h$.

The demand terms in \eqref{equation:general.policy.loading} identify what a demand-changing intervention requires. Its effect on sales, $q_{j,\theta}$, is only one component; the effect on price sensitivities, $q_{l,j\theta}$, also changes the marginal value of repricing. This distinction matters for smooth changes in product quality, consumer exposure, or other demand shifters. An experiment that measures sales effects at unchanged prices need not reveal the resulting pricing response. The formula specifies the additional derivatives required, conditional on the maintained pricing model.

\paragraph{A demand-translation illustration.}
Suppose the intervention has the particular form $\mathbf q(\mathbf p,\theta)=\mathbf Q(\mathbf p-v\theta)$, with fixed costs and ownership and a fixed vector $v$ in price units. Setting $\widetilde{\mathbf p}=\mathbf p-v\theta$ changes effective marginal costs to $\widetilde{\mathbf c}=\mathbf c-v\theta$, while preserving every markup and demand function in the transformed pricing problem. Hence
\begin{equation}\label{equation:demand.translation.response}
\frac{d\mathbf p}{d\theta}=(I-\Psi)v.
\end{equation}
For a shift in only product $k$'s willingness to pay, its own price response is $(1-\Psi_{kk})v_k$, and another product's response is $-\Psi_{jk}v_k$. The identity shows how the same matrix governs a demand-side intervention with a different initial incentive. Additive mean-utility changes $\beta_j\theta$ in logit with a common price coefficient $\alpha$ provide an example, with $v_j=\beta_j/\alpha$. Arbitrary demand shifts, including changes in price sensitivities, instead require the general loading above.

For interventions that change utility directly, $-\mathbf q^\top d\mathbf p$ measures only the price-mediated consumer-welfare effect. The direct utility effect must also enter a total welfare calculation. The next two applications hold demand primitives fixed and state the welfare interpretation needed there.

\subsection{Consumer Welfare with Revenue and Margin Data}

Revenues and relative margins can be more readily available than separate product prices and quantities. This motivates expressing incidence in percentage units, as in the related merger application of \citet{koh2024merger}. The transformation itself is exact. Using it with limited data, however, requires restrictions on pass-through: revenues and margins alone do not identify demand curvature or the equilibrium response matrix. Here logit and CES provide transparent examples of what those restrictions imply.

Assume positive baseline prices and marginal costs, and define proportional cost shifters by
\[
c_j(\tau_j)=c_j e^{\tau_j},
\qquad
\Psi^\tau\equiv\frac{\partial\log\mathbf p}{\partial\boldsymbol\tau^\top}.
\]
A differential $d\tau_j$ is a proportional marginal-cost change. Let $\Psi$ denote unit-cost pass-through, and write $\widehat{\mathbf p}=\operatorname{diag}(p_j)$ and $\widehat{\mathbf c}=\operatorname{diag}(c_j)$.

\begin{lemma}[Percentage pass-through and unit-cost pass-through]
\label{lem:percentage_pt}
At a regular interior equilibrium with positive prices and costs, and at $\boldsymbol\tau=\mathbf 0$,
\[
\Psi^\tau=\widehat{\mathbf p}^{-1}\Psi\widehat{\mathbf c}.
\]
\end{lemma}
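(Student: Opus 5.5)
The lemma is a chain rule that composes the unit-cost pass-through matrix with the change of variables from proportional shifters to unit costs and from price levels to log prices. I would first fix the equilibrium branch. Regularity means $\mathbf J_{\mathbf f}$ is nonsingular at the baseline. By the implicit function theorem, the equilibrium prices are then a differentiable function $\mathbf p(\mathbf c')$ of the marginal-cost vector $\mathbf c'$ in a neighborhood of $\mathbf c$. By \eqref{equation:pass.through.matrix.definition}, its Jacobian at $\mathbf c'=\mathbf c$ is $\Psi$, since a per-unit tax $\mathbf t$ is exactly a cost change $\mathbf c'=\mathbf c+\mathbf t$ evaluated at $\mathbf t=\mathbf 0$.

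Next, I would write the proportional-shifter branch as the composition $\boldsymbol\tau\mapsto\mathbf c(\boldsymbol\tau)=(c_je^{\tau_j})_j\mapsto\mathbf p(\mathbf c(\boldsymbol\tau))$. This map is differentiable near $\boldsymbol\tau=\mathbf 0$ because $\mathbf c(\cdot)$ is smooth with $\mathbf c(\mathbf 0)=\mathbf c$. The inner Jacobian is $\partial\mathbf c/\partial\boldsymbol\tau^\top=\operatorname{diag}(c_je^{\tau_j})$, which equals $\widehat{\mathbf c}$ at $\boldsymbol\tau=\mathbf 0$. The chain rule then gives $\partial\mathbf p/\partial\boldsymbol\tau^\top=\Psi\widehat{\mathbf c}$ at the baseline.

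Finally, because baseline prices are positive, they stay positive on a neighborhood of the baseline by continuity, so $\log\mathbf p$ is well defined and differentiable there. Its Jacobian with respect to $\mathbf p$ is $\widehat{\mathbf p}^{-1}$, and composing gives $\Psi^\tau=\widehat{\mathbf p}^{-1}\Psi\widehat{\mathbf c}$.

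There is no substantive obstacle. The one point to state carefully is that the proportional shift and the additive per-unit tax produce the same first-order cost change: $d\mathbf c=\widehat{\mathbf c}\,d\boldsymbol\tau$, with the $e^{\tau_j}$ factors equal to one at $\boldsymbol\tau=\mathbf 0$. Both are evaluated on the same locally unique equilibrium continuation guaranteed by the setup. So $\Psi$, originally defined through taxes at $\mathbf t=\mathbf 0$, applies unchanged, and no second-order terms arise.
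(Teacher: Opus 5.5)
Your proposal is correct and follows the paper's proof. Both arguments use $d\mathbf c=\widehat{\mathbf c}\,d\boldsymbol\tau$ at $\boldsymbol\tau=\mathbf 0$, the fact that an additive tax and an additive cost change enter the pricing system identically (so $d\mathbf p=\Psi\,d\mathbf c$), and $d\log\mathbf p=\widehat{\mathbf p}^{-1}d\mathbf p$; you simply make the implicit-function-theorem branch and the chain-rule composition more explicit.
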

\begin{proof}
See Appendix \ref{section:proof.of.lemma.4}.
\end{proof}

Each column of this expression converts a proportional cost change into a unit-cost change; each row converts the resulting price change into a proportional price change. It is a change of coordinates, rather than an identification result.

To interpret model comparisons such as the finite cost experiment in \citet[Section~4.2]{miravete2023pass}, both the shock and the reported units must be matched. A common proportional cost change has local price response $\Psi\mathbf c$ and percentage response $\Psi^\tau\mathbf1=\widehat{\mathbf p}^{-1}\Psi\mathbf c$. A finite cost change additionally requires an equilibrium calculation or control of the linearization error. The following model-specific limits clarify the relation between unit and percentage pass-through.

\begin{lemma}[Unit-cost pass-through under logit and CES]
\label{lem:unit_pt_logit_ces}
Fix a finite number of products, an ownership structure, and demand parameters. Consider sequences of regular interior equilibria with positive prices and costs and nonnegative markups.
\begin{enumerate}
\item[(i)] Under multinomial logit with a common price coefficient, if quantity shares satisfy $\|\mathbf s\|\to0$, then
\[
\Psi_{\mathrm{logit}}=I+O(\|\mathbf s\|).
\]
\item[(ii)] Under finite-market CES with substitution parameter $\sigma>1$, if expenditure shares satisfy $\|\mathbf w\|\to0$ and all relative prices $p_j/p_k$ remain uniformly bounded, then
\[
\Psi_{\mathrm{CES}}=\frac{\sigma}{\sigma-1}I+O(\|\mathbf w\|).
\]
\end{enumerate}
\end{lemma}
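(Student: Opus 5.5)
The plan is to compute $\mathbf J_{\mathbf f}$ and $\Lambda$ explicitly for each model, show that each equals its limit plus an $O(\|\mathbf s\|)$ or $O(\|\mathbf w\|)$ perturbation with bounded coefficients, and then use that matrix inversion is Lipschitz near a nonsingular limit. For logit, Example~\ref{example:logit.neumann.convergence} already supplies most of the structure. Write $\mathbf J_{\mathbf f}=\mathbf A(I-T)$ with $A_{jj}=-1/(1-s_j)$. The entries of $T$ are $T_{jk}=S_fs_k/(1-S_f)$ across firms and zero within firms. The loading $\Lambda$ has $\Lambda_{jj}=1$ and $\Lambda_{jk}=-s_k/(1-s_j)$ within firms, since $D_{j\to k}=s_k/(1-s_j)$ under logit. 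Then $\Psi=(I-T)^{-1}(-\mathbf A^{-1})\Lambda$.

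Each factor is $I+O(\|\mathbf s\|)$. Indeed $-\mathbf A^{-1}=\operatorname{diag}(1-s_j)$, and $\|T\|_\infty\le\max_fS_f=O(\|\mathbf s\|)$ by \eqref{equation:logit.all.interior.convergence}, so $(I-T)^{-1}=I+O(\|\mathbf s\|)$ by the Neumann bound. $\Lambda-I$ is bounded entrywise by $s_k/(1-s_j)$. The product is therefore $I+O(\|\mathbf s\|)$. No assumption on prices or markups enters, because the logit entries depend only on shares. The dimension is fixed, so all norms are equivalent.

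For CES I will use demand $q_j=Yp_j^{-\sigma}\xi_j/P^{1-\sigma}$ with expenditure shares $w_j=\xi_jp_j^{1-\sigma}/\sum_k\xi_kp_k^{1-\sigma}$ (finite market, price index over inside goods plus any outside term). The needed objects are $q_{j,j}/q_j=-\sigma/p_j+(\sigma-1)w_j/p_j$ and $q_{l,j}/q_l=(\sigma-1)w_j/p_j$. The second derivatives follow similarly, each cross term carrying a factor $w$. From the FOC I will derive the markup expression: the Lerner-type condition gives $m_f$ with $(p_j-c_j)/p_j=1/(\sigma-(\sigma-1)\sum_{l\in\mathcal J_f}w_l)$ plus $O(w)$ corrections. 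This keeps $m_l/p_l$ bounded. Next I substitute into Lemma~\ref{lemma:matrix.curvature.decomposition}. The diagonal gives $\kappa^j_{jj}=(\sigma+1)/\sigma+O(w)$, so $-2+\kappa^j_{jj}=-(\sigma-1)/\sigma+O(w)$. Each off-diagonal $K_{jk}$ is $O(w_k)\cdot p_j/p_k$.

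For $\mathbf C$ and $\Lambda$: the diversion ratio is $D_{j\to l}=O(w_l)\,p_j/p_l$, and its derivatives times $m_l$ are $O(w)\cdot(\text{price ratios})$. This is where the bounded relative price hypothesis is used, so that every entry is $O(\|\mathbf w\|)$ uniformly. Hence $\mathbf J_{\mathbf f}=-\tfrac{\sigma-1}{\sigma}I+O(\|\mathbf w\|)$ and $\Lambda=I+O(\|\mathbf w\|)$. The identity $X^{-1}-Y^{-1}=X^{-1}(Y-X)Y^{-1}$ then gives the claimed expansion. One step needs care: $\Psi$ is invariant to rescaling all prices, since it is unit pass-through under homogeneous demand, so only the ratios matter.

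The main obstacle is the CES bookkeeping. I must verify that every second-derivative and diversion-derivative term carries a factor of an expenditure share, and that the markup-weighted terms $m_l\,\partial_{p_k}D_{j\to l}$ stay bounded relative to $p_j/p_k$. I will organize this by working in log prices, where CES derivatives are polynomial in $\mathbf w$ with coefficients depending only on $\sigma$, and convert back with the bounded ratios $p_j/p_k$.
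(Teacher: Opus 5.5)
Your proposal is correct and is essentially the paper's argument. You show entrywise that $\Lambda=I+O(\cdot)$ and that $\mathbf J_{\mathbf f}$ is an $O(\|\mathbf s\|)$ or $O(\|\mathbf w\|)$ perturbation of $-I$ or $-\tfrac{\sigma-1}{\sigma}I$, using the equilibrium logit structure of Example~\ref{example:logit.neumann.convergence} and, for CES, the fact that every cross or ownership term carries an expenditure share times bounded relative prices and Lerner indices; you then invert around the nonsingular limit. The one loose remark is the appeal to ``homogeneous demand'': with a fixed outside term, finite-market CES is not homogeneous, but the remark is not needed, since your log-price bookkeeping already shows each Jacobian entry depends only on shares, relative prices, and $\mu_l=(p_l-c_l)/p_l\in[0,1)$, which is bounded directly by positive costs and nonnegative markups.
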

\begin{proof}
See Appendix \ref{section:proof.of.lemma.5}.
\end{proof}

These are sequences of equilibria, potentially supported by changing costs. Evaluating demand at arbitrary high prices with fixed costs does not by itself produce such a sequence. Fixed product dimension also matters: small individual shares need not eliminate portfolio effects when the number of products grows.

\begin{proposition}[Percentage pass-through under logit and CES]
\label{prop:percentage_pt_logit_ces}
Under the corresponding hypotheses of Lemma \ref{lem:unit_pt_logit_ces}, assume in addition that relative prices $p_j/p_k$ and cost shares $c_j/p_j$ remain uniformly bounded. Then
\begin{align*}
\Psi^\tau_{\mathrm{logit}}
&=\operatorname{diag}(1-\mu_j)+O(\|\mathbf s\|),\\
\Psi^\tau_{\mathrm{CES}}
&=I+O(\|\mathbf w\|),
\end{align*}
where $\mu_j=(p_j-c_j)/p_j$ is the relative margin.
\end{proposition}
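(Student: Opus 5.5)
The plan is to combine Lemma~\ref{lem:percentage_pt} with Lemma~\ref{lem:unit_pt_logit_ces}. By Lemma~\ref{lem:percentage_pt}, $\Psi^\tau=\widehat{\mathbf p}^{-1}\Psi\widehat{\mathbf c}$. Writing $\Psi=\Psi^{\lim}+R$, where $\Psi^{\lim}$ is the limit from Lemma~\ref{lem:unit_pt_logit_ces}, gives
\[
\Psi^\tau=\widehat{\mathbf p}^{-1}\Psi^{\lim}\widehat{\mathbf c}+\widehat{\mathbf p}^{-1}R\,\widehat{\mathbf c}.
\]
The first term is evaluated exactly using the diagonal structure of the limit. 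The second term is controlled by showing that conjugating by the diagonal scalings does not inflate the $O(\cdot)$ remainder.

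\emph{Remainder.} The $(j,k)$ entry of $\widehat{\mathbf p}^{-1}R\,\widehat{\mathbf c}$ is
\[
R_{jk}\,\frac{c_k}{p_j}=R_{jk}\,\frac{c_k}{p_k}\,\frac{p_k}{p_j}.
\]
By hypothesis, the cost shares $c_k/p_k$ and the relative prices $p_k/p_j$ are uniformly bounded along the sequence. With the dimension fixed, every entry is therefore bounded by a constant times $|R_{jk}|$. Since all norms on a fixed-dimensional matrix space are equivalent, the remainder stays $O(\|\mathbf s\|)$ in the logit case and $O(\|\mathbf w\|)$ in the CES case. The prices may diverge (and costs may vary) along the sequence, but only the ratios appear, so absolute price levels never enter. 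This is exactly why the extra boundedness hypotheses are imposed.

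\emph{Logit leading term.} Here $\Psi^{\lim}=I$, so the leading term is $\operatorname{diag}(c_j/p_j)=\operatorname{diag}(1-\mu_j)$ by the definition $\mu_j=(p_j-c_j)/p_j$. The relative margins need not converge, so the statement should be read as an expansion whose leading term changes along the sequence. It is bounded because the cost shares are bounded; with nonnegative markups it in fact lies in $[0,1]$.

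\emph{CES leading term.} Here $\Psi^{\lim}=\frac{\sigma}{\sigma-1}I$, so the leading term is $\operatorname{diag}\bigl(\frac{\sigma}{\sigma-1}\,c_j/p_j\bigr)$. The remaining task is to show $c_j/p_j=(\sigma-1)/\sigma+O(\|\mathbf w\|)$. I expect this to be the main obstacle, because it requires equilibrium markup information rather than only Lemma~\ref{lem:unit_pt_logit_ces}.

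To handle it, I would reuse the CES first-order conditions from the proof of Lemma~\ref{lem:unit_pt_logit_ces} (Appendix~\ref{section:proof.of.lemma.5}). With finite-market CES and multiproduct ownership, firm $f$'s pricing condition gives a common Lerner index
\[
\frac{p_j-c_j}{p_j}=\frac{1}{\sigma-(\sigma-1)W_f},
\]
where $W_f$ is the firm's expenditure share, or a comparable expression in which the own elasticity is $\sigma$ corrected by terms proportional to expenditure shares. Since $W_f=O(\|\mathbf w\|)$ at fixed dimension, $\mu_j=1/\sigma+O(\|\mathbf w\|)$. Hence $c_j/p_j=(\sigma-1)/\sigma+O(\|\mathbf w\|)$, and multiplying by $\sigma/(\sigma-1)$ gives $I+O(\|\mathbf w\|)$.

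If the exact markup formula under this ownership structure turns out to be messy, a fallback is to bound the diversion terms in the first-order condition directly by $O(\|\mathbf w\|)$. This uses the bounded relative prices, exactly as the ownership conditions were verified for CES in Appendix~\ref{appendix:model.specific.small.share.limits}.
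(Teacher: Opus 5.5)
Your proposal is correct and follows the paper's proof. Both use the factorization $\Psi^\tau=\widehat{\mathbf p}^{-1}\Psi\widehat{\mathbf c}$, the same entrywise rescaling $R_{jk}(c_k/p_k)(p_k/p_j)$ of the remainder, and, for CES, a markup step taken from the pricing equation divided by $p_j$.

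Your common-Lerner formula is in fact exact for this CES system. Dividing firm $f$'s condition for product $j$ by $p_jq_j$ gives $\sigma\mu_j=1+(\sigma-1)\sum_{l\in\mathcal J_f}\mu_lw_l$, so $\mu_j=1/[\sigma-(\sigma-1)W_f]$. This is slightly sharper than the paper's route, which bounds the ownership sum using bounded relative margins; that bounding argument is exactly your stated fallback.
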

\begin{proof}
See Appendix \ref{section:proof.of.proposition.4}.
\end{proof}

The comparison concerns limiting responses. In logit, a unit-cost increase is approximately passed through one-for-one, so percentage pass-through is the cost share $1-\mu_j$. Under CES, unit-cost pass-through approaches $\sigma/(\sigma-1)$ while the cost share approaches $(\sigma-1)/\sigma$; these factors cancel. Bounded relative prices and cost shares control the rescaling error without requiring absolute prices to stay bounded.

\paragraph{A finite-share CES benchmark.}
Exact percentage pass-through of one holds under the separate assumption of exogenous isoelastic residual demand $q=Ap^{-\sigma}$ with $A$ fixed, since optimal price is then $\sigma c/(\sigma-1)$. Single-product ownership alone does not impose this assumption. With one inside CES product and a fixed outside alternative, demand and expenditure share are
\[
q(p)=\frac{Bap^{-\sigma}}{1+ap^{1-\sigma}},
\qquad
w=\frac{ap^{1-\sigma}}{1+ap^{1-\sigma}}.
\]
The firm's own elasticity magnitude is $E=\sigma-(\sigma-1)w$, not $\sigma$. The first-order condition gives $c=p(E-1)/E$. Since
\[
\frac{dw}{d\log p}=-(\sigma-1)w(1-w),
\qquad
\frac{d\log c}{d\log p}
=1+\frac{(\sigma-1)^2w(1-w)}{E(E-1)}
=\frac{\sigma}{E},
\]
exact percentage pass-through is
\begin{equation}
\label{eq:finite_ces_percentage}
\frac{d\log p}{d\log c}
=\frac{E}{\sigma}
=1-\frac{\sigma-1}{\sigma}w.
\end{equation}
It is below one at a positive inside share and tends to one as that share vanishes. A cost-induced price increase lowers the firm's share, making residual demand more elastic and reducing the percentage markup. For example, $\sigma=2$, $B=2$, $a=1$, $p=1$, and $c=1/3$ give $w=1/2$, unit-cost pass-through $9/4$, and percentage pass-through $3/4$. In a market with several firms, this scalar calculation holds rivals' prices fixed; full equilibrium requires the matrix response. Figure \ref{fig:v2_ces} illustrates the scalar formula.

\paragraph{Local consumer-welfare accounting.}
Let $R_j=p_jq_j$. Write $dCS$ for the consumer-welfare differential in baseline monetary units. Under quasilinear preferences this is the usual consumer-surplus differential. With income effects, including the standard expenditure-based CES interpretation, it is the first-order money-metric welfare change at the baseline; no globally integrable Marshallian consumer-surplus function is presumed. In either interpretation,
\[
dCS=-\mathbf q^\top d\mathbf p
=-\mathbf R^\top\Psi^\tau d\boldsymbol\tau.
\]
With uniformly bounded revenues along the fixed-dimensional sequences above,
\begin{align}
dCS_{\mathrm{logit}}
&=-\sum_j R_j(1-\mu_j)\,d\tau_j
+O(\|\mathbf s\|\,\|d\boldsymbol\tau\|),
\label{eq:cs_logit_percentage}\\
dCS_{\mathrm{CES}}
&=-\sum_j R_j\,d\tau_j
+O(\|\mathbf w\|\,\|d\boldsymbol\tau\|).
\label{eq:cs_ces_percentage}
\end{align}
These remainders describe error in the local welfare derivative caused by the small-share approximation. Replacing a differential by a finite cost change introduces a separate linearization error. Even with a single inside CES product, the finite-share revenue weight is the factor in \eqref{eq:finite_ces_percentage}, rather than one.

The formulas show what revenue and margin data can support under maintained demand restrictions. They do not establish those restrictions or reveal the share-based approximation error from revenues and margins alone. Moreover, $\mu_j$ is a product's price-minus-marginal-economic-cost margin divided by price. An accounting profit margin, or an average over products priced separately, is not automatically this object. An application must justify the mapping from the observed accounts to products, pricing portfolios, and marginal costs.

\subsection{Local Merger Analysis}
\label{subsec:merger_analysis}

A merger changes the sales that a firm internalizes when setting each price. The resulting incentive shift can be written in the same units as a cost shock. This connection underlies upward-pricing-pressure analysis \citep{farrell2010antitrust} and the multiproduct first-order approach of \citet{jaffe2013first}. The purpose here is to express the distinction between cost and merger responses through the demand and ownership decomposition, and to clarify when its approximations apply.

\paragraph{The incentive change and its exact local response.}
Hold demand and marginal costs fixed, and let $\Omega^{\mathrm{pre}}$ and $\Omega^{\mathrm{post}}$ describe premerger and postmerger ownership. Let $\mathbf p^0$ be a regular premerger equilibrium and $\mathbf m=\mathbf p-\mathbf c$ the vector of absolute markups. The shift in the normalized pricing equations is
\begin{align}
g_j(\mathbf p)
&=f_j(\mathbf p;\Omega^{\mathrm{post}})-f_j(\mathbf p;\Omega^{\mathrm{pre}})\notag\\
&=\sum_{l\ne j}(\Omega^{\mathrm{post}}_{jl}-\Omega^{\mathrm{pre}}_{jl})
(p_l-c_l)D_{j\to l}(\mathbf p),
\label{equation:upward.pricing.pressure}
\end{align}
or, writing $\Lambda^a=\Lambda(\mathbf p;\Omega^a)$,
\begin{equation}
\label{equation:upp.vector}
\mathbf g(\mathbf p)=(\Lambda^{\mathrm{pre}}-\Lambda^{\mathrm{post}})\mathbf m.
\end{equation}
For substitutes and positive markups, newly internalized diversion creates upward pricing pressure on the merging firms' products. Products with unchanged ownership incentives have zero entries in $\mathbf g$, but their equilibrium prices can still respond through the pricing Jacobian.

To distinguish a derivative from a completed merger, introduce an incentive parameter $\eta$:
\[
\mathbf f(\mathbf p,\eta)
=\mathbf f(\mathbf p;\Omega^{\mathrm{pre}})+\eta\mathbf g(\mathbf p).
\]
At $\eta=0$ this is the premerger system, and at $\eta=1$ it is the postmerger system. Between these endpoints it interpolates the newly internalized profit weights. The implicit-function theorem gives the exact derivative at the baseline,
\begin{equation}
\label{equation:merger.price.effects}
\left.\frac{d\mathbf p}{d\eta}\right|_{\eta=0}
=-[\mathbf J^{\mathrm{pre}}(\mathbf p^0)]^{-1}\mathbf g(\mathbf p^0),
\qquad
\mathbf J^a=\frac{\partial\mathbf f(\mathbf p;\Omega^a)}{\partial\mathbf p^\top}.
\end{equation}
If $\Lambda^{\mathrm{pre}}(\mathbf p^0)$ is invertible, define the cost-equivalent vector
\[
\mathbf t^{\mathrm{merger}}
=[\Lambda^{\mathrm{pre}}(\mathbf p^0)]^{-1}\mathbf g(\mathbf p^0).
\]
Then \eqref{equation:merger.price.effects} equals $\Psi^{\mathrm{pre}}\mathbf t^{\mathrm{merger}}$. For premerger single-product firms, $\Lambda^{\mathrm{pre}}=I$ and this vector is simply $\mathbf g$. The normalized incentive shift and its local response remain well defined even when a cost-equivalent vector cannot be recovered by inversion.

\paragraph{Why the postmerger Jacobian differs.}
\citet{jaffe2013first} distinguish the matrix that transmits merger incentives from ordinary premerger cost pass-through and study conditions under which the latter approximates the former. In the normalization used here, one Newton step for the postmerger system at $\mathbf p^0$ is
\begin{equation}
\label{eq:merger_post_newton}
\Delta\mathbf p_{\mathrm N}
=-[\mathbf J^{\mathrm{pre}}(\mathbf p^0)+\mathbf g_p(\mathbf p^0)]^{-1}
\mathbf g(\mathbf p^0),
\qquad
\mathbf g_p=\frac{\partial\mathbf g}{\partial\mathbf p^\top},
\end{equation}
provided the displayed matrix is nonsingular. Since the demand component $\mathbf K$ is invariant to ownership at fixed prices, Lemma \ref{lemma:matrix.curvature.decomposition} implies
\[
\mathbf g_p(\mathbf p)
=\mathbf C(\mathbf p;\Omega^{\mathrm{post}})
-\mathbf C(\mathbf p;\Omega^{\mathrm{pre}}).
\]
Equivalently, its $(j,k)$ entry is
\[
(g_p)_{jk}
=(\Lambda^{\mathrm{pre}}-\Lambda^{\mathrm{post}})_{jk}
+\sum_l m_l\frac{\partial(\Lambda^{\mathrm{pre}}-\Lambda^{\mathrm{post}})_{jl}}{\partial p_k}.
\]
The first term records the change in the margins on newly internalized sales; the second records how diversion changes with prices. This is the ownership part of the difference between the two Jacobians.

The exact derivative in \eqref{equation:merger.price.effects} is invariant to a smooth, nonsingular row rescaling of the pricing equations at their baseline zero. The finite Newton approximation \eqref{eq:merger_post_newton} need not be: premerger prices generally do not solve the postmerger equations, so derivatives of a price-dependent normalization enter its Jacobian. Thus \eqref{eq:merger_post_newton} specifies an approximation in this paper's normalization; it is not an assertion that all formulations of a postmerger Newton step coincide.

\paragraph{Approximation, information, and welfare.}
The approximation results earlier in the paper can simplify the inverse in a local merger calculation when their conditions hold. Logit, for example, has an identity limit for unit-cost pass-through along the fixed-dimensional small-share sequences considered here. Such limits help interpret the simulation evidence on pricing-pressure approximations in \citet{miller2017upward}. They do not identify curvature from slopes alone, or establish that an elasticity-based diagonal correction is more accurate than the identity. A retained feedback term is useful when its approximation error is controlled or its performance is validated in the relevant demand environment.

Three distinct uncertainties remain in an application: measurement of the incentive vector and pricing derivatives, approximation of the local inverse, and extrapolation from the local derivative to the completed merger. A good approximation to the first two objects does not bound the third. If an equilibrium path exists and remains regular over $\eta\in[0,1]$, its exact price change satisfies
\[
\mathbf p(1)-\mathbf p(0)
=-\int_0^1
[\mathbf J^{\mathrm{pre}}(\mathbf p(\eta))+\eta\mathbf g_p(\mathbf p(\eta))]^{-1}
\mathbf g(\mathbf p(\eta))\,d\eta.
\]
The baseline formula replaces this changing integrand by its value at zero. A finite-merger application therefore requires control of that movement or a check against solved counterfactual equilibria.

For a small incentive change, the consumer-welfare differential is $dCS=-\mathbf q^\top d\mathbf p$, under the monetary interpretation given above. With unchanged costs, the change in aggregate operating profit is
\[
dPS=\mathbf q^\top d\mathbf p+\mathbf m^\top\mathbf G\,d\mathbf p,
\qquad
\mathbf G=\frac{\partial\mathbf q}{\partial\mathbf p^\top}.
\]
The quantity-adjustment term cannot generally be omitted. Nor can the envelope theorem for each premerger firm's own pricing choices be applied to the combined profit of the newly merged entity at premerger prices: those prices need not maximize its joint profit. Under quasilinear preferences, total surplus changes by $dW=\mathbf m^\top\mathbf G\,d\mathbf p$ when costs and product characteristics are unchanged. These accounting identities keep the welfare calculation separate from the approximation used to obtain the price response.

\section{Conclusion\label{section:conclusion}}

The same change in pricing incentives can have different market-wide effects depending on how demand and ownership transmit it across products. This paper develops a tractable representation of that transmission under multiproduct Bertrand competition. The normalized pricing system makes own curvature, changes in substitution, and portfolio internalization explicit. The approximation and limiting results then establish when a simpler calculation retains the interactions needed for incidence analysis.

Two findings guide that simplification. Individual profit curvature describes a firm's adjustment with rivals fixed, while equilibrium feedback determines how those adjustments combine. Small unconditional shares likewise leave open whether consumers continue to substitute within a nest or among products purchased by a selected group of consumers. The logit and CES benchmarks yield diagonal limits under the stated experiments, while nested logit and gamma mixing can preserve cross-product responses. In the nested case, a group-average formula retains those spillovers without a product-level matrix inversion.

These results connect the general use of pass-through in first-order analysis to explicit features of empirical demand models. Once an intervention's effect on pricing incentives is specified, the same local response system applies to changes in costs, demand, or ownership. The series remainder and residual bound assess approximation of that response; model-specific limits reveal which demand restrictions support familiar incidence benchmarks. The welfare and merger applications illustrate how these ingredients enter substantive calculations, with the direct effects of an intervention and the extrapolation to a finite change treated separately.

This interpretation helps explain how specification-sensitive demand derivatives enter incidence predictions and how the effects of omitted interactions can be assessed. The framework also identifies an empirical priority. A policy prediction may depend on mixed demand derivatives or incentive shifts that the available variation only weakly disciplines. Establishing which conclusions remain valid under that uncertainty requires additional identification or robustness results. The characterization here provides a basis for that work by specifying the relevant combinations of demand shape, ownership, and equilibrium feedback.

\paragraph{Disclosure of AI Use}

The author used OpenAI's ChatGPT to assist with literature searches, mathematical checks, coding, and drafting and revising the text. The author retains sole responsibility for the paper's content, including its arguments, results, references, and any errors.

\singlespacing
\bibliographystyle{econ}
\bibliography{references}
\doublespacing

\part*{Appendix}
\appendix
\section{Proofs \label{section:proofs}}

The proofs follow the order of the main results. The first three establish the pricing decomposition and distinguish individual profit curvature from equilibrium feedback. The next group establishes the approximation and demand-profile results. The final proofs justify the incidence transformations and their logit and CES limits.

\subsection{Proof of Lemma \ref{lemma:matrix.curvature.decomposition}}
\label{section:proof.of.lemma.2}
Write $q_{j,k}=\partial q_j/\partial p_k$ and $q_{j,kl}=\partial^2q_j/\partial p_k\partial p_l$. The normalized equation is
\[
f_j=-\frac{q_j}{q_{j,j}}-(p_j-c_j)+\sum_{l\ne j}\Omega_{jl}(p_l-c_l)D_{j\to l}.
\]
The quotient rule gives
\[
\partial_{p_j}\!\left(-\frac{q_j}{q_{j,j}}\right)=-1+\frac{q_jq_{j,jj}}{q_{j,j}^2},\qquad
\partial_{p_k}\!\left(-\frac{q_j}{q_{j,j}}\right)=-\frac{q_{j,k}}{q_{j,j}}+\frac{q_jq_{j,jk}}{q_{j,j}^2}\quad(k\ne j).
\]
The own markup contributes another $-1$ on the diagonal. Differentiating the ownership term yields
\[
\mathbf1\{k\ne j\}\Omega_{jk}D_{j\to k}+\sum_{l\ne j}\Omega_{jl}(p_l-c_l)\frac{\partial D_{j\to l}}{\partial p_k}.
\]
Together these derivatives are exactly $-2I+\mathbf K+\mathbf C$. When $q_{j,k}\ne0$, substituting $\delta_{jk}=-q_{j,k}/q_{j,j}$ and $\kappa^j_{kj}=q_jq_{j,kj}/(q_{j,k}q_{j,j})$ gives $K_{jk}=\delta_{jk}(1-\kappa^j_{kj})$. The expanded expression remains defined when the cross slope is zero. \qed

\paragraph{Mixed-logit buyer composition.}
For \eqref{equation:mixture.composition}, suppose $s_{ij}>0$ is twice continuously differentiable on an open price neighborhood. Assume integrable envelopes justify differentiating its integral twice and that $s_{ij}\|\nabla\log s_{ij}\|^2$ is integrable there. Let $\bar s_j=\int s_{ij}\,dF$, $w_{ij}=s_{ij}/\bar s_j$, and $g_{ij}=\nabla_{\mathbf p}\log s_{ij}$. Since $q_j=M\bar s_j$ with fixed $M$,
\[
\nabla_{\mathbf p}\log q_j=\mathbb E_j[g_{ij}],
\qquad
\partial_{p_k}w_{ij}=w_{ij}\bigl(g_{ij,k}-\mathbb E_j[g_{ij,k}]\bigr).
\]
Differentiating the first identity, including the derivative of these weights, yields
\[
\partial_{p_kp_l}\log q_j
=\mathbb E_j[\partial_{p_kp_l}\log s_{ij}]
+\mathbb E_j[g_{ij,k}g_{ij,l}]
-\mathbb E_j[g_{ij,k}]\mathbb E_j[g_{ij,l}],
\]
which proves \eqref{equation:mixture.composition}. Writing $\eta_{jj}=\partial_{p_j}\log q_j<0$, the quotient rule also gives
\[
K_{jj}=1+\frac{\partial_{p_jp_j}\log q_j}{\eta_{jj}^2},
\qquad
K_{jk}=\frac{\partial_{p_jp_k}\log q_j}{\eta_{jj}^2}\quad(k\ne j).
\]
Thus the mixture identity supplies exactly the log-demand derivatives entering the demand component of the pricing Jacobian.

For a specialization, let type $i$'s utility from product $j$ be $v_{ij}-\alpha_i p_j+\epsilon_{ij}$, with $\alpha_i>0$, the outside option's deterministic utility normalized to zero, and independent type-I extreme-value choice shocks. The joint distribution of $(\alpha_i,v_{i1},\ldots,v_{iJ})$ is otherwise unrestricted subject to the stated integrability conditions. Set $a_{ij}=\alpha_i(1-s_{ij})$. Individual logit derivatives give $g_{ij,j}=-a_{ij}$, $g_{ij,k}=\alpha_i s_{ik}$ for $k\ne j$, and
\[
\partial_{p_jp_j}\log s_{ij}=-\alpha_i^2s_{ij}(1-s_{ij}),
\qquad
\partial_{p_jp_k}\log s_{ij}=\alpha_i^2s_{ij}s_{ik}\quad(k\ne j).
\]
Consequently,
\begin{align*}
K_{jj}=\kappa_{jj}^j
&=1+\frac{\operatorname{Var}_j(a_{ij})-\mathbb E_j[\alpha_i^2s_{ij}(1-s_{ij})]}{\mathbb E_j[a_{ij}]^2},\\
K_{jk}
&=\frac{\mathbb E_j[\alpha_i^2s_{ij}s_{ik}]-\operatorname{Cov}_j(a_{ij},\alpha_i s_{ik})}{\mathbb E_j[a_{ij}]^2},\qquad k\ne j.
\end{align*}
The own-price expression balances a nonnegative variance component against individual logit log-concavity. A change in the preference distribution can alter both components; the expression does not assert a monotone effect of every increase in heterogeneity. The cross-price expression shows why own curvature alone does not determine the demand-side interaction terms. These are finite-price derivative identities, not assertions about the size of equilibrium spillovers in a particular estimated model. The general mixture identity also allows nonlinear price utility; its individual Hessian then includes the corresponding utility-curvature terms. \qed

\subsection{Proof of Lemma \ref{lemma:directional.soc}}
\label{section:proof.of.lemma.1}
Along the path $\mathbf p_f+tv$, profits are $\Phi_{f,v}(t)=\sum_{j\in\mathcal J_f}(m_j+tv_j)\phi_{j,v}(t)$. Differentiating twice gives
\[
\Phi''_{f,v}(0)=2\sum_jv_j\phi'_{j,v}(0)+\sum_jm_j\phi''_{j,v}(0)
=-2S_f(v)+M_f(v)\kappa_f(v).
\]
The equality $\sum_jv_j\phi'_{j,v}(0)=v^\top\mathbf J_{\mathbf q_f}v=-S_f(v)$ uses that a quadratic form depends only on the symmetric part. The aggregate definition of $\kappa_f$ applies whenever $M_f(v)>0$, including directions with zero product-level first derivatives. Factoring out $M_f(v)$ yields the identity and the equivalent directional inequality.

If all markups are positive, $M_f(v)=(\mathbf J_{\mathbf q_f}v)^\top\operatorname{diag}(m_j/q_j)(\mathbf J_{\mathbf q_f}v)$. Nonsingularity of $\mathbf J_{\mathbf q_f}$ then gives $M_f(v)>0$ for every nonzero $v$. The equivalence with negative semidefiniteness of the profit Hessian follows by testing its quadratic form in every direction. \qed

\subsection{Proof of Proposition \ref{prop:directional.pass.through}}
\label{section:proof.of.proposition.1}
Let $H_f=\nabla^2_{\mathbf p_f\mathbf p_f^\top}\Pi_f$ and $\mathbf F_f=\nabla_{\mathbf p_f}\Pi_f$. Since $\mathbf f_f=\mathbf W_f^{-1}\mathbf F_f$, differentiation at a stationary point removes the derivative of the normalizing matrix and gives
\[
\mathbf W_f\mathbf J_{\mathbf f_f}=H_f.
\]
In particular, strict negative definiteness of $H_f$ makes the own-firm pricing block invertible. The implicit function theorem gives a differentiable branch of stationary prices with rivals fixed, and continuity of the Hessian makes these prices strict local profit maxima. Premultiplying its fixed-rival response equation by $\hat v_f^\top\mathbf W_f$ and writing $d\mathbf p_f=a\hat v_f$, where $a=\|d\mathbf p_f\|_{\mathbf W_f}>0$, gives
\[
a(-\hat v_f^\top H_f\hat v_f)=\hat v_f^\top\mathbf W_f\Lambda_fd\mathbf t_f.
\]
By Lemma \ref{lemma:directional.soc}, $-\hat v_f^\top H_f\hat v_f=M_f(\hat v_f)[2R_f(\hat v_f)-\kappa_f(\hat v_f)]>0$. Division proves \eqref{equation:directional.norm.identity}. \qed

\paragraph{A globally defined example of strategic feedback.}
Let $0\leq\beta<1$. A quasilinear consumer chooses nonnegative quantities to maximize $u_\beta(\mathbf q)-\mathbf p^\top\mathbf q$, where
\[
u_\beta(q_1,q_2)
=\frac{3-2\beta}{1-\beta}(q_1+q_2)
-\frac{q_1^2+2\beta q_1q_2+q_2^2}{2(1-\beta^2)}.
\]
The utility Hessian is negative definite, and the objective tends to minus infinity as $\|\mathbf q\|\to\infty$. Consumer demand is therefore uniquely defined at every nonnegative price vector. When both quantities are positive, the consumer's first-order conditions give
\[
q_i=3-2\beta-p_i+\beta p_k,\qquad k\ne i.
\]
At boundaries, demand is determined by the nonnegativity constraints in the consumer's problem; the affine expressions are not extended to negative quantities. In a neighborhood of $(p_1,p_2)=(2,2)$, both goods are purchased, and demand is smooth, downward sloping, and substitutable.

Each firm sells one product, has marginal cost one, and can choose any nonnegative price. Fix its rival's price at $p_k=2$. For $1\leq p_i\leq3$, demand for its own good is $3-p_i$, while its rival's demand is $1+\beta(p_i-2)>0$. At $p_i\geq3$, its own demand is zero. Thus, for prices at or above cost, its profit is
\[
\pi_i(p_i,2)=
\begin{cases}
(p_i-1)(3-p_i)=1-(p_i-2)^2,&1\leq p_i\leq3,\\
0,&p_i\geq3.
\end{cases}
\]
Prices below cost yield nonpositive profit. Hence $p_i=2$ is the unique globally optimal response to $p_k=2$, proving that $(2,2)$ is a Nash equilibrium over unrestricted nonnegative prices. Baseline quantities and markups are one, $q_{i,i}=-1$, $q_{i,ii}=0$, and $\pi_{i,ii}=-2$.

Within the positive-demand region, the normalized pricing condition is
\[
f_i=3-2\beta+c_i-2p_i+\beta p_k=0.
\]
The Jacobian has diagonal $-2$ and off-diagonal $\beta$, with eigenvalues $-2\pm\beta<0$. Differentiation gives
\[
\Psi=\frac{1}{4-\beta^2}\begin{pmatrix}2&\beta\\ \beta&2\end{pmatrix}.
\]
These derivatives describe a branch of Nash equilibria, not merely stationary points. To see this, at the baseline both quantities are strictly positive when firm $i$ prices at its cost and the rival charges two: they are $2$ and $1-\beta$. This remains true for costs and rival prices sufficiently near the baseline. Starting at cost, an increase in $p_i$ lowers its own demand and raises rival demand until its own quantity reaches zero. Profit on this interval is a strictly concave quadratic; above the choke price it is zero, and below cost it is nonpositive. The nearby solution to the pricing conditions is therefore the unique global best response to the nearby rival price. The matrix formula accordingly gives equilibrium cost derivatives.

For $\beta=3/4$, a finite common-cost experiment can also be verified directly. Set $c_i=1+t$ for $t\in[0,1]$ and $p_i^*=2+4t/5$. At these prices, each quantity and markup is $1-t/5>0$. Holding the rival at $p_k^*$, demand evaluated at $p_i=c_i$ is $2-2t/5$ for firm $i$ and $1/4-t/20$ for its rival, both positive. For $p_i\geq c_i$, own demand therefore remains affine until it reaches zero, while rival demand stays positive. The unique maximizer of the resulting profit quadratic is
\[
p_i=\frac{(3/2)+(3/4)p_k^*+c_i}{2}
=2+\frac45t.
\]
It yields positive profit and dominates every below-cost or zero-demand deviation. This proves the asserted Nash equilibrium and exact four-fifths common-cost pass-through throughout $t\in[0,1]$.

\subsection{Proof of Proposition \ref{prop:neumann.structural}}
\label{section:proof.of.proposition.2}
The factorization $\mathbf J_{\mathbf f}=(I-\Gamma)\mathbf A$, with $\Gamma=-\mathbf B\mathbf A^{-1}$, gives $\Psi=-\mathbf A^{-1}(I-\Gamma)^{-1}\Lambda$. Since $\rho(\Gamma)<1$, the Neumann series for $(I-\Gamma)^{-1}$ converges. Subtracting the terms with powers $0$ through $K$ leaves
\[
\sum_{m=K+1}^{\infty}\Gamma^m=\Gamma^{K+1}(I-\Gamma)^{-1},
\]
which proves the exact remainder, preserving the order of $\mathbf A^{-1}$ and $\Lambda$. If $\|\Gamma\|\leq\eta<1$, submultiplicativity and the geometric series bound this tail by $\eta^{K+1}/(1-\eta)$. Multiplying by the bounds on $\mathbf A^{-1}$ and $\Lambda$, or on $\Lambda h$ for a fixed direction, proves the stated inequalities. \qed

\subsection{Proof of Lemma \ref{lemma:semi.elasticity.representation} \label{section:proof.of.lemma.3}}
Since $\eta_{jj}(\mathbf p)=\frac{\partial_j q_j(\mathbf p)}{q_j(\mathbf p)}$, differentiating with respect to \(p_j\) gives
\[
\partial_j\eta_{jj}(\mathbf p)
=
\frac{q_j(\mathbf p)\partial_{jj}q_j(\mathbf p)-(\partial_j q_j(\mathbf p))^2}{q_j(\mathbf p)^2}.
\]
Dividing by $\eta_{jj}(\mathbf p)^2=\frac{(\partial_j q_j(\mathbf p))^2}{q_j(\mathbf p)^2}$ yields
\[
\frac{\partial_j\eta_{jj}(\mathbf p)}{\eta_{jj}(\mathbf p)^2}
=
\frac{q_j(\mathbf p)\partial_{jj}q_j(\mathbf p)}{(\partial_j q_j(\mathbf p))^2}-1
=
\kappa_{jj}^j(\mathbf p)-1,
\]
which gives the first identity.

Next, for \(k\neq j\), $\eta_{jk}(\mathbf p)=\frac{\partial_k q_j(\mathbf p)}{q_j(\mathbf p)}$, so
\[
\partial_j\eta_{jk}(\mathbf p)
=
\frac{q_j(\mathbf p)\partial_{kj}q_j(\mathbf p)-\partial_k q_j(\mathbf p)\partial_j q_j(\mathbf p)}{q_j(\mathbf p)^2}.
\]
Dividing by \(\eta_{jj}(\mathbf p)^2=(\partial_j q_j(\mathbf p))^2/q_j(\mathbf p)^2\) gives
\[
\frac{\partial_j\eta_{jk}(\mathbf p)}{\eta_{jj}(\mathbf p)^2}
=
\frac{q_j(\mathbf p)\partial_{kj}q_j(\mathbf p)}{(\partial_j q_j(\mathbf p))^2}
-\frac{\partial_k q_j(\mathbf p)}{\partial_j q_j(\mathbf p)}.
\]
Using $\delta_{jk}(\mathbf p) = -\frac{\partial_k q_j(\mathbf p)}{\partial_j q_j(\mathbf p)}$ and $\delta_{jk}(\mathbf p)\kappa_{kj}^j(\mathbf p) = -\frac{q_j(\mathbf p)\partial_{kj}q_j(\mathbf p)}{(\partial_j q_j(\mathbf p))^2}$, we obtain
\[
\frac{\partial_j\eta_{jk}(\mathbf p)}{\eta_{jj}(\mathbf p)^2}
=
-\delta_{jk}(\mathbf p)\kappa_{kj}^j(\mathbf p)+\delta_{jk}(\mathbf p)
=
\delta_{jk}(\mathbf p)\bigl(1-\kappa_{kj}^j(\mathbf p)\bigr),
\]
as claimed. If a cross slope vanishes, the preceding combined expression remains well defined and provides the stated continuous extension.
\qed

\subsection{Normalized Demand Profiles: Statement and Proof}
\label{section:proof.of.proposition.3}

Use the normalized price path $\mathbf p_{j,n}(t)=\mathbf p^{(n)}+(t/\lambda_{j,n})e_j$, where $\lambda_{j,n}=-\eta_{jj}(\mathbf p^{(n)})>0$, and define
\[
E_{j,n}(t)=\frac{\eta_{jj}(\mathbf p_{j,n}(t))}{\eta_{jj}(\mathbf p^{(n)})},
\qquad
G_{jk,n}(t)=\frac{\eta_{jk}(\mathbf p_{j,n}(t))}{\lambda_{j,n}},
\qquad
Q_{j,n}(t)=\frac{q_j(\mathbf p_{j,n}(t))}{q_j(\mathbf p^{(n)})}.
\]
The following conditional benchmark supplies the demand-profile statement used in Section~\ref{section:small.share.approximation}.

\begin{proposition}[Constant-curvature normalized profiles]
\label{prop:local.shape.jacobian}
Let \(I\) be a common open interval containing zero. Suppose that, eventually on every compact subinterval of \(I\), demand along \(\mathbf p_{j,n}(t)\) is positive, twice continuously differentiable, and has a negative own-price derivative. Assume, uniformly on those compact subintervals,
\[
-\frac{E'_{j,n}(t)}{E_{j,n}(t)^2}\to a_j,
\qquad
\frac{G'_{jk,n}(t)}{E_{j,n}(t)^2}\to b_{jk}\quad(k\ne j),
\]
where the limits are constants, and suppose \(G_{jk,n}(0)\to\rho_{jk}\). Then:
\begin{enumerate}[(i)]
\item The baseline normalized Jacobian coefficients converge to \(-1+a_j\) and \(b_{jk}\), respectively.
\item On every compact subinterval of \(I\cap\{t:1+a_jt>0\}\),
\[
E_{j,n}(t)\to\frac{1}{1+a_jt},\qquad
G_{jk,n}(t)\to
\begin{cases}
\rho_{jk}+b_{jk}t,&a_j=0,\\[0.3em]
\rho_{jk}+\dfrac{b_{jk}}{a_j}\left(1-\dfrac{1}{1+a_jt}\right),&a_j\ne0,
\end{cases}
\]
uniformly.
\item On the same compact subintervals,
\[
Q_{j,n}(t)\to Q_j(t):=
\begin{cases}
e^{-t},&a_j=0,\\
(1+a_jt)^{-1/a_j},&a_j\ne0,
\end{cases}
\]
uniformly.
\end{enumerate}
\end{proposition}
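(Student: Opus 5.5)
The plan is to reduce everything to one-dimensional ODE facts along the normalized path, using Lemma~\ref{lemma:semi.elasticity.representation} for the baseline and the exact integral identities stated in the main text for the profiles.

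\textbf{Part (i).} The chain rule gives $E'_{j,n}(t)=\partial_{p_j}\eta_{jj}(\mathbf p_{j,n}(t))/(\lambda_{j,n}\eta_{jj}(\mathbf p^{(n)}))$. Since $\eta_{jj}(\mathbf p^{(n)})=-\lambda_{j,n}$, at $t=0$ this becomes
\[
-E'_{j,n}(0)/E_{j,n}(0)^2=-\partial_j\eta_{jj}/(\lambda_{j,n}\eta_{jj})=\partial_j\eta_{jj}/\eta_{jj}^2.
\]
Similarly,
\[
G'_{jk,n}(0)=\partial_j\eta_{jk}/\lambda_{j,n}^2=\partial_j\eta_{jk}/\eta_{jj}^2 .
\]
By Lemma~\ref{lemma:semi.elasticity.representation} these are exactly $-2+\kappa^j_{jj}+1$ and $\delta_{jk}(1-\kappa^j_{kj})$. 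Evaluating the assumed uniform convergence at $t=0$ then gives (i).

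\textbf{Part (ii).} Set $A_{j,n}=-E'_{j,n}/E_{j,n}^2$. Then $(1/E_{j,n})'=A_{j,n}$ and $E_{j,n}(0)=1$, so $1/E_{j,n}(t)=1+\int_0^tA_{j,n}$. Uniform convergence of $A_{j,n}\to a_j$ on a compact $[t_0,t_1]\subset I$ gives $1/E_{j,n}\to1+a_jt$ uniformly. On compacts inside $\{1+a_jt>0\}$ the limit is bounded below by a positive constant, so reciprocals converge uniformly and $E_{j,n}\to1/(1+a_jt)$. This step needs $E_{j,n}>0$, which holds because the own derivative is negative. For $G$, write $G'_{jk,n}=(G'_{jk,n}/E_{j,n}^2)\,E_{j,n}^2$. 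Each factor converges uniformly and is uniformly bounded, so $G'_{jk,n}\to b_{jk}/(1+a_jt)^2$ uniformly. Integrating from $0$ and using $G_{jk,n}(0)\to\rho_{jk}$ yields the stated formula, since $\int_0^t(1+a_ju)^{-2}du=(1-1/(1+a_jt))/a_j$, with limit $t$ when $a_j=0$.

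\textbf{Part (iii).} Next compute
\[
\frac{d}{dt}\log Q_{j,n}=\frac{\eta_{jj}(\mathbf p_{j,n}(t))}{\lambda_{j,n}}=-E_{j,n}(t),
\]
so $Q_{j,n}(t)=\exp[-\int_0^tE_{j,n}]$. Uniform convergence of $E_{j,n}$ on the compact interval gives uniform convergence of the integral to $\int_0^t(1+a_ju)^{-1}du$, which equals $a_j^{-1}\log(1+a_jt)$ (or $t$ when $a_j=0$). Continuity of $\exp$ on bounded sets then gives $Q_j$.

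\textbf{Main obstacle.} The delicate point is the domain bookkeeping in (ii). Uniform convergence of $1/E_{j,n}$ only transfers to $E_{j,n}$ away from zeros of $1+a_jt$. Moreover, the compact subinterval of $I\cap\{1+a_jt>0\}$ must contain $0$ (or connect to it) so that the integrals from $0$ stay inside it; I will handle this by working on intervals $[t_0,t_1]\ni0$, since that set is an interval containing zero. The hypothesis ``eventually on every compact subinterval'' guarantees that the path stays in the region where demand is positive and smooth for large $n$, which justifies the differentiation above.
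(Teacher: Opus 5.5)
Your proposal is correct and follows the paper's proof essentially step by step: the chain-rule evaluation at $t=0$ combined with Lemma~\ref{lemma:semi.elasticity.representation} for (i), the identity $1/E_{j,n}(t)=1+\int_0^t A_{j,n}$ with continuity of inversion away from zeros of $1+a_jt$ and integration of $G'_{jk,n}=(G'_{jk,n}/E_{j,n}^2)E_{j,n}^2$ for (ii), and $(\log Q_{j,n})'=-E_{j,n}$ for (iii). You work on compact intervals containing zero, which matches the paper's device of enlarging each compact subinterval to the interval joining it to zero.
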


\noindent\emph{Proof.}

\paragraph{Part (i).}
By Lemma \ref{lemma:semi.elasticity.representation}, $-2+\kappa_{jj}^j
=
-1+\frac{\partial_{p_j}\eta_{jj}}{\eta_{jj}^2},$ and $\delta_{jk}(1-\kappa_{kj}^j)
=
\frac{\partial_{p_j}\eta_{jk}}{\eta_{jj}^2}.$
From the definitions of \(E_{j,n}\) and \(G_{jk,n}\),
\[
-\frac{E'_{j,n}(t)}{E_{j,n}(t)^2}
=
\frac{
\partial_{p_j}\eta_{jj}\!\left(p_j^{(n)}+t/\lambda_{j,n},\,\mathbf p_{-j}^{(n)}\right)
}{
\eta_{jj}\!\left(p_j^{(n)}+t/\lambda_{j,n},\,\mathbf p_{-j}^{(n)}\right)^2
},
\]
and, for each \(k\neq j\),
\[
\frac{G'_{jk,n}(t)}{E_{j,n}(t)^2}
=
\frac{
\partial_{p_j}\eta_{jk}\!\left(p_j^{(n)}+t/\lambda_{j,n},\,\mathbf p_{-j}^{(n)}\right)
}{
\eta_{jj}\!\left(p_j^{(n)}+t/\lambda_{j,n},\,\mathbf p_{-j}^{(n)}\right)^2
}.
\]
Evaluating at \(t=0\) and using the assumed convergence therefore gives
\[
\frac{\partial_{p_j}\eta_{jj}(\mathbf p^{(n)})}{\eta_{jj}(\mathbf p^{(n)})^2}\to a_j,
\qquad
\frac{\partial_{p_j}\eta_{jk}(\mathbf p^{(n)})}{\eta_{jj}(\mathbf p^{(n)})^2}\to b_{jk}.
\]
Substituting into the identities above yields
\[
-2+\kappa_{jj}^j(\mathbf p^{(n)})\to -1+a_j,
\qquad
\delta_{jk}(\mathbf p^{(n)})\bigl(1-\kappa_{kj}^j(\mathbf p^{(n)})\bigr)\to b_{jk}.
\]

\paragraph{Part (ii).}
Fix a compact subinterval \(K\subset I\cap\{t:1+a_j t>0\}\). Enlarge it, if necessary, to the compact interval joining \(K\) to zero; this interval remains inside the same domain. All uniform bounds below are taken on that enlarged interval. Define
\[
c_{j,n}(t):=-\frac{E'_{j,n}(t)}{E_{j,n}(t)^2}.
\]
By assumption, \(c_{j,n}\to a_j\) uniformly on \(K\). Since \(E_{j,n}(0)=1\),
\[
\left(\frac{1}{E_{j,n}(t)}\right)'=c_{j,n}(t),
\qquad
\frac{1}{E_{j,n}(0)}=1,
\]
so
\[
\frac{1}{E_{j,n}(t)}
=
1+\int_0^t c_{j,n}(s)\,ds.
\]
Uniform convergence of \(c_{j,n}\) implies
\[
\sup_{t\in K}\left|\frac{1}{E_{j,n}(t)}-(1+a_j t)\right|\to 0.
\]
Because \(1+a_j t\) is bounded away from zero on \(K\), inversion is continuous, and hence
\[
E_{j,n}(t)\to E_j(t):=
\begin{cases}
1, & a_j=0,\\[0.4em]
\dfrac{1}{1+a_j t}, & a_j\neq 0,
\end{cases}
\]
uniformly on \(K\).

Next fix \(k\neq j\), and define
\[
d_{jk,n}(t):=\frac{G'_{jk,n}(t)}{E_{j,n}(t)^2}.
\]
By assumption, \(d_{jk,n}\to b_{jk}\) uniformly on \(K\). Since \(E_{j,n}\to E_j\) uniformly on \(K\), it follows that
\[
G'_{jk,n}(t)=d_{jk,n}(t)E_{j,n}(t)^2 \to b_{jk}E_j(t)^2
\]
uniformly on \(K\). If \(G_{jk,n}(0)\to \rho_{jk}\), then
\[
G_{jk,n}(t)=G_{jk,n}(0)+\int_0^t G'_{jk,n}(s)\,ds
\]
implies \(G_{jk,n}\to G_{jk}\) uniformly on \(K\), where
\[
G_{jk}(t)=\rho_{jk}+\int_0^t b_{jk}E_j(s)^2\,ds.
\]
Evaluating the integral gives
\[
G_{jk}(t)=
\begin{cases}
\rho_{jk}+b_{jk}t, & a_j=0,\\[0.6em]
\rho_{jk}+\dfrac{b_{jk}}{a_j}\!\left(1-\dfrac{1}{1+a_j t}\right), & a_j\neq 0.
\end{cases}
\]

\paragraph{Part (iii).}
From the definition of \(Q_{j,n}\),
\[
\frac{Q'_{j,n}(t)}{Q_{j,n}(t)}
=
\frac{1}{\lambda_{j,n}}
\eta_{jj}\!\left(p_j^{(n)}+t/\lambda_{j,n},\,\mathbf p_{-j}^{(n)}\right)
=
-E_{j,n}(t),
\qquad
Q_{j,n}(0)=1.
\]
Hence
\[
Q_{j,n}(t)=\exp\!\left(-\int_0^t E_{j,n}(s)\,ds\right).
\]
Since \(E_{j,n}\to E_j\) uniformly on \(K\), we obtain
\[
Q_{j,n}(t)\to Q_j(t):=\exp\!\left(-\int_0^t E_j(s)\,ds\right)
\]
uniformly on \(K\). Substituting the expression for \(E_j\) yields
\[
Q_j(t)=
\begin{cases}
e^{-t}, & a_j=0,\\[0.4em]
(1+a_j t)^{-1/a_j}, & a_j\neq 0.
\end{cases}
\]
This completes the proof.
\qed

\subsection{Proof of Lemma \ref{lem:percentage_pt} \label{section:proof.of.lemma.4}}

Because \(c_j(\tau_j)=c_j e^{\tau_j}\), we have \(d\mathbf c=\widehat{\mathbf c}\,d\boldsymbol\tau\) at \(\boldsymbol\tau=\mathbf 0. \) Since an additive unit tax shock and an additive marginal-cost shock enter the pricing system in the same way, the unit-cost pass-through matrix also gives \(d\mathbf p=\Psi\, d\mathbf c.\) Therefore, \(d\log \mathbf p
=
\widehat{\mathbf p}^{-1}d\mathbf p
=
\widehat{\mathbf p}^{-1}\Psi \widehat{\mathbf c}\, d\boldsymbol\tau,\)
so \(\Psi^\tau=\widehat{\mathbf p}^{-1}\Psi \widehat{\mathbf c}.\) \qed

\subsection{Proof of Lemma \ref{lem:unit_pt_logit_ces}}
\label{section:proof.of.lemma.5}
All bounds below concern a fixed finite number of products and the equilibrium sequences in the statement. Recall $\Psi=-\mathbf J_{\mathbf f}^{-1}\Lambda$.

\paragraph{Logit.}
Diversion satisfies $D_{j\to l}=s_l/(1-s_j)$, so $\Lambda=I+O(\|s\|)$. At equilibrium, Example \ref{example:logit.neumann.convergence} gives diagonal pricing derivatives $-1/(1-s_j)$, zero off-diagonal derivatives within each firm, and $S_fs_k/[(1-s_j)(1-S_f)]$ across firms. Since $S_f=O(\|s\|)$, it follows that $\mathbf J_{\mathbf f}=-I+O(\|s\|)$. Inversion around $-I$ yields $\Psi=I+O(\|s\|)$.

\paragraph{CES.}
Let $E_j=\sigma-(\sigma-1)w_j$, with $\sigma>1$. The normalized pricing equations and diversion ratios are
\[
f_j=\frac{p_j}{E_j}-(p_j-c_j)+\sum_{l\ne j}\Omega_{jl}(p_l-c_l)\frac{(\sigma-1)w_l}{E_j}\frac{p_j}{p_l},
\qquad
D_{j\to l}=\frac{(\sigma-1)w_l}{E_j}\frac{p_j}{p_l}.
\]
Bounded relative prices imply $\Lambda=I+O(\|w\|)$. Share derivatives are
\[
\partial_{p_k}w_l=\frac{(1-\sigma)w_l}{p_k}\bigl(\mathbf1\{l=k\}-w_k\bigr).
\]
Differentiate $f_j$ keeping costs fixed. The own derivative of $p_j/E_j-(p_j-c_j)$ is $-(\sigma-1)/\sigma+O(\|w\|)$ and its off-diagonal derivatives are $O(\|w\|)$. Each derivative of the ownership sum is also $O(\|w\|)$: every term contains an expenditure share, while $(p_l-c_l)/p_l$ and relative prices are bounded. Thus
\[
\mathbf J_{\mathbf f}=-\frac{\sigma-1}{\sigma}I+O(\|w\|).
\]
The diagonal remainder need not be $O(w_j)$ alone, because the firm's other product shares enter it. Inversion around the nonsingular scalar matrix gives
\[
\Psi=\frac{\sigma}{\sigma-1}I+O(\|w\|).
\]
Both inversions are valid for sufficiently small shares, with constants depending on the fixed demand parameters and the stated bounds. \qed

\subsection{Proof of Proposition \ref{prop:percentage_pt_logit_ces}}
\label{section:proof.of.proposition.4}
Lemma \ref{lem:percentage_pt} gives $\Psi^\tau=\widehat{\mathbf p}^{-1}\Psi\widehat{\mathbf c}$. If a remainder matrix $E$ is $O(\epsilon)$, then its scaled entry is
\[
(\widehat{\mathbf p}^{-1}E\widehat{\mathbf c})_{jk}
=E_{jk}\frac{c_k}{p_k}\frac{p_k}{p_j}=O(\epsilon),
\]
by the bounds on cost-to-price and relative-price ratios. Absolute prices and costs need not remain bounded.

Under logit, Lemma \ref{lem:unit_pt_logit_ces} gives $\Psi=I+O(\|s\|)$. Therefore, writing $\mu_j=(p_j-c_j)/p_j$,
\[
\Psi^\tau_{\mathrm{logit}}=\operatorname{diag}(c_j/p_j)+O(\|s\|)
=\operatorname{diag}(1-\mu_j)+O(\|s\|).
\]
Under CES the same lemma gives $\Psi=\sigma/(\sigma-1)I+O(\|w\|)$. Dividing the equilibrium pricing equation by $p_j$ yields
\[
\mu_j=\frac1{\sigma-(\sigma-1)w_j}
+\sum_{l\ne j}\Omega_{jl}\mu_l\frac{(\sigma-1)w_l}{\sigma-(\sigma-1)w_j}.
\]
Bounded relative margins and fixed dimension imply $\mu_j=1/\sigma+O(\|w\|)$, hence $c_j/p_j=(\sigma-1)/\sigma+O(\|w\|)$. Combining this with the scaled remainder bound gives
\[
\Psi^\tau_{\mathrm{CES}}=I+O(\|w\|).
\]
This is a small-share limit for finite-market CES, not an exact identity at positive shares. \qed

\section{Proofs for the Model-Specific Small-Share Limits}
\label{appendix:model.specific.small.share.limits}

This appendix verifies the entries in Table \ref{table:small.share.limits} and the ownership restrictions needed to interpret them as pass-through. The ray calculations isolate selection among consumers who continue purchasing; the boundary calculations retain local cross slopes; the nested-logit calculation retains substitution within a group. Together they explain why superficially similar small shares can imply different equilibrium responses. Throughout, recall the identities
\begin{equation}
\label{eq:appendix.identity.diag}
-2+\kappa_{jj}^j(\mathbf p)
=
-1+\frac{\partial_{p_j}\eta_{jj}(\mathbf p)}{\eta_{jj}(\mathbf p)^2},
\end{equation}
and, for \(k\neq j\),
\begin{equation}
\label{eq:appendix.identity.offdiag}
\delta_{jk}(\mathbf p)\bigl(1-\kappa_{kj}^j(\mathbf p)\bigr)
=
\frac{\partial_{p_j}\eta_{jk}(\mathbf p)}{\eta_{jj}(\mathbf p)^2}.
\end{equation}
Thus, for each model, it suffices to characterize the limiting behavior of the normalized derivatives of own and cross semi-elasticities.

It is useful to distinguish between two kinds of small-share limits. For logit, CES, and mixed logit, the natural limit is along a \emph{ray} $\mathbf p^{(n)}=\lambda_n \mathbf p$, $\lambda_n\to\infty$, with fixed \(\mathbf p\in \mathbb R_{++}^J\), so relative prices remain fixed and all inside shares vanish. By contrast, for linear and AIDS demand, the natural small-share limit is a \emph{boundary limit} in which product \(j\)'s quantity in the linear model, or expenditure share in the Stone specification, tends to zero while remaining nonnegative; these models are not naturally analyzed by sending all prices to infinity along a common ray.

\subsection{Ray Limits: Logit and CES}

The first set of examples considers ray limits of the form $\mathbf p^{(n)}=\lambda_n \mathbf p$, $\lambda_n\to\infty$, for a fixed vector \(\mathbf p\in\mathbb R_{++}^J\). Along such a sequence, relative prices remain fixed while all prices diverge proportionally. For logit and CES demand, this implies that inside shares vanish and yields a well-defined small-share limit.

\begin{proposition}[Ray limits with fixed relative prices]
\label{prop:ray.limits.standard.models}
Fix \(\mathbf p\in \mathbb R_{++}^J\) and let \(\mathbf p^{(n)}=\lambda_n \mathbf p\) with \(\lambda_n\to\infty\).

\begin{enumerate}[(i)]
\item \textbf{Logit.}
Suppose
\[
q_j(\mathbf p)=
M\frac{\exp(\delta_j-\alpha p_j)}
{1+\sum_{\ell=1}^J \exp(\delta_\ell-\alpha p_\ell)},
\qquad \alpha>0.
\]
Then, for each \(j\),
\[
-2+\kappa_{jj}^j(\mathbf p^{(n)})\to -1,
\qquad
\delta_{jk}(\mathbf p^{(n)})\bigl(1-\kappa_{kj}^j(\mathbf p^{(n)})\bigr)\to 0
\quad (k\neq j).
\]

\item \textbf{CES.}
Suppose
\[
q_j(\mathbf p)=
\frac{B}{p_j}
\frac{\exp\bigl(\delta_j-(\sigma-1)\log p_j\bigr)}
{1+\sum_{\ell=1}^J \exp\bigl(\delta_\ell-(\sigma-1)\log p_\ell\bigr)},
\qquad \sigma>1.
\]
Then, for each \(j\),
\[
-2+\kappa_{jj}^j(\mathbf p^{(n)})\to -1+\frac{1}{\sigma},
\qquad
\delta_{jk}(\mathbf p^{(n)})\bigl(1-\kappa_{kj}^j(\mathbf p^{(n)})\bigr)\to 0
\quad (k\neq j).
\]
\end{enumerate}
\end{proposition}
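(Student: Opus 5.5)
The plan is to use the identities \eqref{eq:appendix.identity.diag} and \eqref{eq:appendix.identity.offdiag}. They reduce both claims to computing the semi-elasticities $\eta_{jj},\eta_{jk}$ in closed form, differentiating them once in $p_j$, and dividing by $\eta_{jj}^2$ along the ray. Since both models are of the logit form in a transformed price index, I will compute everything explicitly in terms of shares. Then I will use that all inside shares vanish along $\lambda_n\mathbf p$ while relative prices stay fixed.

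\emph{Logit.} With $s_j=q_j/M$, the semi-elasticities are $\eta_{jj}=-\alpha(1-s_j)$ and $\eta_{jk}=\alpha s_k$ for $k\ne j$. Using $\partial_{p_j}s_j=-\alpha s_j(1-s_j)$ and $\partial_{p_j}s_k=\alpha s_js_k$, I get
\[
\partial_{p_j}\eta_{jj}=-\alpha^2 s_j(1-s_j),
\qquad
\partial_{p_j}\eta_{jk}=\alpha^2 s_js_k .
\]
Dividing by $\eta_{jj}^2=\alpha^2(1-s_j)^2$ gives $-s_j/(1-s_j)$ and $s_js_k/(1-s_j)^2$. Along the ray, $\exp(\delta_\ell-\alpha\lambda_np_\ell)\to0$ because $p_\ell>0$, so every $s_\ell\to0$. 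Both ratios therefore vanish, and the two limits $-1$ and $0$ follow from the identities.

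\emph{CES.} Write $w_j=e^{\delta_j}p_j^{1-\sigma}/(1+\sum_\ell e^{\delta_\ell}p_\ell^{1-\sigma})$, so that $\log q_j=\log B-\log p_j+\log w_j$. Since $\partial_{p_k}\log w_j=(1-\sigma)(\mathbf 1\{j=k\}-w_k)/p_k$, the semi-elasticities are
\[
\eta_{jj}=-\frac{\sigma-(\sigma-1)w_j}{p_j}=-\frac{E_j}{p_j},
\qquad
\eta_{jk}=\frac{(\sigma-1)w_k}{p_k}.
\]
Differentiating in $p_j$ with $\partial_{p_j}w_j=-(\sigma-1)w_j(1-w_j)/p_j$ gives
\[
\partial_{p_j}\eta_{jj}=\frac{E_j}{p_j^2}-\frac{(\sigma-1)^2w_j(1-w_j)}{p_j^2},
\qquad
\partial_{p_j}\eta_{jk}=\frac{(\sigma-1)^2w_jw_k}{p_jp_k}.
\]
Dividing by $\eta_{jj}^2=E_j^2/p_j^2$, the diagonal ratio is $1/E_j-(\sigma-1)^2w_j(1-w_j)/E_j^2$. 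The off-diagonal ratio is $(\sigma-1)^2w_jw_k(p_j/p_k)/E_j^2$, and here the factor $p_j/p_k$ is constant along the ray. Since $\sigma>1$, $p_\ell^{1-\sigma}\lambda_n^{1-\sigma}\to0$, so all $w_\ell\to0$ and $E_j\to\sigma$. The diagonal ratio therefore tends to $1/\sigma$, giving the limit $-1+1/\sigma$, and the off-diagonal ratio tends to $0$.

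The main obstacle is bookkeeping rather than analysis. One must keep the extra $-\log p_j$ term in CES quantity (as opposed to the expenditure share), because it is exactly what produces the nonzero $1/\sigma$ correction. One must also check that the relative-price factor in the CES cross term stays bounded, which holds because relative prices are fixed along the ray. Positivity of $q_j$ and the sign condition $\eta_{jj}<0$ (which needs $E_j>0$, true since $w_j<1$ and $\sigma>1$) ensure the identities apply at every point of the sequence.
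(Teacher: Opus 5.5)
Your proposal is correct and matches the paper's proof essentially step for step. It uses the same closed-form semi-elasticities and $p_j$-derivatives for logit and CES (your $w_j$ and $E_j$ are the paper's $b_j$ and $\sigma-(\sigma-1)b_j$), the same observation that all shares vanish while $p_j/p_k$ stays fixed along the ray, and the same appeal to the semi-elasticity identities. One minor aside: your closing remark overstates the role of the $-\log p_j$ term. Without it the normalized diagonal ratio would tend to $1/(\sigma-1)$ rather than to zero, so that term fixes the value $1/\sigma$ rather than creating a nonzero correction.
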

\begin{proof}
\textbf{(i) Logit.}
Let
\[
s_j(\mathbf p)=
\frac{\exp(\delta_j-\alpha p_j)}
{1+\sum_{\ell=1}^J \exp(\delta_\ell-\alpha p_\ell)},
\qquad
q_j(\mathbf p)=Ms_j(\mathbf p).
\]
Then
\[
\eta_{jj}(\mathbf p)=-\alpha(1-s_j(\mathbf p)),
\qquad
\eta_{jk}(\mathbf p)=\alpha s_k(\mathbf p)\quad (k\neq j),
\]
so
\[
\partial_{p_j}\eta_{jj}(\mathbf p)=-\alpha^2 s_j(\mathbf p)(1-s_j(\mathbf p)),
\qquad
\partial_{p_j}\eta_{jk}(\mathbf p)=\alpha^2 s_j(\mathbf p)s_k(\mathbf p).
\]
Hence
\[
\frac{\partial_{p_j}\eta_{jj}(\mathbf p)}{\eta_{jj}(\mathbf p)^2}
=
-\frac{s_j(\mathbf p)}{1-s_j(\mathbf p)},
\qquad
\frac{\partial_{p_j}\eta_{jk}(\mathbf p)}{\eta_{jj}(\mathbf p)^2}
=
\frac{s_j(\mathbf p)s_k(\mathbf p)}{(1-s_j(\mathbf p))^2}.
\]
Along \(\mathbf p^{(n)}=\lambda_n\mathbf p\) with \(\lambda_n\to\infty\), \(s_j(\mathbf p^{(n)})\to 0\) for every \(j\), so both expressions converge to \(0\). Applying \eqref{eq:appendix.identity.diag}--\eqref{eq:appendix.identity.offdiag} yields the claim.

\medskip
\noindent
\textbf{(ii) CES.}
Write
\[
x_j(\mathbf p)=e^{\delta_j}p_j^{-(\sigma-1)},
\qquad
b_j(\mathbf p)=\frac{x_j(\mathbf p)}{1+\sum_{\ell=1}^J x_\ell(\mathbf p)},
\qquad
q_j(\mathbf p)=\frac{B}{p_j}b_j(\mathbf p).
\]
Then
\[
\eta_{jj}(\mathbf p)=\frac{-\sigma+(\sigma-1)b_j(\mathbf p)}{p_j},
\qquad
\eta_{jk}(\mathbf p)=\frac{\sigma-1}{p_k}b_k(\mathbf p)\quad (k\neq j).
\]
Moreover,
\[
\partial_{p_j}b_j(\mathbf p)=-(\sigma-1)\frac{b_j(\mathbf p)(1-b_j(\mathbf p))}{p_j},
\qquad
\partial_{p_j}b_k(\mathbf p)=(\sigma-1)\frac{b_j(\mathbf p)b_k(\mathbf p)}{p_j}\quad (k\neq j),
\]
so a direct calculation gives
\[
\frac{\partial_{p_j}\eta_{jj}(\mathbf p)}{\eta_{jj}(\mathbf p)^2}
=
\frac{
\sigma-(\sigma-1)b_j(\mathbf p)-(\sigma-1)^2b_j(\mathbf p)(1-b_j(\mathbf p))
}{
[-\sigma+(\sigma-1)b_j(\mathbf p)]^2
},
\]
and, for \(k\neq j\),
\[
\frac{\partial_{p_j}\eta_{jk}(\mathbf p)}{\eta_{jj}(\mathbf p)^2}
=
\frac{(\sigma-1)^2}{[-\sigma+(\sigma-1)b_j(\mathbf p)]^2}
\frac{p_j}{p_k}b_j(\mathbf p)b_k(\mathbf p).
\]
Along \(\mathbf p^{(n)}=\lambda_n\mathbf p\), we have \(b_j(\mathbf p^{(n)})\to 0\) for every \(j\), while \(p_j^{(n)}/p_k^{(n)}=p_j/p_k\) remains constant. Therefore
\[
\frac{\partial_{p_j}\eta_{jj}(\mathbf p^{(n)})}{\eta_{jj}(\mathbf p^{(n)})^2}\to \frac{1}{\sigma},
\qquad
\frac{\partial_{p_j}\eta_{jk}(\mathbf p^{(n)})}{\eta_{jj}(\mathbf p^{(n)})^2}\to 0.
\]
Applying \eqref{eq:appendix.identity.diag}--\eqref{eq:appendix.identity.offdiag} completes the proof.
\end{proof}

\paragraph{Multiproduct ownership along the same rays.}
Suppose these price vectors are interior equilibria with markups \(m_\ell^{(n)}=O(\lambda_n)\). For logit,
\[
D_{j\to\ell}=\frac{s_\ell}{1-s_j}\quad(\ell\ne j),
\]
and its price derivatives are bounded by a constant times \(D_{j\to\ell}\). Since every share decays exponentially along a positive ray, \(D_{j\to\ell}\to0\) and \(m_\ell\partial_kD_{j\to\ell}\to0\). For CES,
\[
D_{j\to\ell}
=\frac{(\sigma-1)(p_j/p_\ell)b_\ell}{\sigma-(\sigma-1)b_j}.
\]
Fixed relative prices and \(b_\ell\to0\) imply \(D_{j\to\ell}\to0\). Differentiating this expression gives \(\partial_kD_{j\to\ell}=O(b_\ell/\lambda_n)\), so its markup-weighted derivatives also vanish. Consequently the limiting unit pass-through matrices are \(I\) for logit and \(\sigma/(\sigma-1)I\) for CES, for any fixed ownership partition. These limits do not assert exact diagonal pass-through at finite shares.

\subsection{Ray Limits: Mixed Logit}

The lower tail of the random price coefficient determines which consumers remain relevant as prices rise. Lognormal and gamma mixing provide contrasting examples under the same fixed-product, fixed-utility experiment.

\begin{proposition}[Ray limits for mixed logit]
\label{prop:ray.limits.mixed.logit}
Fix finite mean utilities \(\delta_j\), a fixed finite product set, and \(\mathbf p\in\mathbb R_{++}^J\). Let \(\mathbf p^{(n)}=\lambda_n\mathbf p\), where \(\lambda_n\to\infty\), and consider
\[
q_j(\mathbf p)=M\int
\frac{e^{\delta_j-\alpha p_j}}{1+\sum_\ell e^{\delta_\ell-\alpha p_\ell}}\,dF(\alpha).
\]
\begin{enumerate}[(a)]
\item If \(\log\alpha\sim N(\mu,v)\), with \(v>0\), the demand-side diagonal entries converge to \(-1\) and the off-diagonal entries converge to zero.
\item If \(\alpha\sim\mathrm{Gamma}(r,\beta)\), with shape \(r>0\) and rate \(\beta>0\), define
\[
\Psi_j(z;\mathbf p)=\frac{e^{\delta_j-zp_j}}{1+\sum_\ell e^{\delta_\ell-zp_\ell}},
\quad H_j(\mathbf p)=\int_0^\infty z^{r-1}\Psi_j(z;\mathbf p)\,dz,
\quad \ell_j=-\partial_j\log H_j(\mathbf p)>0.
\]
Then
\[
-2+\kappa_{jj}^j(\lambda_n\mathbf p)\to-1+\frac{\partial_{jj}\log H_j}{\ell_j^2},
\qquad
\delta_{jk}(1-\kappa_{kj}^j)(\lambda_n\mathbf p)\to\frac{\partial_{jk}\log H_j}{\ell_j^2}
\quad(k\ne j).
\]
The limits depend on the ray and generally retain interactions between products. The own curvature correction has no universal sign.
\end{enumerate}
\end{proposition}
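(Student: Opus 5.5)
The plan is to work entirely with log-demand derivatives. By Lemma~\ref{lemma:semi.elasticity.representation} (equivalently the quotient-rule display after \eqref{equation:mixture.composition}), the demand-side entries are
\[
-2+\kappa^j_{jj}=-1+\frac{\partial_{jj}\log q_j}{(\partial_j\log q_j)^2},
\qquad
\delta_{jk}(1-\kappa^j_{kj})=\frac{\partial_{jk}\log q_j}{(\partial_j\log q_j)^2}.
\]
Both ratios are invariant to a multiplicative constant in $q_j$. They are also invariant to rescaling prices: if $g(\mathbf x)=h(\mathbf x/\lambda)$, then gradient terms scale by $\lambda^{-1}$ and Hessian terms by $\lambda^{-2}$, so the ratios are unchanged. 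In both parts I substitute $z=\lambda\alpha$, which converts evaluation at $\lambda\mathbf p'$ into evaluation of $\Psi_j(z;\mathbf p')$, since $\Psi_j(\alpha;\lambda\mathbf p')=\Psi_j(\lambda\alpha;\mathbf p')$.

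\emph{Gamma case.} After the substitution,
\[
q_j(\lambda\mathbf p')=\frac{M\beta^r}{\Gamma(r)}\lambda^{-r}H^{(\lambda)}_j(\mathbf p'),
\qquad
H^{(\lambda)}_j(\mathbf p')=\int_0^\infty z^{r-1}e^{-\beta z/\lambda}\Psi_j(z;\mathbf p')\,dz.
\]
By the two invariances, the normalized entries at $\lambda\mathbf p$ equal $\partial_{jj}\log H^{(\lambda)}_j/(\partial_j\log H^{(\lambda)}_j)^2$ and $\partial_{jk}\log H^{(\lambda)}_j/(\partial_j\log H^{(\lambda)}_j)^2$, evaluated at the fixed point $\mathbf p$. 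It therefore suffices to show that $H^{(\lambda)}_j$ and its first two $\mathbf p'$-derivatives converge to those of $H_j$, uniformly near $\mathbf p$. I would use dominated convergence:
\begin{itemize}
\item Each $\mathbf p'$-derivative of $\Psi_j$ is bounded by a polynomial of degree at most two in $z$ times $\Psi_j$.
\item $\Psi_j(z;\mathbf p')\le e^{\delta_j-z p_j'}$ with $p_j'$ bounded away from zero.
\item $z^{r-1}$ is integrable at the origin because $r>0$.
\end{itemize}
So $z^{r+1}e^{\delta_j-z\underline p}$ serves as an integrable envelope. Positivity of $\ell_j$ follows from $\partial_{p_j}\Psi_j=-z\Psi_j(1-\Psi_j)<0$. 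For the claim that the sign is not universal, I would give one example of each sign:
\begin{itemize}
\item \emph{Positive.} A single product with $\delta\to-\infty$ gives $H\approx e^{\delta}\Gamma(r)p^{-r}$, so $\ell=r/p$ and the correction equals $1/r>0$.
\item \emph{Negative.} The two-product quadrature value $a_j\simeq-0.188$ reported in the text.
\end{itemize}
Interactions survive in general because $\partial_{jk}\log H_j\neq0$ when rival products carry weight at moderate $z$.

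\emph{Lognormal case.} Here I would use the mixed-logit expressions for $K_{jj}$ and $K_{jk}$ derived in the appendix, with $a_{ij}=\alpha_i(1-s_{ij})$. The limits to be proved are
\[
\frac{\operatorname{Var}_j(a_{ij})}{\mathbb E_j[a_{ij}]^2}\to0,
\qquad
\frac{\mathbb E_j[\alpha_i^2 s_{ij}]}{\mathbb E_j[a_{ij}]^2}\to0,
\qquad
\frac{\operatorname{Cov}_j(a_{ij},\alpha_i s_{ik})}{\mathbb E_j[a_{ij}]^2}\to0,
\]
with $\mathbb E_j$ the buyer-weighted measure. The buyer measure is essentially the tilted law $e^{-\lambda p_j\alpha}\,dF(\alpha)$, times a factor $1/(1+\sum_\ell e^{\delta_\ell-\lambda\alpha p_\ell})$ that lies in a bounded band. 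Under this law, $\lambda\alpha$ concentrates near a point $z^*_\lambda$ solving $z p_j\approx(\log\lambda+\mu-\log z)/v$, so $z^*_\lambda\asymp\log\lambda\to\infty$. At such $z$ every $s_{i\ell}=\Psi_\ell(z;\mathbf p)\to0$ uniformly on the bulk, which kills the terms containing $s_{ij}$ or $s_{ik}$ relative to $\mathbb E_j[\alpha]^2$. The remaining term is the squared coefficient of variation of the tilted lognormal. Via the Laplace transform $L(s)=\int e^{-s\alpha}dF$ it equals $L''L/L'^2-1$, which behaves like $v/\log s\to0$, since $\log L(s)\sim-(\log s)^2/(2v)$.

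The main obstacle is making this lognormal Laplace-method step rigorous. Specifically, one must bound the tilted coefficient of variation and control the tail region of small $z$, where the $s_{i\ell}$ are not small, so that its mass is negligible relative to the bulk. I would attempt this first through explicit Gaussian bounds after the change of variables $u=\log z$, comparing the contribution from $u\le\frac12\log\log\lambda$ with the mass near $\log z^*_\lambda$.
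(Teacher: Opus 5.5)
Your proposal is correct. Part (b) is essentially the paper's argument; part (a) takes a different route.

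\textbf{Part (b).} You and the paper both substitute $z=\lambda_n\alpha$ and apply dominated convergence. Your exact identity $q_j(\mathbf x)=C\lambda^{-r}H^{(\lambda)}_j(\mathbf x/\lambda)$, which makes the normalized ratios scale-free, is a bit cleaner than the paper's bookkeeping of $\lambda_n^{-1}$ factors and $o(\cdot)$ remainders. Two small corrections:
\begin{itemize}
\item The envelope should be $z^{r-1}(1+z+z^2)e^{\delta_j-z\underline p}$, because for $r<1$ the undifferentiated integrand is not dominated by $z^{r+1}$ near zero.
\item With one product, $H=p^{-r}$ times a constant exactly, so the correction is exactly $1/r$ without sending $\delta\to-\infty$.
\end{itemize}

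\textbf{Part (a): the paper's route.} The paper works with demand levels. Expanding the logit denominator about one, $q_j$, $-\partial_jq_j$ and $\partial_{jj}q_j$ are asymptotic to $Me^{\delta_j}L_m(x)$, where $L_m(x)=\mathbb E[\alpha^me^{-x\alpha}]$. A Lambert-$W$ Laplace expansion then supplies three facts:
\begin{itemize}
\item $L_0L_2/L_1^2\to1$;
\item $L_{m+1}/L_m=O(\log x/x)$;
\item fixed-power decay of $L_m(cx)/L_m(x)$ for $c=(p_j+p_k)/p_j>1$, which controls the cross terms through product-share bounds.
\end{itemize}

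\textbf{Part (a): your route.} You use the buyer-composition form of $K_{jj}$ and $K_{jk}$ and reduce everything to concentration of the tilted law. Your vanishing coefficient of variation is the paper's $L_0L_2/L_1^2\to1$. Uniform smallness of $s_{ij},s_{ik}$ for $z\ge\sqrt{\log\lambda}$, together with negligible tilted mass below that threshold, replaces the $L_m(cx)$ bounds. Cauchy--Schwarz handles the covariance term.

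\textbf{What each route buys.} Yours makes the mechanism explicit: the remaining buyers become homogeneous in relative terms, and their individual shares vanish. It also connects directly to the mixture-composition identity. The paper's route yields explicit rates, which it reuses for the lognormal ownership terms (for example $\lambda_n|\partial_kD_{j\to\ell}|\to0$). Your qualitative concentration argument would need to be quantified to deliver those.

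\textbf{A caution on the Laplace step.} Deriving the coefficient-of-variation limit from $\log L(s)\sim-(\log s)^2/(2v)$ differentiates an asymptotic relation, which is not valid by itself. The limit must come from Laplace concentration of $\log\alpha$ on the scale $\sqrt{v/(1+w)}$, with control of the first two exponential moments. This is the step you flag, and it is the one the paper carries out. Only the ratio limit is needed, not the $v/\log s$ rate.
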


\begin{proof}
Write \(s_j(\alpha,\mathbf p)=e^{\delta_j-\alpha p_j}/(1+\sum_\ell e^{\delta_\ell-\alpha p_\ell})\).

\paragraph{(a) Lognormal mixing.}
For nonnegative integers \(m\), let \(L_m(x)=\mathbb E[\alpha^m e^{-x\alpha}]\). The moment scales, rather than only the demand-level tail, are needed to compare derivatives. Completing the square in \(y=\log\alpha\) gives
\[
L_m(x)=\frac{e^{m\mu+m^2v/2}}{\sqrt{2\pi v}}
\int_{\mathbb R}\exp\!\left[-xe^y-\frac{(y-\mu-mv)^2}{2v}\right]dy.
\]
Let \(w_m=W(vxe^{\mu+mv})\), where \(W(z)e^{W(z)}=z\). The exponent is minimized at \(y_m=\mu+mv-w_m\), where \(xe^{y_m}=w_m/v\). Expanding about this minimum on the scale \(\sqrt{v/(1+w_m)}\) yields
\begin{equation}
\label{eq:lognormal.moment.laplace}
\log L_m(x)=m\mu+\frac{m^2v}{2}
-\frac{w_m^2+2w_m}{2v}-\frac12\log(1+w_m)+o(1).
\end{equation}
For completeness, after the change \(y=y_m+u\sqrt{v/(1+w_m)}\), the quadratic term is \(u^2/2\) and the cubic remainder on each bounded \(u\)-interval is \(O(w_m^{-1/2})\). Convexity of the exponent bounds the two tails outside an expanding neighborhood of the minimum. The rescaled integral therefore converges to \(\sqrt{2\pi}\), giving \eqref{eq:lognormal.moment.laplace}.

For clarity, the exponent's exact increase about its minimum, writing $w=w_m$ and $h=y-y_m$, is
\[
D(h)=\frac{w}{v}(e^h-1-h)+\frac{h^2}{2v}.
\]
Set $h=a u$, with $a=\sqrt{v/(1+w)}$. Then $D(au)\to u^2/2$ uniformly on bounded intervals. Convexity and the limiting derivatives at $u=1$ and $u=-1$ give uniform exponential tail bounds. These bounds remain integrable after multiplication by $e^{kau}$ for fixed $k=1,2$ and sufficiently large $w$, justifying convergence of the first two tilted moments as well as the integral itself.

The same concentration about \(y_m\), including its first two exponential moments, implies
\begin{equation}
\label{eq:lognormal.moment.ratios}
\frac{L_m(cx)}{L_m(x)}=x^{-\log(c)/v+o(1)}\to0\quad(c>1),
\qquad
\frac{L_{m+1}(x)}{L_m(x)}\sim\frac{w_m}{vx}=O(\log x/x),
\qquad
\frac{L_0(x)L_2(x)}{L_1(x)^2}\to1.
\end{equation}
In the first expression the decay is a fixed power, up to a subpower factor; no faster-than-every-power bound is used.

Set \(x=\lambda_n p_j\). Expanding \(1/(1+\sum_\ell e^{\delta_\ell-\alpha\lambda_np_\ell})\) about one bounds the demand error by a finite sum of \(L_0(\lambda_n(p_j+p_\ell))\). Differentiating once or twice gives the same bounds with \(L_1\) or \(L_2\). The first limit in \eqref{eq:lognormal.moment.ratios} therefore gives
\[
q_j\sim Me^{\delta_j}L_0(x),\qquad
-\partial_jq_j\sim Me^{\delta_j}L_1(x),\qquad
\partial_{jj}q_j\sim Me^{\delta_j}L_2(x).
\]
Thus \(\kappa_{jj}^j\to1\).

For \(k\ne j\), set \(c=(p_j+p_k)/p_j>1\). Product-share bounds inside the integrals give
\[
|\partial_kq_j|\le C L_1(cx),\qquad
|\partial_{jk}q_j|\le C L_2(cx).
\]
The quotient rule and \(\eta_{jj}^2\sim[L_1(x)/L_0(x)]^2\) then bound the normalized cross derivative by
\[
C\left\{\frac{L_2(cx)L_0(x)}{L_1(x)^2}
+\frac{L_1(cx)}{L_1(x)}\right\}\longrightarrow0,
\]
using \eqref{eq:lognormal.moment.ratios}. Lemma \ref{lemma:semi.elasticity.representation} proves part (a).

\paragraph{(b) Gamma mixing.}
Let \(C_r=M\beta^r/\Gamma(r)\). The substitution \(z=\lambda_n\alpha\) gives
\[
q_j(\lambda_n\mathbf p)=C_r\lambda_n^{-r}
\int_0^\infty z^{r-1}e^{-\beta z/\lambda_n}\Psi_j(z;\mathbf p)\,dz.
\]
The kernel and its first two price derivatives are bounded in a neighborhood of the fixed positive ray by constants times \((1+z^2)e^{-az}\), for some \(a>0\). Dominated convergence therefore applies to the demand and derivative integrals. Remembering that derivatives with respect to the actual price vector introduce one factor \(\lambda_n^{-1}\) each, we obtain
\[
q_j(\lambda_n\mathbf p)\sim C_r\lambda_n^{-r}H_j(\mathbf p),
\]
\[
\partial_kq_j(\lambda_n\mathbf p)=C_r\lambda_n^{-r-1}\partial_kH_j+o(\lambda_n^{-r-1}),
\quad
\partial_{jk}q_j(\lambda_n\mathbf p)=C_r\lambda_n^{-r-2}\partial_{jk}H_j+o(\lambda_n^{-r-2}).
\]
Consequently \(\eta_{jj}=-\lambda_n^{-1}\ell_j+o(\lambda_n^{-1})\) and
\[
\partial_j\eta_{jk}(\lambda_n\mathbf p)
=\lambda_n^{-2}\partial_{jk}\log H_j+o(\lambda_n^{-2}).
\]
Here \(\ell_j>0\) follows by differentiating the strictly decreasing own-price kernel under the integral. Substitution in the semi-elasticity identities proves part (b).
\end{proof}

\paragraph{What common-ray regular variation does and does not imply.}
For gamma mixing, the selected consumers have \(\alpha\) of order \(1/\lambda_n\), so their conditional probabilities of rival products need not vanish. This accounts for the dependence on \(H_j(\mathbf p)\). Equivalently,
\[
H_j(\mathbf p)=e^{\delta_j}\Gamma(r)p_j^{-r}C_j(\mathbf p),\qquad
C_j(\mathbf p)=\frac{1}{\Gamma(r)}\int_0^\infty
\frac{u^{r-1}e^{-u}}{1+\sum_\ell e^{\delta_\ell-(p_\ell/p_j)u}}\,du.
\]
With no rival inside goods, \(C_j\) is constant and the own correction is \(1/r\). With rivals, derivatives of \(C_j\) can overturn the sign of this correction. For instance, in the symmetric two-product example in the main text, differentiating the one-dimensional integral gives \(\partial_{jj}\log H_j/\ell_j^2\simeq-0.188\).

The same dominated-convergence argument, uniformly on compact positive-price neighborhoods, gives the normalized own-price profile
\[
Q_j(t)=\frac{H_j(\mathbf p+t e_j/\ell_j)}{H_j(\mathbf p)},\qquad t>-\ell_jp_j.
\]
This profile changes relative prices, whereas the common-ray experiment holds them fixed. It need not have constant curvature. These are demand-derivative results for every \(r>0\); equilibrium incidence additionally requires an admissible interior equilibrium and a nonsingular limiting pricing Jacobian. For example, the one-product limit has a zero pricing-Jacobian limit at \(r=1\), so a finite inverse limit cannot be inferred there.

\paragraph{Lognormal diversion and ownership terms.}
Suppose, in addition, that the ray is a sequence of interior equilibria and markups are \(O(\lambda_n)\). Let \(x=\lambda_n p_j\), \(c=(p_j+p_\ell)/p_j>1\), and \(\ell\ne j\). The preceding moment bounds imply
\[
D_{j\to\ell}=O\!\left(\frac{L_1(cx)}{L_1(x)}\right)\to0.
\]
The quotient rule, applied directly to \(-\partial_jq_\ell/\partial_jq_j\), gives for every \(k\)
\[
|\partial_kD_{j\to\ell}|
\le C\left\{\frac{L_2(cx)}{L_1(x)}
+\frac{L_1(cx)L_2(x)}{L_1(x)^2}\right\}.
\]
Since \(L_2(x)/L_1(x)=O(\log x/x)\), it follows that
\[
\lambda_n|\partial_kD_{j\to\ell}|
\le O(\log\lambda_n)\frac{L_1(cx)}{L_1(x)}\longrightarrow0.
\]
Thus every markup-weighted diversion derivative vanishes. For a fixed ownership partition, \(\mathbf C\to0\) and \(\Lambda\to I\); combined with part (a), this gives \(\mathbf P\to I\). The result establishes a limit, not a finite-share accuracy ranking between diagonal approximations. Convergence is pointwise in the fixed ray and mixing parameters and can be slow: the decay rate in \eqref{eq:lognormal.moment.ratios} depends on relative prices and mixing dispersion. The identity limit therefore need not be an accurate approximation at empirically relevant shares.

\subsection{Boundary Limits: Linear and AIDS}

The remaining examples are most naturally analyzed along feasible boundary sequences: quantity demand tends to zero in the linear model, while expenditure share tends to zero in the Stone specification. A vanishing expenditure share also implies vanishing quantity if absolute prices have finite positive limits. The proposition states the weaker conditions sufficient for the normalized-derivative limits.

\begin{proposition}[Boundary limits for linear and AIDS demand]
\label{prop:boundary.limits.linear.aids}
Fix product \(j\).

\begin{enumerate}[(i)]
\item \textbf{Linear demand.}
Suppose
\[
q_j(\mathbf p)=\gamma_j-\sum_{\ell=1}^J \beta_{j\ell}p_\ell,
\qquad
\beta_{jj}>0.
\]
Let \(\{\mathbf p^{(n)}\}_{n\ge 1}\) be any sequence in the domain \(\{q_j>0\}\) such that $q_j(\mathbf p^{(n)})\downarrow 0.$ Then
\[
-2+\kappa_{jj}^j(\mathbf p^{(n)})\to -2,
\qquad
\delta_{jk}(\mathbf p^{(n)})\bigl(1-\kappa_{kj}^j(\mathbf p^{(n)})\bigr)\to -\frac{\beta_{jk}}{\beta_{jj}}
\quad (k\neq j).
\]
In fact, these expressions are constant on \(\{q_j>0\}\), so the limits do not depend on the particular boundary sequence.

\item \textbf{LA/AIDS with Stone index.}
Suppose
\[
q_j(\mathbf p)=\frac{B}{p_j}w_j(\mathbf p),
\qquad
w_j(\mathbf p)
=
\alpha_j+\sum_{\ell=1}^J \gamma_{j\ell}\log p_\ell+\beta_j\log\!\Big(\frac{B}{P^S(\mathbf p)}\Big),
\]
where the Stone price index is
\[
\log P^S(\mathbf p)=\sum_{\ell=1}^J \omega_\ell \log p_\ell,
\qquad
\sum_{\ell=1}^J \omega_\ell = 1,
\]
with fixed weights \(\omega_\ell\). Let \(\{\mathbf p^{(n)}\}_{n\ge 1}\) be a sequence such that
\[
w_j(\mathbf p^{(n)})\downarrow 0,
\qquad
\frac{p_j^{(n)}}{p_k^{(n)}}\to \overline c_{jk}\in(0,\infty)
\quad (k\neq j),
\]
and assume
\[
\gamma_{jj}-\beta_j\omega_j<0.
\]
Then
\[
-2+\kappa_{jj}^j(\mathbf p^{(n)})\to -2,
\]
and, for each \(k\neq j\),
\[
\delta_{jk}(\mathbf p^{(n)})\bigl(1-\kappa_{kj}^j(\mathbf p^{(n)})\bigr)
\to
-\overline c_{jk}\frac{\gamma_{jk}-\beta_j\omega_k}{\gamma_{jj}-\beta_j\omega_j}.
\]
\end{enumerate}
\end{proposition}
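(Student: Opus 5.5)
The plan is to use the semi-elasticity identities \eqref{eq:appendix.identity.diag}--\eqref{eq:appendix.identity.offdiag} from Lemma~\ref{lemma:semi.elasticity.representation}. For each model we compute $\eta_{jj}$, $\eta_{jk}$ and their $p_j$-derivatives in closed form and then pass to the limit along the stated sequence. In both cases the key structural fact is the same: the own-price second derivative of $q_j$ is either zero (linear) or proportional to terms that stay bounded, whereas $\eta_{jj}^2$ blows up like $1/q_j^2$ or $1/w_j^2$. Hence the normalized curvature terms vanish, and only the slope ratio $-q_{j,k}/q_{j,j}$ survives in the off-diagonal entries.

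\emph{Linear demand.} Here $q_{j,jj}=q_{j,jk}=0$, so directly from the combined formula in Lemma~\ref{lemma:semi.elasticity.representation} we have $\kappa^j_{jj}=0$. The off-diagonal expression is $q_jq_{j,jk}/q_{j,j}^2-q_{j,k}/q_{j,j}=-\beta_{jk}/\beta_{jj}$. Both quantities are constant on $\{q_j>0\}$, and the claim follows trivially for any sequence.

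\emph{Stone LA/AIDS.} Write $q_j=Bw_j/p_j$ and $A_{jk}=\gamma_{jk}-\beta_j\omega_k$, so that $\partial_{p_k}w_j=A_{jk}/p_k$. Then
\[
\eta_{jj}=\frac{A_{jj}}{p_jw_j}-\frac1{p_j},\qquad \eta_{jk}=\frac{A_{jk}}{p_kw_j}\quad(k\ne j).
\]
Differentiating in $p_j$ gives
\[
\partial_j\eta_{jj}=-\frac{A_{jj}}{p_j^2w_j}-\frac{A_{jj}^2}{p_j^2w_j^2}+\frac1{p_j^2},\qquad
\partial_j\eta_{jk}=-\frac{A_{jk}A_{jj}}{p_kp_jw_j^2}.
\]
Now multiply numerator and denominator by $p_j^2w_j^2$. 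The normalizing denominator becomes $(A_{jj}-w_j)^2$, which tends to $A_{jj}^2>0$; this is where the assumption $A_{jj}<0$ is used. The numerators become $-A_{jj}w_j-A_{jj}^2+w_j^2\to -A_{jj}^2$ and $-(p_j/p_k)A_{jk}A_{jj}\to-\overline c_{jk}A_{jk}A_{jj}$. This yields the diagonal limit $-1+(-1)=-2$ and the off-diagonal limit $-\overline c_{jk}A_{jk}/A_{jj}$.

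The only delicate step is the bookkeeping in the AIDS derivatives, in particular the Stone-index term $-\beta_j\omega_\ell/p_\ell$ inside $\partial w_j$. This is why the coefficient is $A_{jk}$ rather than $\gamma_{jk}$ and why the weights must be fixed. The limit also relies on both $w_j\to0$ and the convergence of $p_j/p_k$; absolute prices play no role once everything is rescaled by $p_j^2w_j^2$. I would double-check the sign of the $w_j^2$ term, but it vanishes in the limit in any case.
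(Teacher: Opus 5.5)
Your proposal is correct and follows the paper's proof: the Stone LA/AIDS part reproduces the same closed-form semi-elasticities, the same $p_j$-derivatives, and the same rescaling by $p_j^2w_j^2$, which gives $-1$ and $-\overline c_{jk}A_{jk}/A_{jj}$. In the linear case you read $\kappa^j_{jj}=0$ directly from $q_{j,jj}=0$ and use the raw-derivative form of the off-diagonal entry, rather than computing $\eta_{jj}=-\beta_{jj}/q_j$ and $\eta_{jk}=-\beta_{jk}/q_j$ as the paper does; this is only a cosmetic difference.
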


\begin{proof}
\textbf{(i) Linear demand.}
Since \(q_j(\mathbf p)=\gamma_j-\sum_{\ell=1}^J \beta_{j\ell}p_\ell,\) we have $\eta_{jj}(\mathbf p)
=
-\frac{\beta_{jj}}{q_j(\mathbf p)}$, and $\eta_{jk}(\mathbf p)
=
-\frac{\beta_{jk}}{q_j(\mathbf p)}$ for $k \neq j$. Differentiating with respect to \(p_j\), $\partial_{p_j}\eta_{jj}(\mathbf p)
=
-\frac{\beta_{jj}^2}{q_j(\mathbf p)^2}$ and $\partial_{p_j}\eta_{jk}(\mathbf p)
=
-\frac{\beta_{jk}\beta_{jj}}{q_j(\mathbf p)^2}.$ Therefore $\frac{\partial_{p_j}\eta_{jj}(\mathbf p)}{\eta_{jj}(\mathbf p)^2}
=
-1$ and $\frac{\partial_{p_j}\eta_{jk}(\mathbf p)}{\eta_{jj}(\mathbf p)^2}
=
-\frac{\beta_{jk}}{\beta_{jj}}.$ Applying \eqref{eq:appendix.identity.diag}--\eqref{eq:appendix.identity.offdiag} gives $-2+\kappa_{jj}^j(\mathbf p)=-2$ and $\delta_{jk}(\mathbf p)\bigl(1-\kappa_{kj}^j(\mathbf p)\bigr)
=
-\frac{\beta_{jk}}{\beta_{jj}},$ for every \(\mathbf p\) with \(q_j(\mathbf p)>0\). In particular, the same limits hold along any sequence with \(q_j(\mathbf p^{(n)})\downarrow 0\).

\medskip
\noindent
\textbf{(ii) LA/AIDS with Stone index.}
Define $A_{jk}:=\gamma_{jk}-\beta_j\omega_k.$ Since $q_j(\mathbf p)=\frac{B}{p_j}w_j(\mathbf p),$ we have $\log q_j(\mathbf p)=\log B-\log p_j+\log w_j(\mathbf p),$ and therefore
\[
\eta_{jk}(\mathbf p)
=
-\frac{\mathbf 1\{j=k\}}{p_j}
+
\frac{1}{w_j(\mathbf p)}\partial_{p_k}w_j(\mathbf p).
\]
Because $\log P^S(\mathbf p)=\sum_{\ell=1}^J \omega_\ell \log p_\ell,$ it follows that $\partial_{p_k}w_j(\mathbf p)=\frac{A_{jk}}{p_k}.$ Therefore
\[
\eta_{jj}(\mathbf p)
=
\frac{A_{jj}-w_j(\mathbf p)}{p_j w_j(\mathbf p)},
\qquad
\eta_{jk}(\mathbf p)=\frac{A_{jk}}{p_k w_j(\mathbf p)}
\quad (k\neq j).
\]

Along the stated sequence, \(w_j(\mathbf p^{(n)})\to 0\), so
\[
\eta_{jj}(\mathbf p^{(n)})\sim \frac{A_{jj}}{p_j^{(n)}w_j(\mathbf p^{(n)})},
\qquad
\eta_{jk}(\mathbf p^{(n)})=\frac{A_{jk}}{p_k^{(n)}w_j(\mathbf p^{(n)})}.
\]

Since \(\partial_{p_j}w_j(\mathbf p)=A_{jj}/p_j\), we obtain
\[
\partial_{p_j}\eta_{jj}(\mathbf p)
=
\frac{w_j(\mathbf p)^2-A_{jj}w_j(\mathbf p)-A_{jj}^2}{p_j^2 w_j(\mathbf p)^2},
\]
and hence
\[
\frac{\partial_{p_j}\eta_{jj}(\mathbf p)}{\eta_{jj}(\mathbf p)^2}
=
\frac{w_j(\mathbf p)^2-A_{jj}w_j(\mathbf p)-A_{jj}^2}{(A_{jj}-w_j(\mathbf p))^2}
\to -1.
\]
Applying \eqref{eq:appendix.identity.diag} yields
\[
-2+\kappa_{jj}^j(\mathbf p^{(n)})\to -2.
\]

For \(k\neq j\),
\[
\partial_{p_j}\eta_{jk}(\mathbf p)
=
-\frac{A_{jk}A_{jj}}{p_k p_j w_j(\mathbf p)^2},
\]
so
\[
\frac{\partial_{p_j}\eta_{jk}(\mathbf p)}{\eta_{jj}(\mathbf p)^2}
=
-\frac{p_j}{p_k}\frac{A_{jk}A_{jj}}{(A_{jj}-w_j(\mathbf p))^2}.
\]
Along the sequence, \(w_j(\mathbf p^{(n)})\to 0\) and \(p_j^{(n)}/p_k^{(n)}\to \overline c_{jk}\), so
\[
\frac{\partial_{p_j}\eta_{jk}(\mathbf p^{(n)})}{\eta_{jj}(\mathbf p^{(n)})^2}
\to
-\overline c_{jk}\frac{A_{jk}}{A_{jj}}
=
-\overline c_{jk}\frac{\gamma_{jk}-\beta_j\omega_k}{\gamma_{jj}-\beta_j\omega_j}.
\]
Applying \eqref{eq:appendix.identity.offdiag} yields
\[
\delta_{jk}(\mathbf p^{(n)})\bigl(1-\kappa_{kj}^j(\mathbf p^{(n)})\bigr)
\to
-\overline c_{jk}\frac{\gamma_{jk}-\beta_j\omega_k}{\gamma_{jj}-\beta_j\omega_j}.
\]
This proves the claim.
\end{proof}

\paragraph{Scope and interpretation.}
These are rowwise statements about product \(j\), conditional on a feasible sequence. They neither assert that all product demands can vanish simultaneously nor establish a nonsingular full pricing Jacobian. In the Stone specification, the weights are held fixed. If a finite-price boundary interpretation is desired, additionally require \(p_k^{(n)}\to\bar p_k\in(0,\infty)\); then quantity \(q_j\) also vanishes and the semi-elasticity divergence rates stated below apply.

Linear demand is the simplest boundary case. As product \(j\)'s demand approaches zero, the own semi-elasticity and cross semi-elasticities with nonzero \(\beta_{jk}\) diverge at rate \(q_j^{-1}\), and their own-price derivatives diverge at rate \(q_j^{-2}\). A zero cross coefficient gives identically zero cross terms. After normalization, however, these divergences cancel exactly. The diagonal term converges to \(-2\), while the off-diagonal term converges to \(-\beta_{jk}/\beta_{jj}\). Thus the small-share limit preserves the relative cross-slope structure of the linear demand system. In this sense, linear demand can remain non-diagonal even near the boundary: substitution patterns continue to matter at first order.

Under finite positive limiting prices, LA/AIDS exhibits a similar boundary logic, but with richer dependence on the sequence by which product \(j\)'s expenditure share vanishes. As \(w_j(\mathbf p^{(n)})\downarrow 0\), the own semi-elasticity and cross semi-elasticities with nonzero \(A_{jk}\) diverge like \(w_j^{-1}\), and their own-price derivatives diverge like \(w_j^{-2}\). Cross terms with \(A_{jk}=0\) are identically zero. The normalized diagonal term therefore converges to the same boundary benchmark \(-2\). The off-diagonal term, however, depends not only on the demand primitives \(\gamma_{jk}\) and \(\beta_j\), but also on the limiting relative price ratio \(\overline c_{jk}\) and the Stone-index weights \(\omega_k\). The term \(\gamma_{jk}-\beta_j\omega_k\) is the relevant log-price derivative of \(w_j\) with respect to \(p_k\), so the limit preserves the first-order substitution structure encoded by the LA/AIDS system rather than washing it out.

Taken together, these two examples show that boundary-type demand systems behave very differently from logit-type ray limits. In logit and CES, off-diagonal interaction terms vanish asymptotically, so the limiting Jacobian becomes diagonal. In linear and LA/AIDS demand, by contrast, nonzero cross slopes generate normalized off-diagonal terms that survive. Small shares therefore do not imply asymptotic independence across products: in boundary models, substitution patterns continue to shape pass-through even as product \(j\)'s quantity demand or expenditure share becomes arbitrarily small.

\subsection{Nested Logit with Multiproduct Ownership}
\label{appendix:nested.logit.benchmark}

Let products belong to a fixed finite collection of nests. Under the standard nested-logit normalization, define
\[
v_j=\exp\!\left(\frac{\delta_j-\alpha p_j}{1-\sigma}\right),
\quad V_g=\sum_{j\in\mathcal J_g}v_j,
\quad z_j=\frac{v_j}{V_{g(j)}},
\quad s_g=\frac{V_g^{1-\sigma}}{1+\sum_hV_h^{1-\sigma}},
\quad q_j=Ms_{g(j)}z_j,
\]
where \(\alpha>0\) and \(0\le\sigma<1\) are fixed. The outside alternative has normalized attractiveness one. This specification makes the price scale explicit: \(-q_j/\partial_jq_j\) contains the factor \((1-\sigma)/\alpha\).

\begin{proposition}[Nested-logit benchmark]
\label{prop:nested.logit.small.share}
Fix an ownership partition and consider a sequence of interior equilibria for the demand system above, allowing costs to vary. Suppose \(s_g\to0\) for every nest and \(z_j\to\bar z_j\in[0,1]\) for every product. Write
\[
B_j^*=1-\sigma\bar z_j,
\qquad
S_{fg}=\sum_{\ell\in\mathcal J_f\cap\mathcal J_g}\bar z_\ell.
\]
Then:
\begin{enumerate}[(i)]
\item The demand-side matrix \(-2I+\mathbf K\) has the block-diagonal limit
\[
J_{jk}^{d,*}=
\begin{cases}
-1-\dfrac{\sigma\bar z_j(1-\bar z_j)}{(B_j^*)^2},&k=j,\\[0.6em]
\dfrac{\sigma\bar z_j\bar z_k}{(B_j^*)^2},&k\ne j,\ g(k)=g(j),\\[0.6em]
0,&g(k)\ne g(j).
\end{cases}
\]
Diversion converges to
\[
D_{j\to k}^*=
\begin{cases}
-1,&k=j,\\
\sigma\bar z_k/B_j^*,&k\ne j,\ g(k)=g(j),\\
0,&g(k)\ne g(j),
\end{cases}
\qquad \Lambda^*=-\Omega\circ D^*.
\]
\item Products in firm--nest group \((f,g)\) have the common limiting markup
\begin{equation}
\label{eq:nested.corrected.markup}
m_{fg}^*=\frac{1-\sigma}{\alpha(1-\sigma S_{fg})}.
\end{equation}
In particular, limiting markups lie between \((1-\sigma)/\alpha\) and \(1/\alpha\).
\item Define the matrix \(L^*\) by
\begin{equation}
\label{eq:nested.markup.derivative}
L_{jk}^*=
\begin{cases}
-\dfrac{\sigma\bar z_k\bigl(\mathbf1\{k\in\mathcal J_{f(j)}\}-S_{f(j),g(j)}\bigr)}{(1-\sigma S_{f(j),g(j)})^2},&g(k)=g(j),\\[0.9em]
0,&g(k)\ne g(j).
\end{cases}
\end{equation}
Then \(I-L^*\) is nonsingular and equilibrium unit pass-through satisfies
\begin{equation}
\label{eq:nested.corrected.passthrough}
\mathbf P\longrightarrow\mathbf P^*:=(I-L^*)^{-1}.
\end{equation}
The limit is block diagonal by nest, and each block satisfies \(P_g^*\mathbf1=\mathbf1\). If a firm owns the entire nest, \(P_g^*=I\).
\end{enumerate}
\end{proposition}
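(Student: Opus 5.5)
The plan is to compute everything from the three-layer log-demand decomposition $\log q_j=\log M+\log s_{g(j)}+\log z_j$. With $\kappa_0=\alpha/(1-\sigma)$:
\[
\partial_k\log z_j=-\kappa_0\bigl(\mathbf 1\{k=j\}-z_k\bigr)\ \ (g(k)=g(j)),
\qquad
\partial_k\log s_g=-\alpha z_k(1-s_g)\ \ (k\in\mathcal J_g),
\qquad
\partial_k\log s_g=\alpha s_{g(k)}z_k\ \ (k\notin\mathcal J_g).
\]

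\emph{Part (i).} The own semi-elasticity is
\[
\eta_{jj}=-\kappa_0\bigl[1-z_j+(1-\sigma)z_j(1-s_g)\bigr]\to-\kappa_0B_j^*.
\]
I then use the log-form of the demand component recorded after the proof of Lemma~\ref{lemma:matrix.curvature.decomposition}:
\[
-2+\kappa^j_{jj}=-1+\frac{\partial_{jj}\log q_j}{\eta_{jj}^2},
\qquad
K_{jk}=\frac{\partial_{jk}\log q_j}{\eta_{jj}^2}.
\]
For $j,k$ in the same nest, the second derivative of $\log q_j$ equals $(\kappa_0-\alpha)\,\partial_k z_j$ plus terms of order $s_g$. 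Since $\partial_k z_j=-\kappa_0 z_j(\mathbf 1\{k=j\}-z_k)$ and $\kappa_0-\alpha=\sigma\kappa_0$, dividing by $\kappa_0^2(B_j^*)^2$ yields the displayed within-nest entries. Cross-nest second derivatives carry a factor $s_h$ and so vanish. For diversion, $\partial_jq_k/q_k$ with $k\neq j$ in the same nest equals $\kappa_0\sigma z_j+O(s_g)$ (up to sign conventions). Then $D_{j\to k}=-\partial_jq_k/\partial_jq_j$, and using $q_k/q_j=z_k/z_j$ this gives $\sigma\bar z_k/B_j^*$; across nests the ratio is $O(s)$. All of this is continuity in $(s,z)$ at fixed $\alpha,\sigma$, so this step is routine.

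\emph{Part (ii).} I divide firm $f$'s first-order condition for $j\in\mathcal J_f\cap\mathcal J_g$ by $q_j$. Using the expressions above, it becomes
\[
1-\kappa_0 m_j+\sigma\kappa_0\sum_{l\in\mathcal J_f\cap\mathcal J_g}z_lm_l+O\Bigl(\max_h s_h\cdot\max_l m_l\Bigr)=0 .
\]
The $O(\cdot)$ term collects the $\alpha(1-s_g)$ and cross-nest pieces. Before passing to the limit I need markups bounded along the sequence. For this I first take $j$ to be the product with the largest markup in the firm, which gives $m_{\max}\le C$. Given boundedness, differences of the displayed equation across $j$ show that within-group markups coincide up to $O(s)$. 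Substituting a common value $m$ gives $m(1-\sigma S_{fg})=1/\kappa_0$, which is \eqref{eq:nested.corrected.markup}. Its range follows from $S_{fg}\in[0,1]$.

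\emph{Part (iii).} Pass-through is invariant to a nonsingular row rescaling of the pricing equations at the baseline zero. I therefore replace $\mathbf f$ by the exact rewritten condition $G_j(\mathbf p,\mathbf c)=p_j-c_j-\mu_{f(j)g}(\mathbf z(\mathbf p))-R_j(\mathbf p)$, where $\mu_{fg}(\mathbf z)=(1-\sigma)/[\alpha(1-\sigma S_{fg}(\mathbf z))]$. Then $\mathbf P=(I-\partial_{\mathbf p}\mu-\partial_{\mathbf p}R)^{-1}$. By the chain rule,
\[
\partial_{p_k}\mu_{fg}=\frac{\sigma(1-\sigma)}{\alpha(1-\sigma S)^2}\,\partial_{p_k}S_{fg},
\qquad
\partial_{p_k}S_{fg}=-\kappa_0 z_k\bigl(\mathbf 1\{k\in\mathcal J_f\}-S_{fg}\bigr),
\]
which reproduces $L^*$ exactly.

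\emph{Main obstacle.} The hard part is showing that the remainder $R$ and its price derivatives are $O(\max_h s_h)$ uniformly. Markups are bounded, and every derivative of $s_g$ or $s_h$ brings a factor $\alpha s$ while $z$ stays smooth. So I will handle it by writing $R$ explicitly as the solution of the firm's linear-in-markups first-order conditions and differentiating that closed form.

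\emph{Nonsingularity and the limit.} Note $L^*$ has the low-rank group structure $L^*_{jk}=a_{f(j)}z_k(\mathbf 1\{k\in f(j)\}-S_{f(j)})$. By a Woodbury or group-reduction argument, $I-L^*$ is nonsingular iff a small group-level matrix is. I expect that matrix to have the diagonal factors $1+\sigma S_f/(1-\sigma S_f)^2=\chi_f^{-1}>0$ plus a rank-one term, so its determinant can be computed and shown to be positive. Then $\mathbf P\to(I-L^*)^{-1}$ by continuity of inversion.

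\emph{Remaining claims.} Block diagonality follows because $L^*$ vanishes across nests. Row sums satisfy $\sum_k z_k(\mathbf 1\{k\in f\}-S_f)=S_f-S_f=0$, so $L^*\mathbf1=0$ and hence $P^*\mathbf1=\mathbf1$. If one firm owns the whole nest, then $S_f=1$ and every factor $\mathbf 1\{k\in f\}-S_f$ is zero, so $L^*_g=0$ and $P_g^*=I$.
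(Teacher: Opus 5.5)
Your proposal is correct and follows essentially the paper's own route. Part (i) takes semi-elasticity limits as functions of $(s,z)$; part (ii) passes to the limiting first-order conditions, which force a common firm--nest markup; part (iii) rewrites the pricing system as $\mathbf p-\mathbf c=\mathfrak m(\mathbf p)$, differentiates the limiting markup field (derivatives of nest shares being $O(s)$), and reduces nonsingularity of $I-L^*$ to the group matrix $\operatorname{diag}(1+A)-AS^\top$.

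The differences are minor. The paper obtains bounded, convergent markups at once from strict diagonal dominance of $\Lambda^*$, and it proves the group matrix nonsingular by diagonal dominance with margin one rather than by your rank-one determinant formula. Your extremal-markup argument only gives an upper bound; running it also for the smallest markup supplies the lower bound needed for boundedness, and it shows markups are positive.
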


\begin{proof}
Set \(B_j=1-\sigma z_j-(1-\sigma)s_j\). Direct differentiation gives
\[
\eta_{jj}=-\frac{\alpha B_j}{1-\sigma},
\qquad
\partial_kz_\ell=-\frac{\alpha}{1-\sigma}z_\ell(\mathbf1\{\ell=k\}-z_k)
\quad(g(k)=g(\ell)),
\]
with \(\partial_kz_\ell=0\) across nests. Own and cross derivatives of \(\eta\), combined with Lemma \ref{lemma:semi.elasticity.representation}, give part (i). For example,
\[
\frac{\partial_j\eta_{jj}}{\eta_{jj}^2}
=-\frac{\sigma z_j(1-z_j)}{B_j^2}-\frac{(1-\sigma)s_j}{B_j},
\]
and the second term vanishes. The exact off-diagonal diversion ratios are
\[
D_{j\to k}=
\begin{cases}
\dfrac{\sigma z_k+(1-\sigma)s_k}{B_j},&g(k)=g(j),\\[0.6em]
\dfrac{(1-\sigma)s_k}{B_j},&g(k)\ne g(j),
\end{cases}
\quad k\ne j.
\]
These converge to the displayed values because \(B_j\to B_j^*\ge1-\sigma>0\).

The normalized first-order conditions are \(\Lambda m=b\), where
\[
b_j=-\frac{q_j}{\partial_jq_j}=\frac{1-\sigma}{\alpha B_j}
\longrightarrow\frac{1-\sigma}{\alpha B_j^*}.
\]
Every row of \(\Lambda^*\) is strictly diagonally dominant: its off-diagonal absolute row sum is at most
\[
\frac{\sigma(1-\bar z_j)}{1-\sigma\bar z_j}<1.
\]
Hence \(\Lambda^*\) is nonsingular, and \(m\to(\Lambda^*)^{-1}b^*\). Multiplying limiting row \(j\) by \(B_j^*\) gives
\[
m_j^*-\sigma\sum_{\ell\in\mathcal J_{f(j)}\cap\mathcal J_{g(j)}}\bar z_\ell m_\ell^*
=\frac{1-\sigma}{\alpha}.
\]
All products in a firm--nest group therefore have the same markup. Solving the resulting scalar equation proves \eqref{eq:nested.corrected.markup}.

For pass-through, write the demand-implied markup field as \(\mathfrak{m}(\mathbf p)=\Lambda(\mathbf p)^{-1}b(\mathbf p)\). In the neighborhood of the sequence, the normalized pricing equations are
\[
\mathbf f=\Lambda(\mathbf p)\{\mathfrak{m}(\mathbf p)-(\mathbf p-\mathbf c-\mathbf t)\}.
\]
At equilibrium, differentiating gives \(J_{\mathbf f}=\Lambda(\partial\mathfrak{m}/\partial\mathbf p^\top-I)\). Entries of \(\Lambda\), \(b\), and their first derivatives are rational functions of \(s\) and \(z\), with denominators bounded away from zero. In addition, \(\partial_ks_j=O(s_j)\). Thus the derivative limit follows by differentiating the limiting markup field, not by an unsupported interchange of a limit and a derivative. Within group \((f,g)\),
\[
\partial_k S_{fg}
=-\frac{\alpha}{1-\sigma}z_k(\mathbf1\{k\in\mathcal J_f\}-S_{fg})
\quad(k\in\mathcal J_g),
\]
which, substituted into \eqref{eq:nested.corrected.markup}, yields \(\partial\mathfrak{m}/\partial\mathbf p^\top\to L^*\).

It remains to verify nonsingularity. If \((I-L_g^*)v=0\), the identical rows of \(L_g^*\) within a firm--nest group imply that \(v_j=u_f\) within that group. For groups with positive \(S_f=S_{fg}\), these common values satisfy
\[
(1+A_f)u_f-A_f\sum_h S_hu_h=0,
\qquad A_f:=\frac{\sigma S_f}{(1-\sigma S_f)^2}\ge0.
\]
The group matrix \(\operatorname{diag}(1+A)-A S^\top\) is strictly diagonally dominant, with dominance margin one. Hence all \(u_f=0\). For a zero-share group, the corresponding rows of \(L_g^*\) are zero, so its coordinates also vanish. Thus \(I-L^*\) is nonsingular. Continuity of inversion now proves \eqref{eq:nested.corrected.passthrough}.

Finally, \(L_g^*\mathbf1=0\) because \(\sum_{k\in\mathcal J_g}\bar z_k=1\) and \(\sum_{k\in\mathcal J_f\cap\mathcal J_g}\bar z_k=S_{fg}\). Therefore \(P_g^*\mathbf1=\mathbf1\). If a single firm owns the nest, \(S_{fg}=1\) and every entry of \(L_g^*\) is zero, giving \(P_g^*=I\).
\end{proof}

The normalization in \eqref{eq:nested.corrected.markup} matters economically. For example, two equally popular products in one nest with \(\sigma=1/2\), \(\alpha=1\), and a common owner have limiting markup one and identity pass-through. Cross-nest effects vanish even when ownership spans nests, because cross-nest diversion and its derivatives vanish while limiting markups remain bounded. Within a nest, split ownership can sustain product-specific spillovers.

\paragraph{Normalized residual demand.}
The same model also separates baseline curvature from a full normalized profile. Fix \(j\), let \(z=\bar z_j\), \(B=1-\sigma z\), and displace only its price by \(t/[-\eta_{jj}(\mathbf p^{(n)})]\). The ratio of the new to the old \(v_j\) tends to \(e^{-t/B}\), while the ratio of the nest sums tends to \(T(t)=1-z+ze^{-t/B}\). Since the outside alternative dominates aggregate nest shares, the resulting demand ratio is
\[
Q_j(t)=e^{-t/B}T(t)^{-\sigma}.
\]
This holds uniformly on bounded \(t\)-intervals and is positive for every finite \(t\). Its conditional share is \(z(t)=ze^{-t/B}/T(t)\), and its normalized curvature is
\[
A_j(t)=-\frac{\sigma z(t)[1-z(t)]}{[1-\sigma z(t)]^2}.
\]
Except in degenerate cases, this function changes with \(t\). Consequently the negative baseline value \(A_j(0)\) does not satisfy the constant-profile hypothesis of Proposition \ref{prop:local.shape.jacobian}, and it does not imply a finite normalized choke point.

\subsection{Proof of Corollary~\ref{corollary:nested.group.response}}
\label{appendix:nested.group.response.proof}

Fix a nest and suppress its index. Let $x=P_g^*h$, $X=\sum_{j\in\mathcal J_g}z_jx_j$, and, for $S_f>0$, let $\overline x_f=S_f^{-1}\sum_{j\in\mathcal J_f\cap\mathcal J_g}z_jx_j$. Define $A_f=\sigma S_f/(1-\sigma S_f)^2$, so $\chi_f=(1+A_f)^{-1}$. All group averages and group sums below concern groups with $S_f>0$. The matrix $L^*$ in Proposition~\ref{prop:nested.logit.small.share} gives
\[
x_j=h_j+A_f(X-\overline x_f).
\]
Averaging within group $f$ yields
\[
\overline x_f=\chi_f\overline h_f+(1-\chi_f)X.
\]
Averaging again across the nest and using $\sum_fS_f=1$ gives $D_gX=\sum_f S_f\chi_f\overline h_f$. Since $\chi_f>0$, $D_g>0$ and $X=H_g$. Substituting back proves \eqref{equation:nested.group.response}. If $S_f=0$, the corresponding row of $L^*$ vanishes, giving $x_j=h_j$ without defining a group average.

For $j$ in a positive-share group $f$, the matrix entries are
\[
(P_g^*)_{jk}=\mathbf1\{j=k\}
 +(1-\chi_f)z_k\left(\frac{\chi_{f(k)}}{D_g}
 -\frac{\mathbf1\{f(k)=f\}}{S_f}\right).
\]
Across firms this expression is nonnegative. Within a firm, $D_g\geq S_f\chi_f$, so off-diagonal entries are nonpositive. On the diagonal,
\[
1\geq(P_g^*)_{jj}
\geq1-(1-\chi_f)\frac{z_j}{S_f}
\geq\chi_f>0.
\]
These inequalities also cover zero-share individual products in a positive-share group. Zero-share groups have identity rows, completing the proof.

\section{Numerical Illustrations\label{section:numerical.illustration}}
\newcommand{\VtwoMeanKOne}{0.323}
\newcommand{\VtwoMaxKOne}{3.139}
\newcommand{\VtwoMaxFOC}{0.000}
\newcommand{\VtwoFiniteMeanMin}{0.179}
\newcommand{\VtwoFiniteMeanMax}{0.444}

Each exercise addresses a distinct use of the theory: how many interaction terms a local prediction needs, how far that derivative extrapolates to a finite cost change, which within-nest responses survive small shares, and how percentage incidence differs from unit incidence. The exercises use simulated demand and cost primitives throughout. Their purpose is to check these mechanisms in transparent benchmarks, rather than estimate an empirical policy effect.

\subsection{Logit Approximation and Finite Cost Changes}

There are six products and two firms, each owning three products. Ownership is fixed, while product characteristics and costs differ. Demand is multinomial logit with outside utility zero and common price coefficient $\alpha=1$. In each of 100 baseline markets, I draw $\delta_j^{(0)}\sim N(1.2,0.4^2)$ and $c_j\sim U[0.2,0.8]$ independently across products. A common utility shift takes ten equally spaced values from $0$ to $-6$, yielding 1,000 configurations. This fixed-dimensional experiment reduces inside shares by making the outside option more attractive relative to all inside goods.

The equilibrium solver uses the common markup within each firm: $m_f(1-S_f)=1/\alpha$, where $S_f$ is the firm's total quantity share. Every accepted solution also satisfies the original product pricing conditions, normalized by own-price demand slopes, with maximum absolute residual below $10^{-10}$. Pass-through is computed analytically and checked independently against differentiated pricing conditions and re-solved equilibria after small product-specific cost perturbations.

Writing the normalized pricing Jacobian as $J_f=A+B$, with diagonal $A$, define $\Gamma=-BA^{-1}$. The Neumann approximations are
\[
\widehat\Psi_K=-A^{-1}\sum_{k=0}^K\Gamma^k\Lambda,\qquad K\in\{0,1,2\}.
\]
Table~\ref{tab:v2_matrix_errors} compares these with the exact local derivative and the identity approximation. The order-1 mean matrix error is $\VtwoMeanKOne\%$, with a maximum of $\VtwoMaxKOne\%$. These results describe this logit calibration, whose substitution structure is particularly tractable.

\begin{table}[H]
\centering
\caption{Accuracy of local pass-through matrix approximations}
\label{tab:v2_matrix_errors}
\begin{tabular}{lrr}
\toprule
Approximation & Mean error (\%) & Maximum error (\%) \\
\midrule
Neumann $K=0$ & 2.285 & 12.375 \\
Neumann $K=1$ & 0.323 & 3.139 \\
Neumann $K=2$ & 0.057 & 0.847 \\
Identity & 8.905 & 28.247 \\
\bottomrule
\end{tabular}

\end{table}

Figure~\ref{fig:neumann_matrix} plots relative Frobenius errors against the feedback norm. Higher-order truncations reduce errors in this exercise; this does not establish a universal ordering of errors across different demand systems.

\begin{figure}[H]
\centering
\includegraphics[width=.98\textwidth]{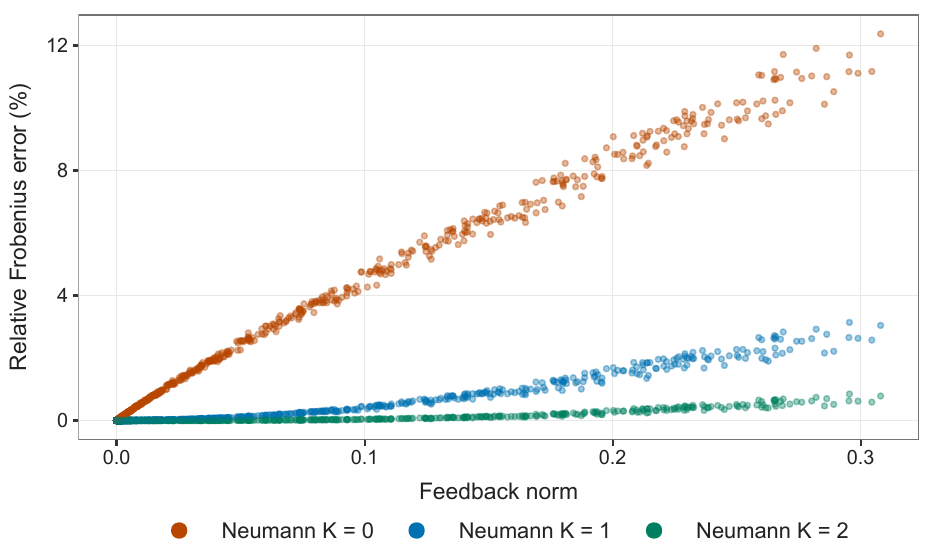}
\caption{Neumann approximation errors across logit configurations}
\label{fig:neumann_matrix}
\begin{minipage}{.95\textwidth}\footnotesize
\emph{Notes:} Each point represents one configuration and truncation order. The vertical axis is $100\|\widehat\Psi_K-\Psi\|_F/\|\Psi\|_F$; the horizontal axis is $\|\Gamma\|_\infty$. No fitted curve is imposed. The reproduction separately checks the convergence norm and the absolute-error remainder bound.
\end{minipage}
\end{figure}

Figure~\ref{fig:price_comparison} considers per-unit cost changes of $0.1$ on all products, firm 1's products, or product 1. It compares averages of the local predictions $\Psi\,d\mathbf t$, $\widehat\Psi_1\,d\mathbf t$, and $d\mathbf t$. The identity tends to overstate average responses as shares rise in this design. Agreement in averages need not imply accuracy for individual products; full response-vector errors are also reported in the reproduction output.

\begin{figure}[H]
\centering
\includegraphics[width=.98\textwidth]{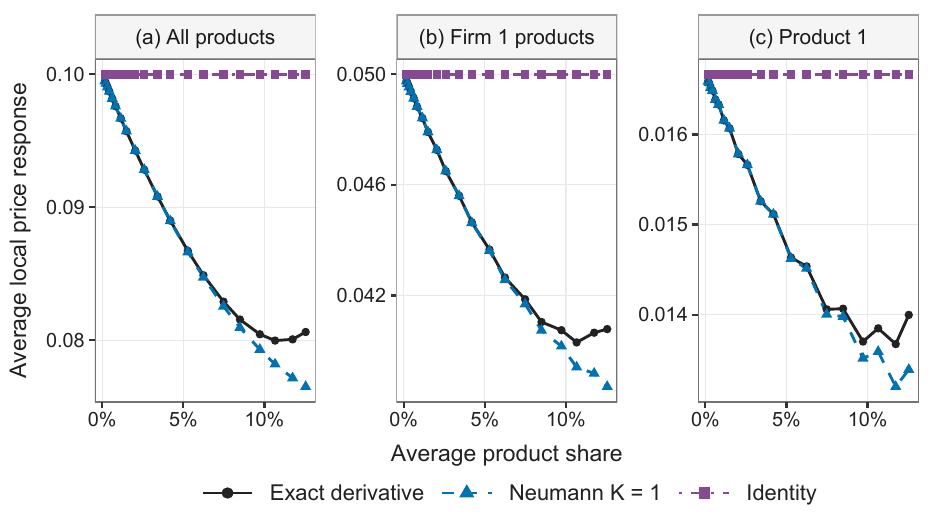}
\caption{Average local price responses under three cost directions}
\label{fig:price_comparison}
\begin{minipage}{.95\textwidth}\footnotesize
\emph{Notes:} Prices are averaged over all six products. Configurations are placed in 20 equal-count bins by average product share, using the same bins for every method and scenario. Lines connect within-bin means. Only $K=1$ enters the Neumann curve; the exact derivative is counted once per configuration and scenario. All curves are local predictions, not re-solved finite-tax equilibria.
\end{minipage}
\end{figure}

A separate exercise re-solves equilibrium after per-unit taxes of $0.01$, $0.1$, and $0.25$ on affected products. Figure~\ref{fig:v2_finite_tax} measures the relative Euclidean error of $\Psi\,d\mathbf t$ against the actual finite equilibrium price change. At tax $0.1$, mean errors across scenarios range from $\VtwoFiniteMeanMin\%$ to $\VtwoFiniteMeanMax\%$. This linearization error is distinct from the error due to approximating the matrix $\Psi$.

\begin{figure}[H]
\centering
\includegraphics[width=.98\textwidth]{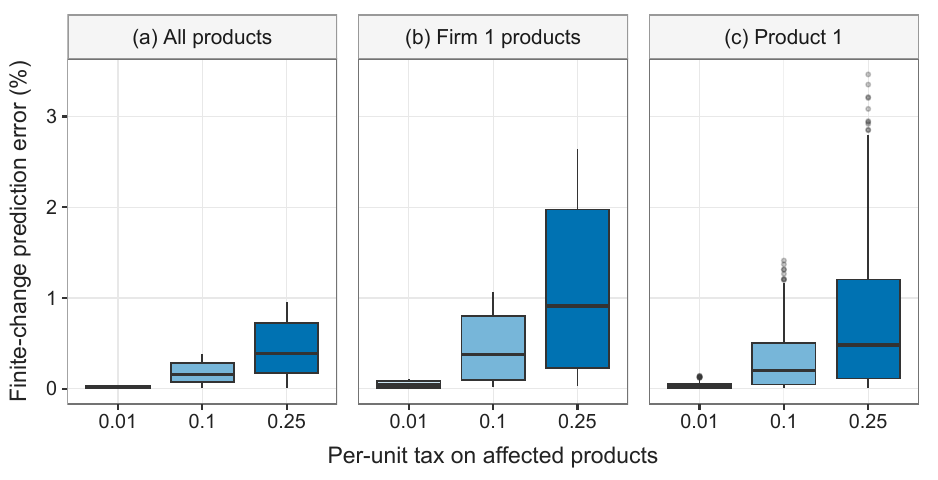}
\caption{Local derivative predictions versus finite-tax equilibria}
\label{fig:v2_finite_tax}
\begin{minipage}{.95\textwidth}\footnotesize
\emph{Notes:} Each box summarizes 1,000 configurations. Boxes indicate medians and interquartile ranges; whiskers extend to observations within 1.5 interquartile ranges. The denominator is the Euclidean norm of the actual price change. All 9,000 post-tax equilibria satisfy normalized pricing conditions to within $10^{-10}$.
\end{minipage}
\end{figure}

\subsection{The Nested-Logit Benchmark}

There are two nests of three products, $\alpha=1$, $\sigma=0.6$, baseline utilities $(1.1,0.6,0.9,1.3,0.8,0.4)$, and costs $(0.4,0.6,0.3,0.5,0.4,0.7)$. A common utility shift varies from $0$ to $-12$ in unit steps. Three ownership structures are compared: single-product firms; portfolios with ownership vector $(1,1,2,1,2,2)$; and one owner per complete nest.

For this nested-logit specification, the sufficiency result of \citet{garrido2024aggregative} implies that a finite solution of the pricing first-order conditions is a Nash equilibrium. The numerical checks assess residuals and local derivatives without assuming equilibrium uniqueness.

As both nest shares vanish, let $z_k$ be product $k$'s limiting conditional share and $S_{fg}$ firm $f$'s conditional share in nest $g$. The limiting common markup within a firm--nest group is
\[
m_{fg}^*=\frac{1-\sigma}{\alpha(1-\sigma S_{fg})}.
\]
I solve these equations jointly with conditional demand. If $L^*$ is the derivative of the limiting markup function with respect to product prices, evaluated at equilibrium, then the limiting pass-through is $(I-L^*)^{-1}$, where
\[
L_{jk}^*=-\frac{\sigma z_k[\mathbf1\{f(j)=f(k)\}-S_{f(j),g(j)}]}{(1-\sigma S_{f(j),g(j)})^2}
\]
when $j$ and $k$ are in the same nest, and zero otherwise. The pass-through matrix $(I-L^*)^{-1}$ has row sums equal to one. Whole-nest ownership yields $L^*=0$ and identity limiting pass-through.

Figure~\ref{fig:v2_nested} shows convergence to this block limit. With single-product firms or portfolios spanning nests, the identity retains nonzero error as unconditional shares vanish. When one firm owns each entire nest, the identity and block limit coincide. These conclusions concern the stated limiting experiment.

\begin{figure}[H]
\centering
\includegraphics[width=.98\textwidth]{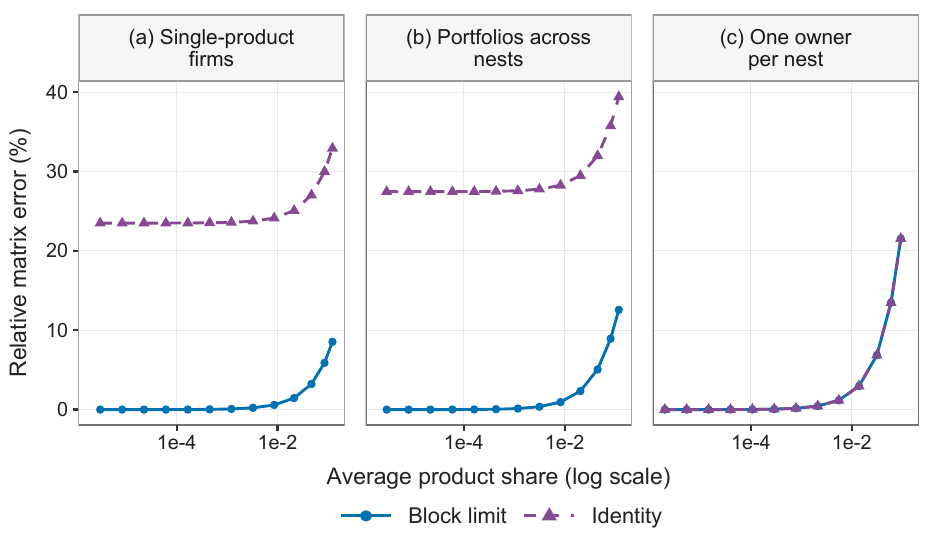}
\caption{Nested-logit limits under alternative ownership structures}
\label{fig:v2_nested}
\begin{minipage}{.95\textwidth}\footnotesize
\emph{Notes:} Relative Frobenius errors compare the finite-market local derivative with the block limit or identity. There are 13 configurations per ownership structure. The horizontal axis is logarithmic. Finite-market solutions satisfy normalized pricing conditions and firm-level second-order conditions. Both curves coincide in the last panel.
\end{minipage}
\end{figure}

\subsection{Finite-Market CES Percentage Pass-Through}

For one inside product and a fixed outside alternative, consider
\[
q(p)=\frac{Bap^{-\sigma}}{1+ap^{1-\sigma}},\qquad
w=\frac{ap^{1-\sigma}}{1+ap^{1-\sigma}},\qquad \sigma>1.
\]
The residual elasticity magnitude is $E=\sigma-(\sigma-1)w$. The monopoly condition $p-c=p/E$ implies
\[
\frac{d\log p}{d\log c}=\frac{E}{\sigma}=1-\frac{\sigma-1}{\sigma}w.
\]
Percentage pass-through is below one at positive finite shares and approaches one as $w\to0$. For example, $\sigma=2$ and $w=1/2$ give $3/4$. Exact unity instead holds for exogenous isoelastic residual demand with a fixed multiplicative demand shifter.

\begin{figure}[H]
\centering
\includegraphics[width=.88\textwidth]{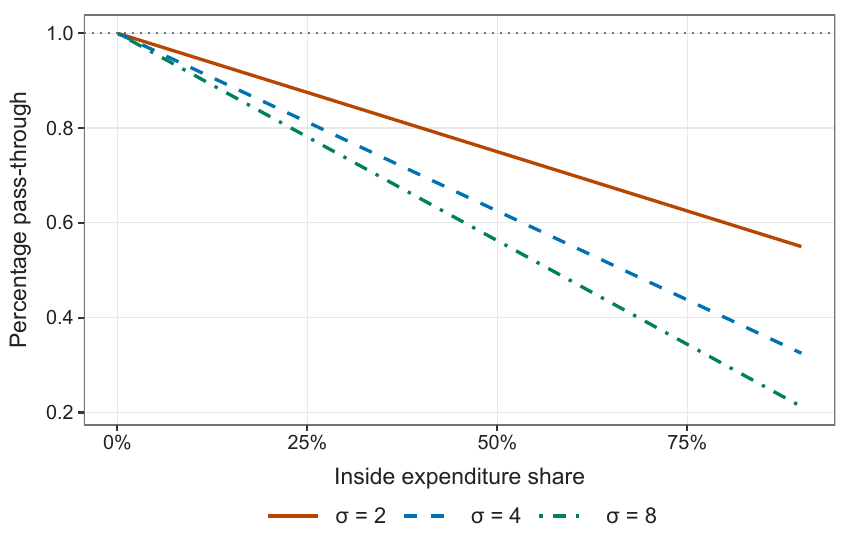}
\caption{Exact percentage pass-through with finite-market CES demand}
\label{fig:v2_ces}
\begin{minipage}{.92\textwidth}\footnotesize
\emph{Notes:} Curves show $1-[(\sigma-1)/\sigma]w$ for $\sigma=2,4,8$. Each point is calibrated to $p=1$ using $a=w/(1-w)$ and $c=1-1/E$. Re-solving after positive and negative log-cost perturbations verifies the derivative at 183 configurations. The dotted line is the small-share limit of one.
\end{minipage}
\end{figure}

\section{Bounds on the Directional Stability Ratio}
\label{app:directional-stability-bounds}

This appendix supports the interpretation of portfolio curvature in Section~\ref{section:characterization.of.pass.through}. It identifies a markup direction that reproduces the scalar benchmark and shows how substitution or complementarity orders the threshold in other directions. These bounds concern an individual firm's second-order condition; equilibrium feedback is governed separately by \eqref{equation:equilibrium.block.feedback}.

The key object is the directional stability ratio
\(R_f(v)\). Let \(m_f=p_f-c_f\) and
\(G_f=\partial q_f(p)/\partial p_f^\top\). Evaluate these objects at an interior stationary point satisfying $q_f+G_f^\top m_f=0$. Suppose that \(q_j>0\) and
\(m_j>0\) for all \(j\in\mathcal J_f\), and that \(G_f\) is symmetric and
negative definite. Define
\[
D_f:=\operatorname{diag}\left(\frac{m_j}{q_j}\right)_{j\in\mathcal J_f},
\qquad
H_f:=-D_f^{1/2}G_fD_f^{1/2}.
\]
Then \(H_f\) is symmetric and positive definite.

\begin{proposition}[Bounds on \(R_f(v)\)]
\label{prop:directional-stability-bounds}
Under the conditions above, for every nonzero direction \(v\),
\[
R_f(v)
=
\frac{z^\top H_f z}{z^\top H_f^2 z},
\qquad
z:=D_f^{-1/2}v,
\]
and hence
\[
\frac{1}{\lambda_{\max}(H_f)}
\leq R_f(v)\leq
\frac{1}{\lambda_{\min}(H_f)}.
\]
Moreover, \(1\) is an eigenvalue of \(H_f\) and \(R_f(m_f)=1\). If the
firm's products are substitutes, so that
\(\partial q_j/\partial p_k\geq0\) for all \(j\neq k\), then
\(\lambda_{\min}(H_f)=1\) and
\[
\frac{1}{\lambda_{\max}(H_f)}
\leq R_f(v)\leq1.
\]
If instead the firm's products are complements, so that
\(\partial q_j/\partial p_k\leq0\) for all \(j\neq k\), then
\(\lambda_{\max}(H_f)=1\) and
\[
1\leq R_f(v)\leq
\frac{1}{\lambda_{\min}(H_f)}.
\]
\end{proposition}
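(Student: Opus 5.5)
We will compute both ingredients of $R_f(v)=S_f(v)/M_f(v)$ in the rescaled coordinates $z=D_f^{-1/2}v$, and then apply Rayleigh-quotient arguments to $H_f$.

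\textbf{Step 1: the ratio formula.} Because $G_f$ is symmetric, $\overline{\mathbf J}_{\mathbf q_f}=G_f$. Substituting $v=D_f^{1/2}z$ gives
\[
S_f(v)=-v^\top G_fv=z^\top H_fz .
\]
For the weight term, $\phi'_{j,v}(0)=(G_fv)_j$. Hence, as in the proof of Lemma~\ref{lemma:directional.soc},
\[
M_f(v)=(G_fv)^\top D_f(G_fv)=z^\top D_f^{1/2}G_fD_fG_fD_f^{1/2}z=z^\top H_f^2z,
\]
using $D_f^{1/2}G_fD_f^{1/2}=-H_f$ twice. Positive definiteness of $H_f$ makes both quadratic forms positive for $z\neq0$.

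\textbf{Step 2: the two-sided bound.} Diagonalize $H_f$ orthogonally and write $y=H_f^{1/2}z$. Then
\[
R_f(v)=\frac{y^\top y}{y^\top H_fy}.
\]
This is the reciprocal of a Rayleigh quotient, so it lies in $[1/\lambda_{\max},1/\lambda_{\min}]$.

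\textbf{Step 3: the eigenvalue $1$.} The stationarity condition $q_f+G_fm_f=0$ uses symmetry of $G_f$. Set $u=D_f^{1/2}\mathbf 1$, so that $D_f^{1/2}u=D_f\mathbf 1=\operatorname{diag}(m_j/q_j)\mathbf 1$. Then
\[
(H_fu)_j=-\bigl(D_f^{1/2}G_fD_f\mathbf 1\bigr)_j .
\]
Since the entries of $D_f\mathbf 1$ are $m_j/q_j$, this vector is not $m_f$, so the natural candidate is instead $z^*=D_f^{-1/2}m_f$, which corresponds to $v=m_f$. For this choice,
\[
H_fz^*=-D_f^{1/2}G_fm_f=D_f^{1/2}q_f .
\]
Moreover $(D_f^{1/2}q_f)_j=\sqrt{m_j/q_j}\,q_j=\sqrt{m_jq_j}$, and $z^*_j=m_j/\sqrt{m_j/q_j}=\sqrt{m_jq_j}$. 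Hence $H_fz^*=z^*$, so $1$ is an eigenvalue with eigenvector $z^*$. Step 1 then gives $R_f(m_f)=\|z^*\|^2/\|z^*\|^2=1$.

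\textbf{Step 4: the substitutes and complements cases.} This is the main step. Under substitutes, $G_f$ has nonnegative off-diagonal entries. So $-H_f$ has nonnegative off-diagonal entries, equivalently $H_f$ is a Z-matrix. Write
\[
H_f=cI-N,\qquad N\geq0 .
\]
The eigenvector $z^*$ is strictly positive, because $m_j,q_j>0$. By Perron--Frobenius, a nonnegative matrix (irreducible or not) has a positive eigenvector only for its spectral radius $\rho(N)$. Hence $c-1=\rho(N)$, which makes $1=c-\rho(N)=\lambda_{\min}(H_f)$. To justify the claim about positive eigenvectors without assuming irreducibility, let $w\geq0$ be a Perron left eigenvector of $N$. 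Then
\[
w^\top Nz^*=\rho\,w^\top z^*,
\]
and $w^\top z^*>0$, so the eigenvalue of $N$ attached to $z^*$ equals $\rho$. Combined with Step 2, this gives $R_f\le1$.

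Under complements, the off-diagonal entries of $H_f$ are nonnegative. Write
\[
H_f=N+dI,\qquad N\geq0 .
\]
The same argument shows that the positive eigenvector $z^*$ belongs to $\lambda_{\max}(H_f)=\rho(N)+d$, so $\lambda_{\max}=1$ and $R_f\ge1$.

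The main obstacle is the Perron--Frobenius step in the reducible case. The left-eigenvector argument above is the route that handles it.
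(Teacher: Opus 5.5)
Your proof is correct and follows the paper's argument. It uses the same substitution $z=D_f^{-1/2}v$ to obtain $S_f=z^\top H_fz$ and $M_f=z^\top H_f^2z$, the same strictly positive eigenvector $D_f^{-1/2}m_f=D_f^{1/2}q_f$ for the eigenvalue $1$, and the same Perron--Frobenius shift argument for substitutes and complements. Your left-Perron-vector step spells out a fact the paper cites without proof, and it holds without irreducibility; the abandoned candidate $u=D_f^{1/2}\mathbf 1$ in Step 3 should simply be deleted.
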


\begin{proof}
Under symmetry,
\[
S_f(v)=-v^\top G_fv,
\qquad
M_f(v)=v^\top G_fD_fG_fv.
\]
Writing \(v=D_f^{1/2}z\) therefore gives
\(S_f(v)=z^\top H_fz\) and \(M_f(v)=z^\top H_f^2z\), which yields the
stated representation of \(R_f(v)\). Expanding \(z\) in an orthonormal
eigenbasis of \(H_f\) gives
\[
R_f(v)
=
\frac{\sum_r\lambda_r z_r^2}
     {\sum_r\lambda_r^2 z_r^2}.
\]
This is a weighted average of \(1/\lambda_r\), with weights proportional
to \(\lambda_r^2z_r^2\), and therefore lies between
\(1/\lambda_{\max}(H_f)\) and \(1/\lambda_{\min}(H_f)\).

At an interior stationary point, symmetry implies
\(q_f+G_fm_f=0\). Since \(m_f=D_fq_f\), letting
\(y=D_f^{1/2}q_f=D_f^{-1/2}m_f\) gives
\[
H_fy=-D_f^{1/2}G_fm_f=D_f^{1/2}q_f=y.
\]
Thus \(1\) is an eigenvalue of \(H_f\), with strictly positive
eigenvector \(y\), and taking \(v=m_f\) gives \(R_f(m_f)=1\).

If products are substitutes, the off-diagonal elements of \(H_f\) are
nonpositive. Choose \(a>\lambda_{\max}(H_f)\). Then
\(K=aI-H_f\) is entrywise nonnegative and satisfies
\(Ky=(a-1)y\). By Perron--Frobenius,
\(\rho(K)=a-1\). Since the eigenvalues of \(K\) are
\(a-\lambda_r(H_f)\), it follows that
\(\lambda_{\min}(H_f)=1\). The stated bound follows.

If products are complements, \(H_f\) is entrywise nonnegative and
\(H_fy=y\) with \(y\gg0\). Perron--Frobenius therefore implies
\(\rho(H_f)=1\). Since \(H_f\) is symmetric and positive definite,
\(\lambda_{\max}(H_f)=1\), which gives the final bound.
\end{proof}

Proposition~\ref{prop:directional-stability-bounds} identifies \(R_f=1\)
as a natural multiproduct analogue of the single-product benchmark.
Under symmetric demand effects, substitution weakly lowers the
directional threshold in the second-order condition,
\(\kappa_f(v)\leq2R_f(v)\), relative to the monopoly bound \(2\),
whereas complementarity weakly raises it.

\end{document}